\newcommand{\mybackslash}[2]{%
	\raisebox{-2pt}[9pt][5pt]{{#1}\hspace*{-4pt}%
	\rotatebox{30}{\Big\backslash}\hspace*{-4pt}%
	\raisebox{3pt}[0pt][0pt]{#2}}}
\def\ontop#1#2{\setbox0\hbox{#2}\copy0\llap{\raise\ht0\hbox{#1}}}
\DeclareFontFamily{U}{mathx}{\hyphenchar\font45}
\DeclareFontShape{U}{mathx}{m}{n}{<5> <6> <7> <8> <9> <10> <10.95> <12> <14.4> <17.28> <20.74> <24.88> mathx10}{}
\DeclareSymbolFont{mathx}{U}{mathx}{m}{n}
\DeclareMathSymbol{\bigtimes}{1}{mathx}{"91}
\DeclareMathAlphabet{\mts}{U}{rsfs}{m}{n}
\theoremstyle{definition}
\newtheorem{Def}{Definition}
\theoremstyle{plain}
\newtheorem{Thm}{Theorem}
\newtheorem{Cor}{Corollary}
\newtheorem{Prp}{Proposition}
\newtheorem{Lem}{Lemma}
\theoremstyle{remark}
\newtheorem{Rem}{Remark}
\newcommand{\zbx}[1]{\raisebox{0pt}[0pt][0pt]{#1}}                     
\renewcommand{\,}{\hspace*{+1pt}}                                      
\definecolor{gr}{rgb}{0.50,0.50,0.50}
\definecolor{wh}{rgb}{1.00,1.00,1.00}
\algnewcommand{\Initial}              {\item[{\bf initialization:}]}
\algnewcommand{\Initiax}{\item[\textcolor{wh}{\bf initialization:}]}
\algrenewcommand{\algorithmicrequire}       {{\bf input:}}
\algnewcommand{\Requirx}{\item[\textcolor{wh}{\bf input:}]}
\algrenewcommand{\algorithmicensure}        {{\bf output:}}
\algnewcommand{\Outputx}{\item[\textcolor{wh}{\bf output:}]}
\algrenewcommand{\algorithmiccomment}[1]{\hfill\textcolor{gr}{$\triangleright$\ \zbx{#1}}}
\algrenewcommand{\algorithmicindent}{8pt}
\begin{document}
%
\title{Dynamic Information Sharing\\
and Punishment Strategies}
%
%
%

\author{Konstantinos Ntemos, 
        George Pikramenos, and Nicholas Kalouptsidis
\thanks{ 
All authors are with the Dept.
of Informatics and Telecom., National and Kapodistrian University of Athens, Athens, Greece, e-mail: \{kdemos, gpik, kalou\}@di.uoa.gr.}}%

%



\maketitle

\begin{abstract}
In this paper we study the problem of information sharing among rational self-interested agents as a dynamic game of asymmetric information. We assume that the agents imperfectly observe a Markov chain and they are called to decide whether they will share their noisy observations or not at each time instant. We utilize the notion of {\em conditional mutual information} to evaluate the information being shared among the agents. The challenges that arise due to the inter-dependence of agents' information structure and decision-making are exhibited. For the finite horizon game we prove that agents do not have incentive to share information. In contrast, we show that cooperation can be sustained in the infinite horizon case by devising appropriate punishment strategies which are defined over the agents' {\em beliefs} on the system state. We show that these strategies are closed under the best-response mapping and that cooperation can be the optimal choice in some subsets of the state belief simplex. We characterize these {\em equilibrium regions}, prove uniqueness of a {\em maximal} equilibrium region and devise an algorithm for its approximate computation.
\end{abstract}

\begin{IEEEkeywords}
Information sharing, Stochastic optimal control, Game Theory, Markov processes.
\end{IEEEkeywords}

%
\IEEEpeerreviewmaketitle

\section{Introduction}
%
%
%
%
\IEEEPARstart{T}{he} process of information sharing is important in a wide range of applications of 
high socio-economic impact, including distributed estimation and detection \cite{Yu_2015}, cyber-security \cite{Miehling_2018}, social networking \cite{Jiang_2014} and viral marketing \cite{Chen_2010}. In such cases 
autonomous agents with enhanced {\em decision-making} capabilities, disseminate information, in a dynamic fashion, according to individual motives. It then turns out that the processes of information sharing and decision-making 
are {\em interdependent}; the decisions of an agent affect the information structure of their peers, which in turn affect their {\em optimal} decision-making. Thus, there is a need for a joint study of information sharing and decision making. This need is addressed in this work.

We consider two self-interested agents who seek to track the state of a Markov Chain. The agents are equipped with sensing capabilities that enable them to obtain noisy observations about the underlying state. Moreover, agents are offered the possibility to share their measurements with other agents; sharing information may enhance estimation performance, while the decision to share information entails some transmission cost. Estimation quality can be assessed by a variety of popular performance measures. We focus on an information utility function that measures the reduction in uncertainty. The transmission cost is assumed exogenous and constant over time. The difference between the expected estimation benefit offered by information sharing and the transmission cost defines the instantaneous reward of each agent. 
Each agent takes into account not only current payoffs, but also expected future rewards accumulated over time. We refer to the above setup as a {\em Dynamic Information Sharing Game} (DISG).

From a technical perspective, DISG is a {\em dynamic game of asymmetric information}, since agents have access to different information sets which are unknown to their peers. These games are notoriously hard to deal with, because agents have to reason about the private information of others by forming {\em beliefs} \cite{Tirole, Nayyar_2014, Vasal_2019}. These beliefs are {\em interdependent} with the agents' strategies, making the computation of optimal behavior challenging. In this complex setting, the question of whether information sharing can be sustained at equilibrium is of relevant importance. We show that this is only possible in the infinite horizon setting and propose a class of punishment strategies that can form information sharing equilibria.

\subsection{Related work}
Information sharing has been studied in several research areas. In \cite{Chakravorty_2016}, the problem of interactive communication between users that obtain noisy measurements about a state variable was investigated. The users are allowed to exchange information in the form of quantized symbols. 
The problem was modelled as a {\em team problem} and a dynamic programming algorithm was derived for the computation of the optimal strategies. In \cite{Yu_2015}, \cite{Ntemos_2018}, strategic information sharing was studied in the context of wireless networks with the agents being interested in a parameter estimation task. The authors utilized the bounded rationality assumption \cite{Shoham_2008} in order to describe the agents’ decision-making process.  Bounded rationality can cast the model into a realistic setting that is applicable to real people’s behavior in some cases. In this paper, we study the information sharing process assuming fully rational agents.

Strategic information sharing, with focus on the design of economic incentives to stimulate cooperation among agents, has been studied in \cite{Heegard_2016,Gal-or_2005,Naghizadeh_2017,Laube_2017}. More specifically, the economic incentives of information exchange for multi-operator service delivery were analyzed in \cite{Heegard_2016}. A game theoretic model was used to show that sharing of information can be sustained at equilibrium given that there is mutual, long-term cooperation among operators. In the context of cyber-security, \cite{Gal-or_2005} studied the incentives of competitive firms to share security information through a third-party authority and the impact on social welfare. In the same context, \cite{Naghizadeh_2017} studied strategic information sharing among firms. The authors modelled firms' interactions as an $N$-agents Prisoner's Dilemma and designed incentives for sustainable cooperation. An excellent survey on strategic information sharing in cyber-security is provided in \cite{Laube_2017}. 

In this work, agents are interested in an estimation task and the expected instantaneous rewards depend on the information agents acquire. Hence, the expected {\em value of information} is tied to the estimation problem and can not be treated as an exogenous variable. In turn, the information the agents possess is also endogenous, since it depends on agents' decisions. The consideration of these coupled dynamics in the context of fully rational agents differentiates our work from the above studies.

Our study entails three key features: $(i)$ decision-making under partial observability of the state, $(ii)$ asymmetric information structure and $(iii)$ punishment strategies. 

The study of the single-agent dynamic {\em optimal decision-making} has a long history tracing back to the seminal works on stochastic control \cite{Bellman_1966}, \cite{Blackwell_1964}. If the state evolves as a Markov chain but is partially observed by the agent, strategies are formed as functions of available information and assessed in terms of accumulated expected rewards. The properties of the optimal course of action are studied within the framework of Partially Observed Markov Decision Processes (POMDPs) \cite{Astrom_1965}, \cite{Sondik_1973}. It turns out that there is no loss in optimality if strategies are functions over beliefs on the current state, i.e. over probability distributions of the current state given the available information. Optimal strategies can be computed by several exact and approximate algorithms \cite{Sondik_1973}, \cite{Cassandra_thesis}.

In the strategic information sharing setup we consider, the POMDP model needs to be extended to capture the fact that two agents are present, who observe the Markov source by proprietary sensors and decide whether to share their data to increase their own rewards. The instantaneous reward of each agent depends on the willingness of the other agent to share information. Indeed sharing enhances in general the quality of estimation since a richer set of observations is available. This dependence introduces a coupling in the agents’ behaviors. Moreover, agents generally have access to different information sets, giving rise to the second feature of DISG, namely asymmetry of information. This asymmetry necessitates a departure from classical information structures which assume all past observations and actions are known to all agents. Thus, dynamic games of asymmetric information constitute a natural framework for the study of optimum information sharing strategies.

In dynamic games of asymmetric information the agents need to form beliefs about other agents' private information, along with the computation of their optimal strategies. Beliefs and strategies are inter-dependent and sequential decomposition is in general not possible. The study of stochastic games of asymmetric information is an active research area with significant recent developments \cite{Nayyar_2014,Gupta_2014,Ouyang_2015, Ouyang_2017, Tavafoghi_2016,  Vasal_2016b, Vasal_2019,Vasal_2016}. The notion of {\em Common Information-Based Markov Perfect Equilibria} (CIB-MPE) was proposed in \cite{Nayyar_2014} to capture beliefs over states and on private information of all agents and to use these common beliefs as drivers for policy choice. Under the assumption that the belief update mechanism is {\em strategy independent}, a backward induction sequential procedure was developed for the calculation of CIB-MPE in the finite horizon case. The same techniques were applied in \cite{Gupta_2014} to the linear-Gaussian case.

More general cases with {\em strategy dependent} common beliefs were explored under the presence of {\em signaling} \cite{Ouyang_2015, Ouyang_2017, Tavafoghi_2016,  Vasal_2016b, Vasal_2019}. Signaling occurs when agents reveal part of their private information through their strategies. In \cite{Ouyang_2015, Ouyang_2017} the authors introduce a subclass of PBEs, namely {\em Common Information Based Perfect Bayesian Equilibria} (CIB-PBE), prove the existence of CIB-PBEs for a subclass of such games and develop a dynamic programming sequential decomposition to compute them. Dynamic games of asymmetric information with delayed information structure and hidden actions are investigated in \cite{Tavafoghi_2016}. Signaling equilibria are investigated in the context of linear quadratic Gaussian games in \cite{Vasal_2016b}. In \cite{Vasal_2019} the authors introduced a subclass of PBE, called {\em Structured Perfect Bayesian Equilibria} (SPBE) and described a two-step backward-forward recursive algorithm to find SPBEs. The backward algorithm solves a fixed point equation on the space of probability simplices and defines an equilibrium generating function. This function is utilized to define equilibrium strategies and beliefs through a forward recursion. These techniques were also applied to the context of Bayesian learning and the study of informational cascades \cite{Vasal_2016}.

Our work comes as a complement to the above references through the study of information structures that are not characterized by a predefined protocol, but are directly affected by the agents' actions. Furthermore, the agents are interested in estimating an underlying state giving rise to a utility function that is a non-linear function of the belief and captures the expected gain offered by the information exchange. A third differentiating factor is the consideration of novel punishment strategies, inspired by the literature on {\em repeated games} 
\cite{Fudenberg_1986,Mailath_2006}. This class of strategies 
is not a subject of study of the above works.
 
Punishment strategies have been studied in connection with {\em folk theorems}. Folk theorems have been extended to stochastic games with complete state information \cite{Dutta_1995}, \cite{Horner_2011} and private types \cite{Escobar_2013}, \cite{Sugaya_2012} without recourse to POMDP models that are critically employed in this work. We point out that folk theorems, which are asymptotic results and deal with the issue of whether any feasible individually rational payoff can be attained for sufficiently high discount factor, are beyond the scope of our work.
\subsection{Our work and contributions}
In this paper we develop a general model of strategic information sharing, where
 two agents aim to track a Markov chain based on observations on the state. The proposed approach  quantifies the {\em value of received information} through the concept of {\em conditional mutual information} \cite{Gallager}, although more general reward functions can be used without affecting the validity of results
. The agents decide to share their measurements on the basis of discounted rewards that tradeoff expected estimation gains and transmission costs. We use the concepts and methodologies of the works on dynamic games with non-classical information structures to show that agents’ beliefs are strategy-dependent and to demonstrate that the finite horizon setting rules out information sharing at equilibrium.

In contrast, we prove that sustainable cooperation can emerge in the infinite-horizon case. This is done by introducing a form of punishment strategies, inspired by grim-trigger, which we call Constrained Grim Trigger (CGT) strategies. CGT strategies are parametrized by subsets of the belief simplex. We prove that CGT strategies are closed under the best-response mapping, which means that an agent can respond against a CGT strategy with a CGT strategy without loss of optimality. We show that under such strategies, cooperation can be sustained in some subsets of the belief state simplex, which we call {\em equilibrium regions}. We prove the uniqueness of a maximal equilibrium region and devise a fixed-point-like algorithm for its approximate computation. Finally, results that ensure nonemptiness of the maximal cooperation region are given.

The above results are illustrated experimentally through simulations where the POMCP algorithm \cite{Silver_2010} is used to visualize the equilibrium region. 
The findings of this work could find applications in settings where endogenizing the decision to share information is meaningful. Potential applications include Bayesian learning and the study of informational cascades and distributed networks with adversarial agents.
\subsection{Notation}
Random variables are denoted by upper case letters; their realizations by the corresponding lower case letters. For $a<b$, 
the notation $X_{a:b}$ denotes the vector $(X_a,\ldots,X_b)$. For a statement $s$, $\mathds{1}_{\{s\}}=1$ if $s$ is true, while $\mathds{1}_{\{s\}}=0$ if $s$ is false. The $\bot$ symbol is used to denote contradiction and the $\bigotimes$ symbol is used to denote the Cartesian product.

\section{Dynamic Information Sharing}
\label{system_model-ch-5}
The ingredients of the basic model are presented in this section. We consider two agents seeking to track the state $X_t$ of a Markov chain. For this purpose each agent has access to measurements obtained by private sensors. The agents have the option of sharing information. The decision to share observations assesses the trade off between transmission costs and estimation gains brought by the additional measurements. These statements are made precise below.
\par{\noindent{\bf State dynamics and observation models}}. $X_t$ takes values in a finite set $\mathcal{X}$. Each agent $n\in\mathcal{N}=\{1,2\}$ receives observation $Y^n_t$ at time $t=0,1,2,\ldots$. The random variables $Y^n_t$ take values in the finite sets $\mathcal{Y}^n$. At each time $t$ agents decide simultaneously whether they will share observations or not. Thus, the set of possible actions for both agents is $\mathcal{A}=\{0,1\}$. Let $A^n_t$ denote the action of agent $n$ at time $t$. $A^n_t=1$ means that agent $n$ sends her private observations $Y^n_t$ to the other agent (denoted by $–n$), whereas $A^n_t=0$ means that agent $n$ sends no data. The state evolves exogenously and is not affected by agents’ actions and observations. More precisely it holds
\begin{equation}
\label{state_equation}
\mathbb{P}(X_{t+1}|X_{0:t},Y^n_{0:t},Y^{-n}_{0:t},A^n_{0:t},A^{-n}_{0:t}) =\mathbb{P}(X_{t+1}|X_t).
\end{equation}
Observations are conditionally independent given the current state and are governed by the model
\begin{align}
\label{obs_equation}
&\mathbb{P}(Y^n_t,Y^{-n}_t\vert X_{0:t},Y^n_{0:t-1},Y^{-n}_{0:t-1},A^n_{0:t-1},A^{-n}_{0:t-1})\nonumber\\
&= \mathbb{P}(Y^n_t|X_t)\mathbb{P}(Y^{-n}_t|X_t).
\end{align}
\par{\noindent{\bf Data exchange}}. At each time $t$ agent $n$ (resp. $-n$) receives the signal $Z^{-n}_t$ (resp. $Z^n_t$) which is a deterministic function of the agent's observation $Y^{-n}_t$ (resp. $Y^n_t$) and the action of agent $-n$ (resp. $n$). Here we shall assume that either the observation is shared error free, or no relevant data is shared. Thus, the data exchange mechanism is described by
\begin{equation}
\label{zeta_function}
Z^n_t(Y^n_t,A^n_t)=
\begin{cases}
Y^n_t, &\text{if} \ \ \ A^n_t=1,\\
\epsilon, &\text{if} \ \ \ A^n_t=0.
\end{cases}
\end{equation}
$\epsilon$ signifies that no information is shared. Clearly, $Z^n_t\in\tilde{\mathcal{Y}}^n_t$, where $\tilde{\mathcal{Y}}^n_t=\mathcal{Y}^n_t\cup\{\epsilon\}$.
\par{\noindent{\bf Information sets}}. The information available to agent $n$ at time $t$, $I^n_t$ is formed by the private history $I^{n,p}_t$ and the common history $I^c_t$.
\begin{align}
\label{agent_information}
I^n_t=(I^{n,p}_t,I^c_t).
\end{align}
The common history is known to both agents and consists of the agents' actions (i.e., $A^{1:2}_{1:t-1}$) and the history of the exchanged signals (i.e., $Z^{1:2}_{1:t-1}$), while the private history $I^{n,p}_t$ is known only to agent $n$ and includes all the observations that agent $n$ decided not to share until the present time $t$. These histories at the beginning of time $t$ are defined as follows
\begin{align}
\label{agent_information1}
&I^c_t=(Z^1_{0:t-1},Z^2_{0:t-1},A^1_{0:t-1},A^2_{0:t-1}),\\
\label{agent_information2}
&I^{n,p}_t
=(Y^n_k\vert A^n_k=0, \ 0\leq k<t).
\end{align}
Let $\mathcal{I}^n_t,\mathcal{I}^{n,p}_t,\mathcal{I}^c_t$ be the sets of all possible agent's {\em $n$ histories}, agent $n$'s {\em private histories} and {\em common histories} at time $t$, respectively.  Initially, at time $t=0$ the common information is $I^c_0=\tilde{\pi}_0$, where $\tilde{\pi}_0$ is the {\em common prior} belief on state $X_0$, and 
evolves as
\begin{equation}
\label{Common_history_evolution_eq}
I^c_{t+1}=
\begin{cases}
(I^c_t,A^1_t,A^2_t,Z^1_t=\epsilon,Z^2_t=\epsilon), &\text{if} \, \, \ A^1_t=A^2_t=0,\\
(I^c_t,A^1_t,A^2_t,Y^1_t,Z^2_t=\epsilon), &\text{if} \, \, \ A^1_t=1,A^2_t=0,\\
(I^c_t,A^1_t,A^2_t,Z^1_t=\epsilon,Y^2_t), &\text{if} \, \, \ A^1_t=0,A^2_t=1,\\
(I^c_t,A^1_t,A^2_t,Y^1_t,Y^2_t), &\text{if} \, \, \ A^1_t=A^2_t=1.
\end{cases}
\end{equation}
The private information of agent $n$ at time $t=0$ is $I^{n,p}_0=\emptyset$ for all $n$ 
and it is updated as
\begin{equation}
\label{Private_history_evolution_eq}
I^{n,p}_{t+1}=
\begin{cases}
I^{n,p}_t, &\text{if} \, \, \ A^n_t=1,\\
(
I^{n,p}_t,Y^n_t), &\text{if} \, \, \ A^n_t=0.
\end{cases}
\end{equation}
If agent $n$ decides $A^n_t=1$, then $Y^n_t$ is added in the common information $I^c_{t+1}$, otherwise it is added in  $I^{n,p}_{t+1}$. Note that the two sets $I^c_t$ and $I^{n,p}_t$ never overlap and an observation that belongs to one set does not belong to the other.
\par\noindent{\bf Agents' strategies}. Let $g=(g^1,g^2)$ be a {\em strategy profile} consisting of both agents' strategies. Agent $n$'s strategy $g^n=(g^n_1,\ldots ,g^n_T)$ in {\em finite horizon} or $g^n=(g^n_1,\ldots ,g^n_\infty)$ in {\em infinite horizon} is a collection of control laws $g^n_t$ which map agent $n$'s available information at time $t$ to a probability distribution over the agent's actions ({\em behavioral strategies}) i.e., $g^n_t:\mathcal{I}^n_t\rightarrow \Delta(\mathcal{A})$, $n\in\{1,2\}$, where
\begin{align}
\label{strategy_definition}
\mathbb{P}^{g^n_t}(A^n_t=a^n_t|I^n_t=i^n_t)=g^n_t(i^n_t)(a^n_t).
\end{align}
The set of all possible behavioral strategies of agent $n$ at time $t$ is denoted as $\mathcal{G}^n_t$.

The timing of events at a time $t$ is illustrated in Fig. \ref{Timing} and is as follows:
\begin{enumerate}
\item The state is $X_t$ and the agents' histories are $I^n_t=(I^{n,p}_t,I^c_t)$, $n\in\{1,2\}$.
\item Both agents select their actions $A^1_t,A^2_t$.
\item Both agents send signals $Z^1_t,Z^2_t$, according to the selected actions $A^1_t,A^2_t$ ({\em see \eqref{zeta_function}}).
\item Common and private histories are updated according to \eqref{Common_history_evolution_eq} and \eqref{Private_history_evolution_eq}, respectively.
\end{enumerate}
\begin{figure}[t]
\centering
\includegraphics[width=0.35\textwidth,height=0.28\textwidth]
{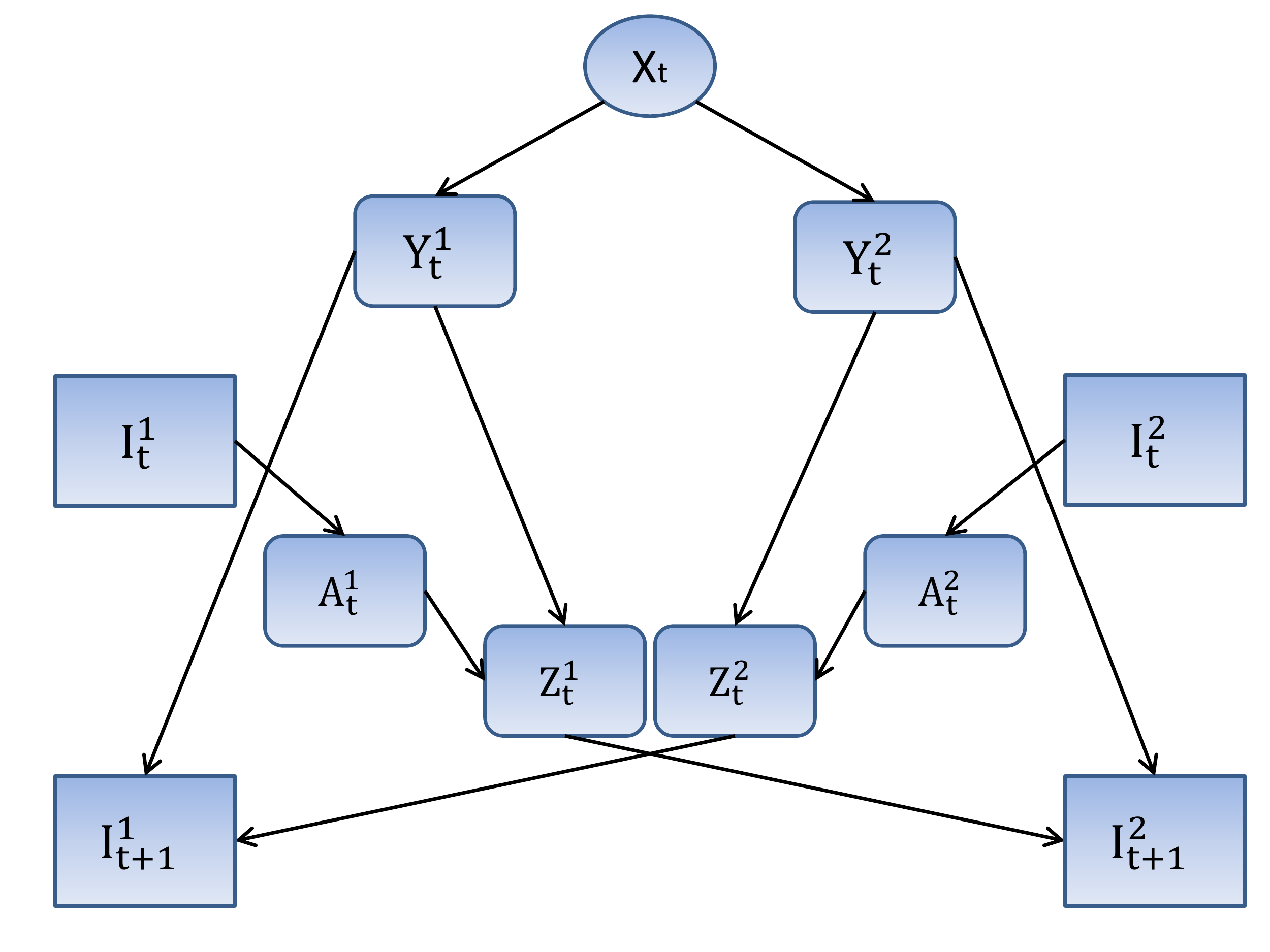}
\caption{Timing of events and dependencies of variables (top to bottom).}
\label{Timing}
\end{figure}
\section{Rewards, beliefs and equilibria}
\label{sec_rewards}
The potential benefits of {\em information sharing} are captured by a utility function that balances the instantaneous {\em estimation performance} and transmission cost. Agent $n$ incurs a transmission cost $c^n>0$ when it sends information (i.e., $a^n_t=1$), and $0$ when no information is sent (i.e., $a^n_t=0$). Thus, the transmission cost is $a^n_tc^n$. The reception gain can be quantified by several performance metrics. Here, we consider the odds of improving the estimate of the state probability upon receiving the signal $z^{-n}_t$ against the estimate of the state probability computed without the shared information. 
Using logs, and for given realizations $x_t,y^n_t,z^{-n}_t,i^n_t,a^n_t$, the corresponding likelihood ratio for agent $n$ is
\begin{align}
\label{instantaneous_reward}
&r^n_t(x_t,y^n_t,z^{-n}_t, i^n_t)=
\log\frac{\mathbb{P}(x_t\vert z^{-n}_t,y^n_t,i^n_t)}{\mathbb{P}(x_t\vert y^n_t,i^n_t)}.
\end{align}
Clearly, $r^n_t(\cdot)$ depends on the history realization $i^n_t$. 
To save notation, we drop $i^n_t$ when it is clear from the context. 
Since the state $X_t$, the agent's observation $Y^n_t$, as well as the received information from the other agent $Z^{-n}_t$ is unknown at time $t$, agent $n$ needs to take expectation on \eqref{instantaneous_reward} given the information it possesses at that time, $i^n_t$. Thus, the {\em expected instantaneous reception gain} becomes
\begin{align}
\label{expected_instantaneous_reward}
&\mathbb{E}\{r^n_t(X_t,Y^n_t,Z^{-n}_t
)\vert i^n_t\}=
I(X_t;Z^{-n}_t\vert Y^n_t,i^n_t)\nonumber\\
&=H(X_t|Y^n_t,i^n_t)-H(X_t|Y^n_t,Z^{-n}_t,i^n_t),
\end{align}
where $I(X_t;Z^{-n}_t|Y^n_t,i^n_t)$ is the {\em conditional mutual information} of $X_t$ and $Z^{-n}_t$ given $Y^n_t,i^n_t$. $H(X_t|Y^n_t,i^n_t)$ and $H(X_t|Y^n_t,Z^{-n}_t,i^n_t)$ denote the {\em conditional entropy} of $X_t$ given $Y^n_t,i^n_t$ and the conditional entropy of $X_t$ given $Y^n_t,Z^{-n}_t,i^n_t$, respectively \cite{Gallager}.
\begin{Rem}
\label{information_functions}
Eq. \eqref{expected_instantaneous_reward} shows explicitly the contribution of the other agent (through $Z^{-n}_t$) in the reduction of
uncertainty about state $X_t$. The mutual information belongs to the class of information utility functions \cite{Naghshvar_2012,Kartik_2019,Coleman_2009}. Information utilities have been successfully employed in applications such as active sequential hypothesis testing \cite{Naghshvar_2012}, \cite{Kartik_2019} and codes for communication channels with feedback \cite{Coleman_2009}. In general, information utility functions employ a suitable measure of uncertainty and model the reduction of uncertainty at each stage. Besides the reduction in entropy employed in this work, several other related uncertainty measures have been used such as the {\em extrinsic Jensen-Shannon divergence} \cite{Naghshvar_2012}, the {\em average confidence level} \cite{Kartik_2019} and the expected reduction in the KL distance \cite{Coleman_2009}.

The mutual information between the channel input and channel output has been used as payoff function in the study of communication in the presence of jamming as a zero sum game \cite{McEliece_1983,Borden_1985,Stark_1988}
. In these games the encoder tries to maximize the mutual information, while the jammer tries to minimize it by introducing noise in the channel. Power allocation games using the 
mutual information have been extensively studied in 
MIMO communications \cite{Palomar_2003}. Mutual information has been utilized in machine learning \cite{Gupta_2008} as a metric of performance and in neurosciences \cite{Friston_2017} as well. The use of more general reward functions is discussed in subsection \ref{discussion_rewards}.
\end{Rem}
Based on the above, the expected instantaneous reward for agent $n$ and a specific action $a^n_t$ becomes
\begin{align}
\label{Instantaneous_reward}
    &\mathbb{E}\{R^n_t(X_t,Y^n_t,Z^{-n}_t,a^n_t)\vert I^n_t=i^n_t\}\nonumber\\
    &=
\mathbb{E}\{r^n_t(X_t,Y^n_t,Z^{-n}_t)\vert i^n_t\}-a^n_tc^n\nonumber\\
&=I(X_t;Z^{-n}_t\vert Y^n_t,i^n_t)-a^n_tc^n\nonumber\\
&=
H(X_t|Y^n_t,i^n_t)-H(X_t|Y^n_t,Z^{-n}_t,i^n_t)-a^n_tc^n_t.
\end{align}
Evaluation of \eqref{Instantaneous_reward} requires the computation of $\mathbb{P}(x_t,y^n_t,z^{-n}_t\vert i^n_t)$ and its marginals. Unless specific conditions are imposed, this computation involves a complex intertwining of sharing decisions and beliefs on both the unknown state and the private information of the other agent. This is clarified in the sequel.
\vspace{-0.15cm}
\subsection{Expected instantaneous reward}
The expected instantaneous reception gain \eqref{expected_instantaneous_reward} yields
\begin{align}
\label{ins_reward}
&H(X_t|Y^n_t,I^n_t=i^n_t)-H(X_t|Y^n_t,Z^{-n}_t
,I^n_t=i^n_t)\nonumber\\
&=\sum_{y^n_t\in\mathcal{Y}^n}\mathbb{P}(y^n_t|I^n_t=i^n_t)H(X_t|y^n_t,I^n_t=i^n_t)\nonumber\\
&-\sum_{y^n_t\in\mathcal{Y}^n,z^{-n}_t\in\tilde{\mathcal{Y}}^n}\mathbb{P}(y^n_t,z^{-n}_t|I^n_t=i^n_t)H(X_t|y^n_t,z^{-n}_t,I^n_t=i^n_t)\nonumber\\
&=-\sum_{y^n_t,x_t}\mathbb{P}(y^n_t|x_t)\mathbb{P}(x_t|i^n_t)\log\frac{\mathbb{P}(y^n_t|x_t)\mathbb{P}(x_t|i^n_t)}{\sum_{x_t}\mathbb{P}(y^n_t|x_t)\mathbb{P}(x_t|i^n_t)}\nonumber\\
&+\sum_{y^n_t,z^{-n}_t,x_t}\mathbb{P}(y^n_t|x_t)\mathbb{P}(z^{-n}_t|x_t,i^n_t)\mathbb{P}(x_t|i^n_t)\nonumber\\
&\times\log\frac{\mathbb{P}(y^n_t|x_t)\mathbb{P}(z^{-n}_t|x_t,i^n_t)\mathbb{P}(x_t|i^n_t)}{\sum_{x_t}\mathbb{P}(y^n_t|x_t)\mathbb{P}(z^{-n}_t|x_t,i^n_t)\mathbb{P}(x_t|i^n_t)}.
\end{align}
In the above expression, the terms $\mathbb{P}(x_t|i^n_t)$ and $\mathbb{P}(z^{-n}_t|x_t,i^n_t)$ need to be further discussed. $z^{-n}_t$ is a function of $y^{-n}_t$ and $a^{-n}_t$ ({\em see} \eqref{zeta_function}). Note that
\begin{align}
\label{ins_reward2}
&\mathbb{P}(z^{-n}_t=\epsilon|x_t,i^n_t)=\mathbb{P}(a^{-n}_t=0|x_t,i^n_t),\\
\label{ins_reward3}
&\mathbb{P}(z^{-n}_t=y^{-n}_t|x_t,i^n_t)=\mathbb{P}(a^{-n}_t=1|x_t,i^n_t)
\mathbb{P}(y^{-n}_t|x_t),
\end{align}
for any $y^{-n}_t\in\mathcal{Y}^{-n}$. Moreover, the distribution of $a^{-n}_t$ is given by agent $-n$'s strategy $g^{-n}_t(i^{-n}_t)(a^{-n}_t)$. Even if $g^{-n}_t$ is known, agent $n$ needs to reason about the private history of agent $-n$, as $g^{-n}_t $ is a function of $i^{-n}_t=(i^c_t,i^{-n,p}_t)$ ({\em see} \eqref{strategy_definition}). Hence, marginalization over agent $-n$'s private information $i^{n,p}_t$ yields
\begin{align}
\label{infer_action}
&\mathbb{P}(a^{-n}_t=a|x_t,i^n_t)=\sum_{i^{-n,p}_t}\mathbb{P}(a^{-n}_t=a|x_t,i^{-n,p}_t,i^{n,p}_t,i^c_t)\nonumber\\
&\times\mathbb{P}(i^{-n,p}_t|i^n_t,x_t)=\sum_{i^{-n,p}_t}g^{-n}_t(i^{-n}_t)(a^{-n}_t=a)\mathbb{P}(i^{-n,p}_t|i^n_t,x_t)\nonumber\\
&=\sum_{i^{-n,p}_t}g^{-n}_t(i^{-n}_t)(a)\frac{\mathbb{P}(x_t|i^{-n,p}_t,i^n_t)\mathbb{P}(i^{-n,p}_t|i^n_t)}{\mathbb{P}(x_t|i^n_t)}.
\end{align}
Thus, in order to calculate the expected instantaneous reward for a given $g^{-n}$, agent $n$ needs to form a {\em belief} about the state as well as agent $-n$'s private information.

In the following Lemma we identify cases where the computation of \eqref{expected_instantaneous_reward} given the other agent's strategy, does not require inference on the other agent's private information and we provide a simpler formula for computing the expected instantaneous reception gain in such cases.

Proofs are relegated to the Appendix.
\begin{Lem}
\label{Lem_reward}
The expected instantaneous reception gain \eqref{expected_instantaneous_reward} is given by
\begin{align}
\label{MI_rem}
&\mathbb{E}\{r^n_t({X}_t,Y^n_t,Z^{-n}_t)\vert I^n_t=i^n_t\}
\nonumber\\
&=g^{-n}_t(i^c_t)(a^{-n}_t=1)I(X_t;Y^{-n}_t|Y^n_t,i^n_t),
\end{align}
if either of the following is true:
\begin{enumerate}
\item both agents have access to the same information, i.e., $i^1_t=i^2_t=i^c_t$.
\item $g^{-n}_t(i^{-n,p}_t,i^c_t)(a^{-n}_t)=g^{-n}_t(\bar{i}^{-n,p}_t,i^c_t)(a^{-n}_t)$ for every $a^{-n}_t$ and for every $i^{-n,p}_t\neq\bar{i}^{-n,p}_t$ such that $i^{-n,p}_t,\bar{i}^{-n,p}_t\in\mathcal{I}^{-n,p}_t$.
\end{enumerate}
\end{Lem}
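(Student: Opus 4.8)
The plan is to reduce the conditional mutual information $I(X_t;Z^{-n}_t\vert Y^n_t,i^n_t)$ to the stated form by first showing that, under either hypothesis, the action law of agent $-n$ as seen by agent $n$ collapses to the common-information strategy, i.e. $\mathbb{P}(a^{-n}_t=a\vert x_t,i^n_t)=g^{-n}_t(i^c_t)(a)$, independently of the state $x_t$. I would start from the marginalization identity \eqref{infer_action}. Under hypothesis $(2)$ the factor $g^{-n}_t(i^{-n}_t)(a)=g^{-n}_t(i^c_t)(a)$ no longer depends on $i^{-n,p}_t$, so it pulls out of the sum; the remaining weights satisfy $\sum_{i^{-n,p}_t}\mathbb{P}(x_t\vert i^{-n,p}_t,i^n_t)\mathbb{P}(i^{-n,p}_t\vert i^n_t)=\mathbb{P}(x_t\vert i^n_t)$, so after dividing by $\mathbb{P}(x_t\vert i^n_t)$ the sum equals $1$ and leaves exactly $g^{-n}_t(i^c_t)(a)$. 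Under hypothesis $(1)$ the private histories are empty, so \eqref{infer_action} has a single term and reduces to the same expression directly.

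Next I would substitute this state-independent action law into \eqref{ins_reward2}--\eqref{ins_reward3} to obtain the conditional law of the received signal, namely $\mathbb{P}(z^{-n}_t=\epsilon\vert x_t,i^n_t)=1-\beta$ and $\mathbb{P}(z^{-n}_t=y^{-n}_t\vert x_t,i^n_t)=\beta\,\mathbb{P}(y^{-n}_t\vert x_t)$ for $y^{-n}_t\in\mathcal{Y}^{-n}$, where $\beta:=g^{-n}_t(i^c_t)(a^{-n}_t=1)$. The crucial structural feature is that the sharing probability $\beta$ (resp.\ $1-\beta$) enters only as a state-independent multiplicative constant, and that the observation model \eqref{obs_equation} gives $\mathbb{P}(y^{-n}_t\vert x_t,i^n_t)=\mathbb{P}(y^{-n}_t\vert x_t)$.

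I would then insert these laws into the entropy-difference formula \eqref{ins_reward}, splitting the sum over $z^{-n}_t$ into the single term $z^{-n}_t=\epsilon$ and the terms $z^{-n}_t=y^{-n}_t\in\mathcal{Y}^{-n}$. In each logarithm the constant $1-\beta$ (resp.\ $\beta$) appears in both numerator and denominator and therefore cancels, surviving only as a prefactor of the corresponding sum. Writing $S_1$ for the single-observation sum and $S_2$ for the joint-observation sum, so that $-S_1=H(X_t\vert Y^n_t,i^n_t)$ and $-S_2=H(X_t\vert Y^n_t,Y^{-n}_t,i^n_t)$, the $\epsilon$-term contributes $(1-\beta)S_1$ and the genuine-observation terms contribute $\beta S_2$. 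Collecting, \eqref{ins_reward} becomes $-S_1+(1-\beta)S_1+\beta S_2=\beta(S_2-S_1)=\beta\big[H(X_t\vert Y^n_t,i^n_t)-H(X_t\vert Y^n_t,Y^{-n}_t,i^n_t)\big]=\beta\,I(X_t;Y^{-n}_t\vert Y^n_t,i^n_t)$, which is precisely \eqref{MI_rem}.

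I expect the main obstacle to be the first step, the rigorous reduction of \eqref{infer_action} to the state-independent form $g^{-n}_t(i^c_t)(a)$: one must argue carefully that under each hypothesis the strategy is constant over the private histories being summed, and that the resulting belief weights marginalize to one. Once this state-independence is in hand, the remainder is the bookkeeping cancellation of $\beta$ and $1-\beta$ inside the logarithms. As a conceptual cross-check, the identity also follows information-theoretically: since $A^{-n}_t$ is a deterministic function of $Z^{-n}_t$ and, by the above together with \eqref{obs_equation}, $A^{-n}_t\perp(X_t,Y^{-n}_t)\mid(Y^n_t,i^n_t)$, the chain rule yields $I(X_t;Z^{-n}_t\vert Y^n_t,i^n_t)=I(X_t;Z^{-n}_t\vert A^{-n}_t,Y^n_t,i^n_t)$, where the event $A^{-n}_t=0$ contributes nothing because $Z^{-n}_t=\epsilon$ is constant, while $A^{-n}_t=1$ contributes $\beta\,I(X_t;Y^{-n}_t\vert Y^n_t,i^n_t)$.
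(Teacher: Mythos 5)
Your proposal is correct and follows essentially the same route as the paper's proof: both arguments reduce each hypothesis, via the marginalization identity \eqref{infer_action}, to a state-independent action law $\mathbb{P}(a^{-n}_t\vert x_t,i^n_t)=g^{-n}_t(i^c_t)(a^{-n}_t)$, then split the expectation over $Z^{-n}_t=\epsilon$ versus $Z^{-n}_t=y^{-n}_t$ and exploit the cancellation of the sharing probability inside the posteriors. The only (cosmetic) difference is bookkeeping: the paper works with the log-likelihood-ratio form \eqref{reward_aux}, where the $\epsilon$-term vanishes identically, whereas you work with the entropy-difference form \eqref{ins_reward}, where it survives as $(1-\beta)S_1$ and cancels upon collecting terms.
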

Under statement $2)$, strategies of agent $-n$ remain invariant for all possible realizations of private information $I^{-n,p}_t$. The intuition is that the event $Z^{-n}_t=\epsilon$ does not contribute to the average reward under the stated assumptions; the mutual information between $X_t$ and $Z^{-n}_t$ is that between $X_t$ and $Y^{-n}$ provided that the sharing action $a^{-n}=1$ is chosen.
\subsection{Perfect Bayesian Equilibrium}
Agents' total expected rewards starting from a time $t$ in the finite horizon case are given by
\begin{align}
\label{Finite_horizon_total_reward}
\mathbb{E}\Big \{\sum^T_{j=t}R^n_j(X_j,Y^n_j,Z^{-n}_j,A^n_j)\vert i^n_t
\Big\},
\end{align}
Discounted total expected rewards in the infinite horizon case are given by
\begin{align}
\label{Infinite_reward}
\mathbb{E}\Big\{\sum^\infty_{j=t}\delta^j R^n_j(X_j,Y^n_j,Z^{-n}_j,A^n_j)\vert i^n_t
\Big\},
\end{align}
where 
$\delta\in[0,1)$ is a discount factor, which is common for both agents. The expectation is w.r.t. all random variables, including states, observations, and actions.

The problem formulated above constitutes a dynamic game of asymmetric information. An appropriate solution concept is Perfect Bayesian Equilibrium (PBE) \cite{Tirole}. A PBE is a generalization of Subgame Perfect Equilibrium (SPE) for asymmetric information games that considers a consistent belief system on other agents' private information so as to verify the sequential rationality of the strategies.

From the history of the game some part is known to agent $n$ and another part is unknown. The unknown part consists of the system states and the observations that the other agent has decided not to share. Each agent assesses the total expected rewards of a strategy profile \eqref{Finite_horizon_total_reward}, \eqref{Infinite_reward} by forming beliefs about the unknown parts in the history of the game. The collection of beliefs over the whole time horizon is called {\em belief profile} and is denoted as $\mu=(\mu^1,\mu^2)$, where $\mu^n=\{\mu^n_t\}_{t\in\mathcal{T}}$. For the finite horizon case, it is $\mathcal{T}=\{0,\dots,T\}$, while for the infinite horizon case $\mathcal{T}=\mathbb{N}$. 
$\mu^n_t$ is defined as
\begin{align}
\label{belief_general}
\mu^n_t(i^n_t)(X_{0:t},I^{-n,p}_t)=\mathbb{P}^{g^n,g^{-n}}(X_{0:t},I^{-n,p}_t|I^n_t=i^n_t).
\end{align}
A PBE is an {\em assessment} i.e., a pair of strategy and belief profiles $(g^*,\mu)$ that requires {\em sequential rationality} of the strategies and {\em consistency} of beliefs. An assessment $(g^*,\mu)$ is sequentially rational if $\forall t\in \mathcal{T}, i^n_t\in\mathcal{I}^n_t, n\in\{1,2\}$, $g^{n,*}_{t:T}$ is a solution to
\begin{align}
\label{sequential_rationality}
\sup_{g^n_{t:T}\in\mathcal{G}^n_{t:T}}\mathbb{E}^{g^n_{t:T},g^{-n,*}_{t:T}}_{\mu^n_t}\{\sum^T_{j=t}R^n_j(X_j,Y^n_j,Z^{-n}_j
,A^n_j)|i^n_t\}.
\end{align}
The definition of sequential rationality is similar in the infinite horizon case.

Adapting the definition given in \cite{Ouyang-thesis,Tavafoghi_thesis}, we call an assessment $(g^*,\mu)$  consistent if $\forall t\in \mathcal{T}$ and $n\in\{1,2\}$, if $i^n_{t+1}$ and $i^n_t$ are such that $\mathbb{P}^{g^*}_{\mu}(i^n_{t+1}|i^n_t)>0$, $\mu^n_{t+1}(i^n_{t+1})$ must satisfy Bayes' rule. On the other hand, if $i^n_{t+1}$ and $i^n_t$ are such that $\mathbb{P}^{g^*}_{\mu}(i^n_{t+1}|i^n_t)=0$, then
\begin{align}
\label{belief_4}
&\mu^n_{t+1}(i^n_{t+1})(x_{0:t+1},i^{-n,p}_{t+1})>0,
\end{align}
only if
\begin{align}
\label{belief_5}
\hat{\mu}^n_{t+1}(i^n_{t+1})(x_{0:t+1},i^{-n,p}_{t+1})>0.
\end{align}
where
\begin{align}
\label{sf-beliefs}
\hat{\mu}^n_t(i^n_t)(x_{0:t},i^{-n,p}_t)=\mathbb{P}^{(a^1_{0:t-1},a^2_{0:t-1})}(x_{0:t},i^{-n,p}_t\vert i^n_t).
\end{align}
The so called {\em signaling-free} belief system \cite{Ouyang-thesis} $\hat{\mu}=(\hat{\mu}^1,\hat{\mu}^2)$, where $\hat{\mu}^n=\{\hat{\mu}^n_t\}_{t\in\mathcal{T}}$, employs a sequence of actions generated in an {\em open-loop} fashion. This way it is ensured that the beliefs off the equilibrium path are consistent with the system dynamics and observations models.

In the context of the DISG model, the Bayes' rule governing consistency for on equilibrium path beliefs (i.e., $i^n_{t+1}$ and $i^n_t$ are such that $\mathbb{P}^{g^*}_{\mu}(i^n_{t+1}|i^n_t)>0$) takes the following form
\begin{align}
\label{belief_finite2}
&\mu^n_{t+1}(i^n_{t+1})(x_{0:t+1},i^{-n,p}_{t+1})=\frac{\mathbb{P}^{g^*}_{\mu}(i^n_{t+1},x_{0:t+1},i^{-n,p}_{t+1}|i^n_t)}{\mathbb{P}^{g^*}_{\mu}(i^n_{t+1}|i^n_t)}\nonumber\\
&=g^{-n}_t(i^{-n,p}_t,i^c_t)(a^{-n}_t)\mathbb{P}(x_{t+1}|x_t)\mathbb{P}(y^n_t|x_t)\mathbb{P}(y^{-n}_t|x_t)\nonumber\\
&\times
\frac{\mu^n_t(i^n_t)(x_{0:t},i^{-n,p}_t)
}{W^n_t},
\end{align}
where $W^n_t$ is given by
\begin{align}
\label{belief_aux}
    & W^n_t=
    \sum_{i^{-n,p}_t,x_{0:t}} (g^{-n}_t(i^{-n,p}_t,i^c_t)(a^{-n}_t)\mathbb{P}(z^{-n}_t|x_t,a^{-n}_t)\mathbb{P}(y^n_t|x_t)\nonumber\\
    &\times\mu^n_t(i^n_t)(x_{0:t},i^{-n,p}_t)),
\end{align}
and $\mathbb{P}(z^{-n}_t|x_t,a^{-n}_t)$ is given by
\begin{align}
\label{y_z}
&\mathbb{P}(z^{-n}_t|x_t,a^{-n}_t)\\
&=\mathds{1}_{\{z^{-n}_t=\epsilon,a^{-n}_t=0\}}+\sum_{y^{-n}_t}\mathds{1}_{\{z^{-n}_t=y^{-n}_t,a^{-n}_t=1\}}\mathbb{P}(y^{-n}_t|x_t).\nonumber
\end{align}
\noindent
Eqs. \eqref{belief_finite2}-\eqref{y_z} follow by utilizing \eqref{agent_information}, \eqref{Common_history_evolution_eq}, \eqref{Private_history_evolution_eq}, \eqref{strategy_definition} and by distinguishing between cases $a^{-n}_t=0$ and $a^{-n}_t=1$.

Eqs. \eqref{belief_finite2}, \eqref{belief_aux} and \eqref{y_z} and a simple induction argument demonstrate that agent $n$'s belief $\mu^n_t$ does not depend on her own strategy $g^n_{0:t-1}$, but in general 
depends on the other agent's strategy given that $I^{-n,p}_t\neq\emptyset$, meaning
\begin{align}
\label{Lemma2_statement}
&\mu^n_t(i^n_t)(X_{0:t},I^{-n,p}_t)=\mathbb{P}^{g^n,g^{-n}}(X_{0:t},I^{-n,p}_t|I^n_t=i^n_t)\nonumber\\
&=\mathbb{P}^{g^{-n}}(X_{0:t},I^{-n,p}_t|I^n_t=i^n_t).
\end{align}
\subsection{Finite horizon}
In the finite horizon case, information sharing can never occur under a PBE equilibrium. This is stated in the following theorem.
\begin{Thm}
\label{Optimal_strategy_finite_horizon}
In the finite horizon DISG, the set of PBEs is fully characterized by 
$$\mathcal{B}=\{(g^{NC},\mu)\ \vert\ \mu \text{ is a consistent belief profile w.r.t. } g^{NC}\},$$ where $g^{n,NC}=\{g^{n,NC}_t\}_{t\in \mathcal{T}}$ with $g^{n,NC}_t(i^n_t)(a^n_t=1)=0$ $\forall n,t,i^n_t$. Equivalently, an assessment $(g^*,\mu)$, with $\mu$ consistent with respect to (w.r.t.) $g^*$, is a PBE of the finite horizon DISG if and only if $g^*\equiv g^{NC}$.
\end{Thm}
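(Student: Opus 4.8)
The plan is to exploit the decisive structural feature of the reward \eqref{Instantaneous_reward}: agent $n$'s own action $a^n_t$ enters the expected instantaneous reward \emph{only} through the transmission cost $-a^n_t c^n$, whereas the reception gain $I(X_t;Z^{-n}_t\vert Y^n_t,i^n_t)$ depends on the \emph{other} agent's action through $Z^{-n}_t$. In particular, if agent $-n$ never shares (plays $g^{-n,NC}$), then by \eqref{ins_reward2} we have $Z^{-n}_t=\epsilon$ almost surely, so $Z^{-n}_t$ is constant and the reception gain \eqref{expected_instantaneous_reward} vanishes identically, irrespective of the belief $\mu^n_t$ used to evaluate it. Because sharing is costly ($c^n>0$) and, against an opponent who does not share, brings no compensating gain at any stage, not sharing will strictly dominate. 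This is the engine of the backward-induction unraveling, and it must be applied at \emph{every} information set, on or off the equilibrium path.

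For necessity (every PBE has $g^*\equiv g^{NC}$) I would argue by backward induction on $t$, verifying sequential rationality \eqref{sequential_rationality} at each $i^n_t$. In the terminal stage $t=T$ there is no continuation, so the payoff is $I(X_T;Z^{-n}_T\vert Y^n_T,i^n_T)-a^n_T c^n$; since the first term is independent of $a^n_T$ and $c^n>0$, the unique maximizer is $a^n_T=0$, forcing $g^{n,*}_T(i^n_T)(a^n_T=1)=0$ for both agents and all $i^n_T$. For the inductive step, assume $g^{m,*}_j(\cdot)(a^m_j=1)=0$ for both agents $m$ and all $j\in\{t+1,\dots,T\}$ at \emph{all} information sets. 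Then along any continuation agent $-n$ never shares, so every future reception gain of agent $n$ is zero regardless of the realized histories; the continuation value of agent $n$ reduces to $I(X_t;Z^{-n}_t\vert Y^n_t,i^n_t)-c^n\,\mathbb{E}\{\sum_{j=t}^{T}a^n_j\mid i^n_t\}$, whose maximization strictly forces $a^n_t=0$ (and $a^n_j=0$ for $j>t$). Hence $g^{n,*}_t(i^n_t)(a^n_t=1)=0$ for all $i^n_t$, closing the induction and yielding $g^*\equiv g^{NC}$.

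For sufficiency I would check that, for any belief profile $\mu$ consistent with $g^{NC}$, the assessment $(g^{NC},\mu)$ satisfies sequential rationality: against $g^{-n,NC}$ all reception gains are identically zero (as above), so agent $n$'s total expected reward equals $-c^n\,\mathbb{E}\{\sum_{j} a^n_j\}$, maximized by never sharing, i.e. by $g^{n,NC}$; consistency of $\mu$ holds by hypothesis. I expect the main obstacle to be the off-path bookkeeping in the inductive step: one must ensure that a deviation to $a^n_t=1$, which moves $Y^n_t$ into the common history \eqref{Common_history_evolution_eq} and drives the game onto histories off the equilibrium path, cannot create future value. This is precisely where it is essential that $g^{NC}$ prescribes non-sharing at \emph{every} information set and that sequential rationality is imposed everywhere, so that agent $-n$'s fixed strategy still plays NC at the off-path continuations; combined with the belief-independence of the vanishing reception gain noted in the first paragraph, this rules out any profitable deviation and any residual dependence on the freely chosen off-path beliefs.
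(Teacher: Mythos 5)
Your proposal is correct and follows essentially the same route as the paper: backward induction for necessity, driven by the facts that the reception gain is independent of the agent's own action and that a never-sharing opponent makes all future reception gains vanish at every information set (on or off path), plus a direct domination argument for sufficiency. The only cosmetic difference is that you derive the vanishing gain directly from $Z^{-n}_t=\epsilon$ almost surely (mutual information with a constant is zero), whereas the paper invokes part 2 of Lemma \ref{Lem_reward}, which encapsulates the same observation.
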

We proved that $g^{NC}$ is optimal for any consistent belief system.  Under $g^{NC}$ the agents' private histories at every time $t$ are comprised of all their past observations, meaning
\begin{align}
\label{non-cooperation-private-history}
&I^{n,p}_t=(Y^n_{1:t-1}), \quad\forall t, n.
\end{align}
So, under the $g^{NC}$ strategy profile agent $n$ needs to assign at each time $t$ a probability distribution from every realization $i^n_t$ over the part of history that is unknown to $n$ (i.e., $X_{0:t},I^{-n,p}_t$) 
and the belief defined in \eqref{belief_general} must be {\em consistent}. Starting from initial belief $\mu^n_0(x_0)=\tilde{\pi}_0(x_0)$, 
\eqref{belief_finite2} yields
\begin{align}
\label{belief_3}
&\mu^n_{t+1}(i^n_{t+1})(x_{0:t+1},i^{-n,p}_{t+1})=\mathbb{P}^{g^{NC}}_\mu(x_{0:t+1},i^{-n,p}_{t+1}|i^n_{t+1})=\nonumber\\
&=\mathbb{P}^{g^{NC}}_\mu(x_{t+1},x_{0:t},y^{-n}_t,i^{-n,p}_t|i^n_t,y^n_t,a^n_t=0,a^{-n}_t=0)\nonumber\\
&=\frac{\mathbb{P}(x_{t+1}|x_t)\prod_{j\in\{1,2\}}\mathbb{P}(y^j_t|x_t)\mu^n_t(i^n_t)(x_{0:t},i^{-n,p}_t)}{\sum_{x_{0:t},i^{-n,p}_t}\mathbb{P}(y^n_t|x_t)\mu^n_t(i^n_t)(x_{0:t},i^{-n,p}_t)}.
\end{align}
For the {\em off-equilibrium paths} the signaling-free belief system \eqref{sf-beliefs} can be used.

Theorem \ref{Optimal_strategy_finite_horizon} is 
in accordance with the intuition behind the result of no sustainable cooperation in finite-horizon repeated \textit{Prisoner's dilemma} \cite{Mailath_2006}. The proof of no sustainable cooperation in finite-horizon  DISG however, needs to take into account the dynamics of the beliefs, since in the repeated games framework this element is absent.
\subsection{Extension to more general reward functions}
\label{discussion_rewards}
The marginal distributions appearing in the expected instantaneous rewards are determined by the belief $\mu^n_t(i^n_t)(\cdot)$. 
Thus, the expected reward function can be expressed in terms of the belief $\mu^n_t(i^n_t)$ instead of $i^n_t$. It then turns out that the analysis and results (except for Corollary \ref{lem_simplex}) of the paper hold for more general bounded functions 
of the form
\begin{align}
\label{general_form}
    r^n(x_t,y^n_t,z^{-n}_t;\mu^n_t(\cdot)),
\end{align}
under some mild conditions ({\em see} $A$, $B$ below). The expected instantaneous reward at time $t$ under action $a^n_t$ becomes
\begin{align*}
    \sum_{x_,y^n_t,z^{-n}_t}\mathbb{P}(x_t,y^n_t,z^{-n}_t\vert i^n_t)r^n(x_t,y^n_t,z^{-n}_t;\mu^n_t(\cdot))-a^n_tc^n.
\end{align*}
Due to the dependence of the reward function $r^n(\cdot)$ on the belief, the above function becomes non-linear in $\mu^n_t(i^n_t)(\cdot)$. Non-linear reward functions incorporating the uncertainty in state estimation are encountered in several fields including controlled sensing \cite{Krishnamurthy_2016}. The DISG model introduced here entails an extra layer of complexity associated with the belief about private information of the other agent.

Lemma \ref{Lem_reward} extends to reward functions of the form \eqref{general_form} as follows.
\begin{align*}
    &\mathbb{E}\{r^n_t(X_t,Y^n_t,Z^{-n}_t)\vert i^n_t\}
    =g^{-n}_t(i^c_t)(a^{-n}_t=1)\nonumber\\
    &\times\sum_{x_,y^n_t,y^{-n}_t}\mathbb{P}(x_t,y^n_t,y^{-n}_t\vert i^n_t)r^n(x_t,y^n_t,y^{-n}_t;\mu^n_t(\cdot)),
\end{align*}
provided that statements $1$ and $2$ of Lemma \ref{Lem_reward} are reinforced with the following conditions
\begin{enumerate}[label=(\Alph*)]
    \item $\sum\limits_{x_t,y^n_t,z^{-n}_t}\mathbb{P}(x_t,y^n_t,z^{-n}_t\vert i^n_t)r^n(x_t,y^n_t,z^{-n}_t;\mu^n_t(\cdot))\geq0$,
    \item $\sum\limits_{x_t,y^n_t}\mathbb{P}(x_t,y^n_t\vert i^n_t)r^n(x_t,y^n_t,Z^{-n}_t=\epsilon;\mu^n_t(\cdot))=0$.
\end{enumerate}
We note that the result of Theorem \ref{Optimal_strategy_finite_horizon} is valid even without conditions $(A)$ and $(B)$, since action $a^n_t=0$ is {\em dominant} for agent $n$ in the static (one-shot) game (note the absence of the action $a^n_t$ in the expression for the expected instantaneous reception gain).

Note that $i^n_t$ as well as $\mu^n_t(\cdot)(X_{0:t},I^{-n,p}_t)$ have a time-increasing domain. 
We will show that under the class of Constrained Grim Trigger strategies introduced below, the marginal belief over $X_t$ is a sufficient statistic (in conjunction with another variable defined in the sequel). To avoid confusion we will denote the marginal belief over $X_t$ as 
$\pi_t^n(X_t=x)=
\mathbb{P}^{g^n,g^{-n}}(X_t=x\vert i^n_t)$.
\section{Infinite horizon DISG and Constrained Grim Trigger strategies}
In contrast to the finite horizon and the absolute lack of cooperation, infinite horizon problems may enable the emergence of sustainable cooperation in equilibrium.

{\em Punishment strategies} are a typical example \cite{Tirole, Mailath_2006}. One of the simplest such strategies is the {\em grim trigger} (GT), which in the context of DISG takes the following form for agent $n$: 
\begin{itemize}
\item At time $t=0$ select $a^n_0=1$ (i.e., share $Y^n_0$).
\item For every time $t>0$ select $a^n_t=1$ except if $a^n_{t-1}=0$ or $a^{-n}_{t-1}=0$.
\end{itemize}
Notice that if agent $n$ follows a GT strategy, a single non-cooperative action of agent $-n$ at time $\tau$, results in agent $n$ not cooperating $\forall t>\tau$.\noindent
\subsection{Constrained grim trigger strategies}
Motivated by the above definition, we next introduce the {\em Constrained Grim Trigger} (CGT) strategy. CGT strategies are parametrized by the subsets of the simplex $\Delta(\mathcal{X})$ and are defined over the augmented state space ${\mathcal{S}}\times\Delta(\mathcal{X})$ where $\mathcal{S}=\{0,1\}$ represents the information sharing status. More precisely, let the random variable $S_t:\mathcal{I}^n_t\to \mathcal{S}$ 
that flags the occurrence of deviation from cooperation. Thus, $S_t(i^n_t)=1$ if $\forall n,j< t, a^{n}_j \in i^n_t$, $a^{n}_j = 1$ and $S_t(i^n_t)=0$ otherwise. The dynamics of $S_t$ are deterministic and given by
\begin{align}
\label{s_t-evolution}
\mathbb{P}(s_{t+1}=1|s_t,a^1_t,a^2_t)=\mathds{1}_{\{s_t=a^1_t=a^2_t=1\}}.
\end{align}
\begin{Def}
\label{CIB_GT-def}
Let $\Pi^{n,c}\in\mathcal{P}(\Delta(\mathcal{X}))$ be a subset of the simplex $\Delta(\mathcal{X})$
, where $\mathcal{P}(\Delta(\mathcal{X}))$ is the powerset of $\Delta(\mathcal{X})$. The {\em Constrained Grim Trigger} (CGT) strategy is defined as follows. Let $\mathcal{F}_\mathcal{X}$ denote the space of mappings $\sigma: \{0,1\} \times \Delta(\mathcal{X}) \to \Delta(\mathcal{A})$. Define the CGT mapping  $\mathbb{\sigma}^{n,\cdot}:\mathcal{P}(\Delta(\mathcal{X}))\to \mathcal{F}_\mathcal{X}$ for agent $n$, by
\begin{align*}
\sigma^{n,\Pi^{n,c}}(s_t,\pi^n_t)(a_t^n=1)=
\begin{cases}
1, &\text{if} \, \, \ s_t =1 \, \,  \text{and} \, \,  \pi^n_t \in \Pi^{n,c},\\
0, &\text{otherwise},
\end{cases}
\end{align*}
where $\pi^n_t$ is the belief over system states with elements $\pi^n_t(X_t=x)=\mathbb{P}^{\sigma^{n,\Pi^{n,c}},\sigma^{-n,\Pi^{-n,c}}}(X_t=x|i^n_t), x\in\mathcal{X}$. The elements of the image of $\sigma^n$ are called {\em CGT strategies} for agent $n$. $\Pi^{n,c}$ symbolizes the {\em cooperation region} of strategy $\sigma^{n,\Pi^{n,c}}$. 
\end{Def}
A CGT strategy $\sigma^{n,\Pi^{n,c}}$ declares that agent $n$ shares information as long as her belief $\pi^n_t$ lies in the region $\Pi^{n,c}$ (hence ``Constrained'') and both agents shared information at every time instant up to the current epoch (i.e., $S_t=1$). It can be seen that each CGT strategy, is uniquely defined by an element of $\Pi^{n,c}\in \mathcal{P}(\Delta(\mathcal{X}))$; the CGT mapping for each agent is injective.

A CGT strategy for agent n corresponding to an arbitrary belief set $\Pi^{n,c}$ is a stationary deterministic mapping and can be written as 
{\small{\begin{align}
\label{sigma_ind1}
&\sigma^{n,\Pi^{n,c}}(s_t,\pi^n_t)(a^n_t=1)=\mathds{1}_{\{s_t=1,\pi^n_t\in\Pi^{n,c}\}}.%
\end{align}}}
In the sequel, we write $\sigma^{-n}$ instead of $\sigma^{-n,\Pi^{-n,c}}$ whenever it is clear from the context.

Under CGT strategies, two distinct phases can exist during agents' interactions. The first phase consists of full data exchange. During this phase there is no private information. The second phase initiates after a deviation from cooperation occurs and during that phase agents' observations constitute private information. 

In the sequel, we examine agents' optimal behavior 
under CGT strategies in the infinite horizon DISG. 
\begin{Lem}
\label{sigma_common}
If agent $-n$ follows a CGT strategy $\sigma^{-n,\Pi^{-n,c}}$, the following statements hold:
\begin{enumerate}
\item\noindent\vspace{-0.57cm}
\begin{align}\noindent
\label{sigma_inference}\noindent
&\mathbb{P}^{\sigma^{-n}}(a^{-n}_t|i^{-n}_t)=\sigma^{-n}(s_t,\pi^{-n}_t)(a^{-n}_t)\nonumber\\
&=\mathbb{P}^{\sigma^{-n}}(a^{-n}_t|i^n_t)=\sigma^{-n}(s_t,\pi^n_t)(a^{-n}_t), \quad\forall t.
\end{align}
\item Agent $n$'s belief $\pi^n_t$ is updated recursively as $\pi^n_{t+1}=f(\pi^n_t,y^n_t,z^{-n}_t,a^{-n}_t)$.
\item Agent $n$'s reward function for given $i^n_t$ and action $a^n_t$, is given by
\begin{align}
\label{GT_Reward}
&\tilde{R}^n(s_t,\pi^n_t,a^n_t)
=\tilde{r}^n(s_t,\pi^n_t)-a^n_tc^n,
\end{align}
\end{enumerate}
where
\begin{align}
\label{GT_Reward_aux}
&\tilde{r}^n(s_t,\pi^n_t)=\mathds{1}_{\{s_t=1,\pi^n_t\in\Pi^{-n,c}\}}I(X_t;Y^{-n}_t|Y^n_t,i^n_t).
\end{align}
\end{Lem}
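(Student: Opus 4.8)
The plan is to reduce all three statements to a single structural dichotomy induced by the CGT strategy, namely a \emph{cooperation phase} ($s_t=1$) and a \emph{punishment phase} ($s_t=0$), and to exploit the fact that in either phase agent $-n$'s action, as seen from agent $n$, is determined without recourse to agent $-n$'s private history or to the hidden state $X_t$.

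First I would establish the dichotomy by induction on $t$ using \eqref{Common_history_evolution_eq}--\eqref{Private_history_evolution_eq} and \eqref{s_t-evolution}. The claim to prove inductively is that $s_t=1$ forces $I^{n,p}_t=\emptyset$ for both agents, so that $i^n_t=i^{-n}_t=i^c_t$ and hence $\pi^n_t=\pi^{-n}_t$: as long as no deviation has occurred, both agents have shared every observation and run the identical Bayesian filter on the common history. Since $S_t$ in \eqref{s_t-evolution} is a deterministic function of the commonly known action history, both agents agree on $s_t$ at all times. With this in hand, Statement~1 splits into two cases. When $s_t=1$, the CGT rule \eqref{sigma_ind1} is a deterministic map of $(s_t,\pi^{-n}_t)=(s_t,\pi^n_t)$, a quantity agent $n$ can evaluate exactly from its own (here common) information; when $s_t=0$, \eqref{sigma_ind1} returns $a^{-n}_t=0$ irrespective of $\pi^{-n}_t$. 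In both cases $a^{-n}_t$ is a function of data available to agent $n$ and is therefore independent of $X_t$ given $i^n_t$; this collapses the marginalization in \eqref{infer_action} and yields \eqref{sigma_inference}.

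Statement~2 then follows by marginalizing the full belief recursion \eqref{belief_finite2} over $(X_{0:t-1},I^{-n,p}_t)$. The decisive point supplied by Statement~1 is that the factor $g^{-n}_t(i^{-n,p}_t,i^c_t)(a^{-n}_t)=\sigma^{-n}(s_t,\pi^n_t)(a^{-n}_t)$ is constant in both $x_t$ and $i^{-n,p}_t$, so it pulls outside the sum; what remains is $\sum_{x_t}\mathbb{P}(x_{t+1}|x_t)\mathbb{P}(y^n_t|x_t)\mathbb{P}(z^{-n}_t|x_t,a^{-n}_t)\pi^n_t(x_t)$, normalized via \eqref{belief_aux}--\eqref{y_z}. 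This is precisely a standard hidden-Markov filter depending only on $(\pi^n_t,y^n_t,z^{-n}_t,a^{-n}_t)$, giving the map $f$.

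For Statement~3 I would verify that the CGT profile meets a hypothesis of Lemma~\ref{Lem_reward} in each phase---condition~$1$ ($i^1_t=i^2_t=i^c_t$) in the cooperation phase and condition~$2$ (strategy invariant over private realizations, since $\sigma^{-n}(0,\cdot)\equiv 0$) in the punishment phase---and then invoke \eqref{MI_rem} with $g^{-n}_t(i^c_t)(a^{-n}_t=1)$ replaced by the CGT value $\mathds{1}_{\{s_t=1,\pi^n_t\in\Pi^{-n,c}\}}$, which reproduces \eqref{GT_Reward_aux} and hence \eqref{GT_Reward}. The main obstacle is the inductive step establishing the phase dichotomy: one must track that a \emph{constrained} non-share (belief leaving $\Pi^{n,c}$ while $s_t=1$) correctly triggers $s_{t+1}=0$ and creates private information in step with \eqref{Private_history_evolution_eq}, and that the $x_t$-independence of the inferred action---the property that decouples the state belief from the belief over the opponent's private history---survives this transition. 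Everything else is routine filtering algebra.
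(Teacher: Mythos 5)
Your proposal is correct and follows essentially the same route as the paper's own proof: the cooperation/punishment dichotomy (with $s_t=1$ implying $i^n_t=i^{-n}_t=i^c_t$ and hence $\pi^n_t=\pi^{-n}_t$, and $s_t=0$ forcing $a^{-n}_t=0$ irrespective of beliefs) to get Statement~1, cancellation of the now state- and private-history-independent strategy factor in the Bayes recursion to get the HMM filter of Statement~2, and the two-case invocation of conditions~1 and~2 of Lemma~\ref{Lem_reward} to get Statement~3. The only cosmetic differences are that you make the induction behind the dichotomy explicit (the paper asserts it directly) and derive Statement~2 by marginalizing \eqref{belief_finite2} rather than by a fresh Bayes computation, which amounts to the same algebra.
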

The next theorem states that if agent $-n$ follows a CGT strategy, then agent $n$ faces a POMDP with information state $(s_t,\pi^n_t)$. Hence, agent $n$ can choose her best-response from the class of strategies that depend on $(s_t,\pi^n_t)$ without loss of optimality, because in infinite horizon POMDPs stationary
strategies that depend on the information state are optimal.  We will further show that the CGT strategies
are closed under the best response mapping, meaning that if agent $-n$ follows a CGT strategy, then agent $n$ can optimally respond using a CGT strategy.
\begin{Thm}
\label{ST-POMDP}
Given that agent $-n$ follows a CGT strategy, 
agent $n$'s best-response problem is a POMDP. Moreover, $(s_t,\pi^n_t)$ is an {\em information state}.
\end{Thm}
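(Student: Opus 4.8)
The plan is to verify that $(s_t,\pi^n_t)$ satisfies the three defining requirements of an information state for the single-agent sequential decision problem that agent $n$ faces once agent $-n$ is held fixed to the CGT strategy $\sigma^{-n,\Pi^{-n,c}}$: \textbf{(R1)} the expected per-stage reward is a function of $(s_t,\pi^n_t)$ and of the chosen action $a^n_t$ only; \textbf{(R2)} the pair $(s_t,\pi^n_t)$ admits a recursive update driven solely by $a^n_t$ and the newly received data; and \textbf{(R3)} the conditional law of the next pair $(s_{t+1},\pi^n_{t+1})$ given the full history $i^n_t$ and the action depends on the history only through $(s_t,\pi^n_t,a^n_t)$ (the controlled-Markov property). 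Since agent $n$ only partially observes $X_t$, the problem is a POMDP whose naive information state would be the belief over $(X_{0:t},I^{-n,p}_t)$; the content of the theorem is that the reduced statistic $(s_t,\pi^n_t)$ already suffices, and once (R1)--(R3) are in place the best-response problem collapses to an MDP on this statistic, which is exactly the claimed characterization. I note that $s_t$ is an \emph{observed} component, being a deterministic function of $i^n_t$, while $\pi^n_t$ is the belief over the hidden state.

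Requirements (R1) and (R2) are essentially supplied by Lemma~\ref{sigma_common}. For (R1), part~3 of that lemma already writes the expected instantaneous reward as $\tilde{R}^n(s_t,\pi^n_t,a^n_t)=\tilde{r}^n(s_t,\pi^n_t)-a^n_t c^n$, which depends on the past only through $(s_t,\pi^n_t)$ and the action. For (R2), the flag dynamics \eqref{s_t-evolution} give $s_{t+1}=\mathds{1}_{\{s_t=a^1_t=a^2_t=1\}}$, a deterministic map of $s_t$, $a^n_t$ and the realized $a^{-n}_t$, while part~2 of the lemma gives the belief recursion $\pi^n_{t+1}=f(\pi^n_t,y^n_t,z^{-n}_t,a^{-n}_t)$ (notably independent of $a^n_t$, since $n$ observes $Y^n_t$ whether or not she shares it). Hence the only new information that advances the statistic is the triple $(Y^n_t,Z^{-n}_t,A^{-n}_t)$, and $(s_{t+1},\pi^n_{t+1})$ is a fixed deterministic image of $(s_t,\pi^n_t,a^n_t)$ together with that triple.

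The substance of the proof is (R3), which I would establish by computing the conditional law of the driving triple. Because the state is exogenous \eqref{state_equation} and $A^n_t$ does not influence it, the observation model \eqref{obs_equation} gives $\mathbb{P}(y^n_t\mid x_t,i^n_t)=\mathbb{P}(y^n_t\mid x_t)$, and $\mathbb{P}(z^{-n}_t\mid x_t,a^{-n}_t)$ is as in \eqref{y_z}. The delicate factor is $A^{-n}_t$: in a general asymmetric-information game its law given $i^n_t$ requires marginalizing over $-n$'s private history \eqref{infer_action}, which would force agent $n$ to carry a belief over $I^{-n,p}_t$ and would defeat sufficiency of $(s_t,\pi^n_t)$. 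Under the CGT hypothesis, however, part~1 of Lemma~\ref{sigma_common} yields $\mathbb{P}^{\sigma^{-n}}(a^{-n}_t\mid i^n_t)=\sigma^{-n}(s_t,\pi^n_t)(a^{-n}_t)$ and, crucially, that $A^{-n}_t$ is conditionally independent of $X_t$ given $i^n_t$: the action is either predictable from $\pi^n_t$ in the cooperation phase (where no private information exists, so $\pi^{-n}_t=\pi^n_t$) or identically $0$ in the punishment phase, so it encodes no hidden signal about the state. Combining these facts,
\begin{align*}
\mathbb{P}(y^n_t,z^{-n}_t,a^{-n}_t\mid i^n_t)
&=\sigma^{-n}(s_t,\pi^n_t)(a^{-n}_t)\\
&\quad\times\sum_{x_t}\mathbb{P}(y^n_t\mid x_t)\,\mathbb{P}(z^{-n}_t\mid x_t,a^{-n}_t)\,\pi^n_t(x_t),
\end{align*}
which depends on the history only through $(s_t,\pi^n_t)$. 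Since $(s_{t+1},\pi^n_{t+1})$ is a deterministic function of this triple together with $(s_t,\pi^n_t,a^n_t)$, its conditional law given $(i^n_t,a^n_t)$ depends on the history only through $(s_t,\pi^n_t,a^n_t)$, which is (R3).

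With (R1)--(R3) verified, agent $n$'s best-response problem is a controlled Markov process on $(s_t,\pi^n_t)$ with per-stage reward $\tilde{R}^n$, i.e.\ a POMDP whose information state is $(s_t,\pi^n_t)$. The reward is bounded (the mutual-information term is bounded since $\mathcal{X},\mathcal{Y}^n$ are finite, and $c^n$ is constant) and $\delta\in[0,1)$, so standard infinite-horizon POMDP dynamic-programming theory applies and the information-state characterization is complete; this also justifies restricting to stationary policies over $(s_t,\pi^n_t)$ in the subsequent best-response analysis. The main obstacle is precisely (R3): removing any need to track a belief over $-n$'s private history. This hinges entirely on the conditional independence and predictability of $A^{-n}_t$ granted by the CGT structure through Lemma~\ref{sigma_common}, part~1, and is exactly the property that fails for general strategies in the asymmetric-information setting.
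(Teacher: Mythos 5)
Your proposal is correct and follows essentially the same route as the paper's proof: both rest on parts 1--3 of Lemma~\ref{sigma_common} and on the same key factorization $\mathbb{P}(y^n_t,z^{-n}_t,a^{-n}_t\mid i^n_t)=\sigma^{-n}(s_t,\pi^n_t)(a^{-n}_t)\sum_{x_t}\mathbb{P}(y^n_t\mid x_t)\mathbb{P}(z^{-n}_t\mid x_t,a^{-n}_t)\pi^n_t(x_t)$, which is exactly what the paper uses to verify both the Markovianity of the dynamics and the conditional-independence condition of the information-state property. The only difference is presentational: the paper additionally formalizes the ``is a POMDP'' claim by introducing the augmented state $\tilde{X}_t=(X_t,S_t,\Pi^n_t)$ and observation $\tilde{Y}^n_t=(Y^n_{t-1},Z^{-n}_{t-1},A^{-n}_{t-1})$ and checking the POMDP axioms for that pair, whereas you pass directly to the sufficiency of the statistic $(s_t,\pi^n_t)$.
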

Since, agent $n$'s best-response problem corresponds to a POMDP, the {\em Bellman Equation} (BE) holds:
\begin{align}
\label{BE_infinite_horizon}
&V^n(s,\pi^n)=\max_{a^n\in\{0,1\}}\{\tilde{r}^n(s,\pi^n)-a^nc^n\nonumber\\
&+\delta \mathbb{E}^{\sigma^{-n}}\{V^n(s',f(\pi^n,y^n,z^{-n},a^{-n}))|\pi^n,s\}\},
\end{align}
where $s'$ stands for the future value of $s$ and 
$\tilde{r}(s,\pi^n)$ is given by \eqref{GT_Reward_aux}. The expectation is w.r.t. all random variables and is computed as
\begin{align}
\label{continuation_value}
&\mathbb{E}^{\sigma^{-n}}\{V^n(s',f(\pi^n,y^n,z^{-n},a^{-n}_t
)|\pi^n,s\}\nonumber\\
&=\sum_{y^n,z^{-n},s',a^{-n},x}\mathbb{P}(s'|s,a^n,a^{-n})
\nonumber\\
&\times\mathbb{P}(y^n|x)\mathbb{P}(z^{-n}|x,a^{-n})\sigma^{-n}(s,\pi^n)(a^{-n})
\nonumber\\
&\times \pi^n(x)V^n(s',f(\pi^n,y^n,z^{-n},a^{-n}
)),
\end{align}
where 
$\mathbb{P}(z^{-n}|x,a^{-n})$ is given by \eqref{y_z}.

Eq. \eqref{BE_infinite_horizon} expresses the total expected sum of discounted rewards for agent $n$ starting from state $s,\pi^n$, given that agent $-n$ follows a CGT strategy $\sigma^{-n}(s,\pi^n)$ and agent $n$ acts optimally. For $s=0$, \eqref{BE_infinite_horizon} yields
\begin{align}
\label{BE_s0a}
&V^n(s=0,\pi^n)=\max_{a^n_{0:\infty}}\sum^{\infty}_{t=0}\delta^t\nonumber\\
&\times\mathbb{E}^{\sigma^{-n}}\{\tilde{r}^n(s=0,\pi^n)-a^nc^n|\pi^n,s=0\}.
\end{align}
For $s=0$, it is $\sigma^{-n}(s=0,\pi^{-n})(a^{-n}=0)=1$ and as a result $\tilde{r}^n(s=0,\pi^n_t)=0$ ({\em see} \eqref{GT_Reward}, \eqref{GT_Reward_aux}) and $s_{t+1}=0$ for every $t$ ({\em see} \eqref{s_t-evolution}). Thus, \eqref{BE_s0a} yields
\begin{align}
\label{BE_s0b}
&V^n(s=0,\pi^n)=\max_{a^n_{0:\infty}}\sum^{\infty}_{t=0}-\delta^tc^na^n_t,
\end{align}
which clearly takes the maximum value when $a^n_t=0$ for all $t,\pi^n$. So, for $s=0$, the only sequentially rational strategy for agent $n$ is to select $a^n_t=0$ for all $t$ and then, \eqref{BE_s0b} gives
\begin{align}
\label{BE_s0c}
&V^n(s=0,\pi^n)=0, \quad\forall \pi^n.
\end{align}
The expected future rewards for agent $n$ for a given state action pair are given by
\begin{align}
\label{Q-function}
&Q^n(s,\pi^n,a^n)=\tilde{r}(s,\pi^n)-a^nc^n+\delta\\
&\times\mathbb{E}^{\sigma^{-n}
}\{V^n(s',f(\pi^n,y^n,z^{-n}
,a^{-n}))|\pi^n,s,a^n\}.\nonumber
\end{align}
Utilizing 
\eqref{GT_Reward_aux}, \eqref{BE_s0c}, \eqref{Q-function}, we obtain for every $\pi^n$
\begin{align}
\label{Q_1}
&Q^n(s=0,\pi^n,a^n)=-a^nc^n,\\
\label{Q_2}
&Q^n(s=1,\pi^n,a^n=0)=\tilde{r}(1,\pi^n),\\
\label{Q_3}
&Q^n(s=1,\pi^n,a^n=1)=\tilde{r}(1,\pi^n)-c^n+\\
&\delta\mathbb{E}^{\sigma^{-n}}\{V^n(s',f(\pi^n,y^n,z^{-n},a^{-n}))|\pi^n,s=1,a^n=1\}.\nonumber
\end{align}
\begin{Thm}
\label{Closeness}
The CGT strategies are closed under the best-response mapping.
\end{Thm}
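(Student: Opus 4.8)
The plan is to leverage the POMDP structure established in Theorem \ref{ST-POMDP} and read off the shape of the optimal policy directly from the Bellman equation. Since agent $-n$ plays a CGT strategy, Theorem \ref{ST-POMDP} guarantees that agent $n$'s best-response problem is a POMDP with information state $(s_t,\pi^n_t)$. By the standard theory of infinite-horizon discounted POMDPs, there exists an optimal stationary deterministic policy that is a function of the information state and satisfies the Bellman equation \eqref{BE_infinite_horizon}. It therefore suffices to show that this optimal policy can be written in the CGT form \eqref{sigma_ind1} for a suitable cooperation region $\Pi^{n,c}\subseteq\Delta(\mathcal{X})$. I emphasize that there is no circularity here: the value function $V^n$ entering the continuation term in \eqref{Q_3} is the optimal value of the POMDP associated with the \emph{fixed} strategy $\sigma^{-n}$, so the action-value functions are well defined before the best response is characterized.

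First I would dispatch the $s=0$ branch. Equations \eqref{BE_s0a}--\eqref{BE_s0c} already establish that $V^n(0,\pi^n)=0$ and that the unique sequentially rational action is $a^n=0$ for every $\pi^n$. This coincides with the CGT prescription, since a CGT strategy selects $a^n=0$ whenever $s=0$, independently of the belief. The substantive step is the $s=1$ branch, where the optimal action is determined by comparing the two action values $Q^n(1,\pi^n,0)$ and $Q^n(1,\pi^n,1)$ from \eqref{Q_2} and \eqref{Q_3}. I would define the cooperation region
\begin{align*}
\Pi^{n,c} := \{\pi^n\in\Delta(\mathcal{X}) : Q^n(1,\pi^n,1)\geq Q^n(1,\pi^n,0)\},
\end{align*}
so that the optimal action at $(1,\pi^n)$ is $a^n=1$ precisely when $\pi^n\in\Pi^{n,c}$ (breaking ties in favor of cooperation) and $a^n=0$ otherwise.

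Combining the two branches, the optimal policy satisfies $a^n=1$ if and only if $s=1$ and $\pi^n\in\Pi^{n,c}$, which is exactly the CGT strategy $\sigma^{n,\Pi^{n,c}}$ of \eqref{sigma_ind1}. Hence the best response to a CGT strategy is again a CGT strategy, establishing closedness.

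The main obstacle I anticipate is not the algebra but justifying that the optimal policy genuinely depends on the belief only through membership in the fixed set $\Pi^{n,c}$, i.e.\ that for $s=1$ the action choice partitions the simplex into a cooperation set and its complement. This is delivered by the sufficiency of the information state $(s,\pi^n)$ from Theorem \ref{ST-POMDP} together with the stationary deterministic optimality of POMDP policies: once the action-value functions are functions of $(s,\pi^n)$ alone, the set on which cooperation is weakly preferred is automatically a well-defined subset of $\Delta(\mathcal{X})$. The only delicate point is the treatment of the indifference boundary where the two action values coincide, which is immaterial to optimality and can be resolved by any fixed tie-breaking rule.
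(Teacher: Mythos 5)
Your proposal is correct and follows essentially the same route as the paper's own proof: invoke Theorem \ref{ST-POMDP} to reduce to policies over $(s,\pi^n)$, dispose of the $s=0$ branch via $0\geq -c^n$, and define the cooperation region as the set of beliefs where $Q^n(1,\pi^n,1)\geq Q^n(1,\pi^n,0)$, which is exactly the region $C$ the paper constructs from inequality \eqref{BE_inequality2}. The only additions — the explicit remark on non-circularity of the value function and the tie-breaking discussion — are harmless clarifications of what the paper leaves implicit.
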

Next we demonstrate an important feature of CGT strategies: they give rise to PBEs that can be grouped up into equivalence classes, which are characterized by the strategy profile and $\pi_t^n$ for each $n$. This allows us to ignore the belief on other agent's private information. Thus, despite the fact that private information is present in the DISG, the PBE solution concept becomes redundant when one considers equilibria consisting of CGT strategies; it will be enough to consider Subgame Perfect Equilibria (SPEs). 
To show the following result, it will be convenient to define the following marginalization operator $\pi^{n,X}(\mu)=\{\pi^{n,\mu}_t\}_{t\in\mathbb{N}}$, where
\begin{align}
	\pi^{n,\mu}_t(i^n_t)(X_t=x_t) = \sum\limits_{i^{-n,p}_t,x_{0:t-1}}\ \mu^n_t(i^n_t)(x_{0:t},i^{-n,p}_t).
\end{align}
Note that $\pi_t^n(X_t=x)=\pi_t^{n,\mu}(i^n_t)(X_t=x)=\mathbb{P}^{g^n,g^{-n}}(X_t=x\vert i^n_t)$.
\begin{Thm}
\label{pbequiv}
Suppose $(\sigma^*,\mu)$ is a PBE, such that $\sigma^*$ is a CGT profile. Then, $(\sigma^*,\mu')$ where $\mu'$ is a consistent belief profile w.r.t. $\sigma^*$ such that $\pi^{n,X}(\mu')=\pi^{n,X}(\mu)$ is also a PBE. 
\end{Thm}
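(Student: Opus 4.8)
The plan is to verify the two defining requirements of a PBE separately. Consistency of $\mu'$ with respect to $\sigma^*$ is granted by hypothesis, so the whole burden falls on establishing sequential rationality of $\sigma^*$ under the belief system $\mu'$. The guiding idea is that, once the opponent is committed to a CGT strategy, the entire sequential-rationality objective of agent $n$ at any information set $i^n_t$ depends on the belief $\mu^n_t(i^n_t)$ only through its marginal over the current state $\pi^n_t$; since $\mu$ and $\mu'$ share the same state marginals by the assumption $\pi^{n,X}(\mu')=\pi^{n,X}(\mu)$, the two assessments induce identical optimization problems at every information set.

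First I would invoke Theorem \ref{ST-POMDP}: against the fixed CGT strategy $\sigma^{*,-n}$, agent $n$'s best-response problem is a POMDP whose information state is $(s_t,\pi^n_t)$. This reduces the sequential rationality condition \eqref{sequential_rationality} to the associated Bellman equation \eqref{BE_infinite_horizon}. The structural facts that legitimize the reduction are supplied by Lemma \ref{sigma_common}: statement $1$ guarantees that the opponent's action law $\mathbb{P}^{\sigma^{-n}}(a^{-n}_t\mid i^n_t)=\sigma^{-n}(s_t,\pi^n_t)(a^{-n}_t)$ requires no inference on the private history $I^{-n,p}_t$, statement $2$ shows that the marginal belief evolves autonomously via $\pi^n_{t+1}=f(\pi^n_t,y^n_t,z^{-n}_t,a^{-n}_t)$, and \eqref{GT_Reward}--\eqref{GT_Reward_aux} show that the per-stage reward is a function of $(s_t,\pi^n_t)$ alone.

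Next I would make the factorization explicit through the Bellman operator. Equation \eqref{BE_infinite_horizon} defines a contraction on bounded functions, and I would argue that it maps functions of $(s,\pi^n)$ to functions of $(s,\pi^n)$. Indeed, the one-stage expectation \eqref{continuation_value} sums over $(y^n,z^{-n},a^{-n},s',x)$ against the weights $\pi^n(x)$, $\mathbb{P}(y^n\mid x)$, $\mathbb{P}(z^{-n}\mid x,a^{-n})$ and $\sigma^{-n}(s,\pi^n)(a^{-n})$ — every factor being a function of $(s,\pi^n)$ and of observed quantities, never of $I^{-n,p}_t$ — so if the continuation value is a function of the updated pair $(s',f(\pi^n,y^n,z^{-n},a^{-n}))$, then so is its image. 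By the contraction property the unique fixed point $V^n$ is therefore a function of $(s,\pi^n)$ alone, and the maximizing action at each information set is determined by $(s_t,\pi^n_t)$. Consequently the supremum in \eqref{sequential_rationality} started from $i^n_t$ equals $V^n(s_t,\pi^n_t)$, depending on $\mu^n_t(i^n_t)$ only through its state marginal.

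With this in hand the conclusion is immediate. Fix any information set $i^n_t$; because $\pi^{n,X}(\mu')=\pi^{n,X}(\mu)$ the state marginals $\pi^{n,\mu'}_t(i^n_t)$ and $\pi^{n,\mu}_t(i^n_t)$ coincide, so the sequential-rationality objective under $\mu'$ equals that under $\mu$, with the same maximizing policy. Since $(\sigma^*,\mu)$ is a PBE, $\sigma^{*,n}$ attains the supremum under $\mu$, hence it attains the same supremum under $\mu'$; this holds at every $i^n_t$ and for each $n$, yielding sequential rationality of $\sigma^*$ under $\mu'$. Combined with the assumed consistency of $\mu'$, this proves that $(\sigma^*,\mu')$ is a PBE. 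I expect the main obstacle to be the factorization step, where one must verify that no stage of the recursion smuggles in a dependence on the opponent's private information; this is precisely where the CGT-specific decoupling of Lemma \ref{sigma_common} (statement $1$) does the essential work, since for a general strategy profile the opponent's action probability would genuinely depend on $I^{-n,p}_t$ and the reduction would fail.
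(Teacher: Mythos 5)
Your proof is correct and rests on the same structural pillars as the paper's (Lemma \ref{sigma_common} and Theorem \ref{ST-POMDP}), but the closing step takes a genuinely different route. The paper argues in two stages: first, that the value of any \emph{CGT profile} at $i^n_t$ depends on the belief only through $(s_t(i^n_t),\pi^{n,\mu}_t(i^n_t))$; second, sequential rationality under $\mu'$ is proved by contradiction via Theorem \ref{Closeness} --- if $\sigma^*$ were suboptimal under $\mu'$, a strictly better response could be chosen to be CGT, and belief-invariance of CGT-profile values would carry the strict improvement back to $\mu$, contradicting the PBE property there. You bypass Theorem \ref{Closeness} entirely: from the information-state property of Theorem \ref{ST-POMDP} you conclude that the supremum over all continuation strategies equals $V^n(s_t,\pi^n_t)$, hence coincides under $\mu$ and $\mu'$ when the state marginals agree, and since the value of the fixed profile $\sigma^*$ is likewise marginal-invariant, attainment of the supremum transfers directly. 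Your route is cleaner (a direct argument, one less theorem) but leans on the full strength of the POMDP reduction --- that no history-dependent deviation can outperform the information-state value function; the paper's route only ever compares CGT profiles against CGT profiles, which is exactly why it needs the closure theorem instead. One small presentational gap in your write-up: the direct transfer needs, besides equality of the suprema, that the value of the \emph{specific} profile $\sigma^*$ at $i^n_t$ is itself equal under $\mu$ and $\mu'$ (this is precisely the first display of the paper's proof); your factorization reasoning does justify it --- $\sigma^*$ is a function of $(s,\pi^n)$, so its induced trajectory law given $i^n_t$ is determined by $(s_t,\pi^n_t)$ --- but as written this burden is carried implicitly by the phrase ``the sequential-rationality objective under $\mu'$ equals that under $\mu$,'' and it deserves one explicit sentence.
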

\begin{Rem}
\label{rem_spe}
Theorem \ref{pbequiv} states that in order to check whether a pair of CGT strategies are sequentially rational, beliefs on past states and other agent's private information are irrelevant. 
\end{Rem}
\noindent\subsection{Equilibrium regions}
Let $V^{n,C,C'}$ denote the value function 
of agent $n$ under the strategy profile $(\sigma^{n,C},\sigma^{-n,C'})$ and let $V^{n,*,C}$ denote the optimal value function of agent $n$ when agent $-n$ follows $\sigma^{-n,C}$. Similarly,  $Q^{n,*,C}(s,\pi^n,a^n)=\tilde{r}(s,\pi^n)-a^nc^n+\delta\mathbb{E}\{V^{n,*,C}(s',\pi'^n)\vert s,\pi^n,a^n\}$. 
We also define the operator $O^n(\cdot):\mathcal{P}(\Delta(\mathcal{X}))\to\mathcal{P}(\Delta(\mathcal{X}))$, as 
\begin{align}
&O^n(C) = \big\{\pi \in \Delta(\mathcal{X}) \  \vert\ Q^{n,*,C}(s=1,\pi,a^n=1)\nonumber\\
&\geq Q^{n,*,C}(s=1,\pi,a^n=0)\big\}.
\end{align}
In words, $O^n(C)$ may be thought as an oracle for the POMDP that agent $n$ has to solve to get the optimal CGT strategy when agent $-n$ follows a CGT strategy with cooperation region $C$. Note that such an optimal strategy for agent $n$ exists from Theorems \ref{ST-POMDP} and \ref{Closeness}. Also note that since $O^n(C)$ corresponds to the solution of the aforementioned POMDP, it is determined by the primitives of the problem,  $c^n,\delta$, the system dynamics and the agents' observation models.
\begin{Def}
A pair of regions $(\Pi^{1,c},\Pi^{2,c})\in\mathcal{P}(\Delta(\mathcal{X}))\times\mathcal{P}(\Delta(\mathcal{X}))$ is in {\em cooperation equilibrium}, if
\begin{align}
\label{equilibrium_region}
\Pi^{n,c}=O^n(\Pi^{-n,c}), \quad n\in\{1,2\}.
\end{align}
\end{Def}
The following Proposition characterizes regions that are in cooperation equilibrium.
\begin{Prp}
\label{Prp_regions}
The following statements are true:
\begin{enumerate}
    \item $\forall\ C\in \mathcal{P}(\Delta(\mathcal{X}))$, $O^n(C)\subseteq C$, $n=\{1,2\}$.
    \item If a pair of regions $(\Pi^{1,c},\Pi^{2,c})$ is in {\em cooperation equilibrium}, then the two regions coincide, that is $\Pi^{1,c}=\Pi^{2,c}$.
\end{enumerate}
\end{Prp}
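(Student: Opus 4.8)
The plan is to prove statement 1 first, since statement 2 then follows in a single line. For statement 1 I would establish the contrapositive: if $\pi\notin C$ then $\pi\notin O^n(C)$. Fix such a $\pi$ and suppose agent $-n$ plays the CGT strategy $\sigma^{-n,C}$. By Lemma \ref{sigma_common}, statement 1, agent $n$ can evaluate agent $-n$'s sharing probability through its \emph{own} belief, i.e.\ $\sigma^{-n}(s=1,\pi^n)(a^{-n})$ with $\pi^n=\pi$. Since $\pi\notin C=\Pi^{-n,c}$, the CGT prescription \eqref{sigma_ind1} forces $a^{-n}=0$, so no signal is received and the reception gain \eqref{GT_Reward_aux} vanishes, $\tilde r^n(1,\pi)=0$.

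Next I would track the sharing flag. The dynamics \eqref{s_t-evolution} give $s'=\mathds{1}_{\{s=a^n=a^{-n}=1\}}$. Because $a^{-n}=0$ here, we get $s'=0$ whether agent $n$ picks $a^n=0$ or $a^n=1$; hence the continuation value is $V^{n,*,C}(s'=0,\cdot)=0$ by \eqref{BE_s0c}, and the continuation term drops out of both $Q$-values. Substituting $\tilde r^n(1,\pi)=0$ into the precomputed expressions \eqref{Q_2} and \eqref{Q_3} yields $Q^{n,*,C}(s=1,\pi,a^n=0)=0$ and $Q^{n,*,C}(s=1,\pi,a^n=1)=-c^n$. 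Since $c^n>0$, cooperation is \emph{strictly} suboptimal at $\pi$, so $\pi\notin O^n(C)$. This establishes $O^n(C)\subseteq C$ for both $n\in\{1,2\}$.

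For statement 2 I would combine statement 1 with the equilibrium condition \eqref{equilibrium_region}. At a cooperation equilibrium, $\Pi^{1,c}=O^1(\Pi^{2,c})$ and $\Pi^{2,c}=O^2(\Pi^{1,c})$. Applying statement 1 to each gives $\Pi^{1,c}=O^1(\Pi^{2,c})\subseteq\Pi^{2,c}$ and $\Pi^{2,c}=O^2(\Pi^{1,c})\subseteq\Pi^{1,c}$, and the two inclusions force $\Pi^{1,c}=\Pi^{2,c}$.

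The only delicate point is the first paragraph: one must be sure that agent $n$'s own belief $\pi$ is the correct argument of the opponent's CGT map when determining $a^{-n}$, which is precisely the content of Lemma \ref{sigma_common}, statement 1. Everything else reduces to reading off the already-derived $Q$-value formulas together with the terminal identity $V^{n,*,C}(0,\cdot)=0$, so I expect no computational obstacle; the substance of the result is entirely in the observation that outside $C$ the opponent withholds, which simultaneously zeroes the reception gain and the continuation while leaving the transmission cost $c^n$ to be paid.
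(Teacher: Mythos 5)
Your proof is correct and follows essentially the same route as the paper's: the contrapositive of statement 1 via the observation that $\pi\notin C$ forces $\sigma^{-n}(s=1,\pi)(a^{-n}=1)=0$ (justified by Lemma \ref{sigma_common}), hence $\tilde r^n(1,\pi)=0$ and $s'=0$, so that $Q^{n,*,C}(1,\pi,1)=-c^n<0=Q^{n,*,C}(1,\pi,0)$, followed by the same double-inclusion argument for statement 2. The only cosmetic difference is that you invoke $V^{n,*,C}(0,\cdot)=0$ from \eqref{BE_s0c} to kill both continuation terms, whereas the paper simply cancels the identical continuation terms on the two sides of the inequality.
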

In light of part $2)$ of Proposition \ref{Prp_regions}, we say that a region $\Pi^c \in \mathcal{P}(\Delta(\mathcal{X}))$ is an {\em equilibrium region} if the pair $(\Pi^c,\Pi^c)$ is in cooperation equilibrium. Let $\mathcal{E}\subseteq \mathcal{P}(\Delta(\mathcal{X}))$ be the set of all equilibrium regions.
\begin{Rem}
Regarding part $2)$ of Proposition \ref{Prp_regions}, we note that the intuition behind this result is the following. It is never favorable for an agent to cooperate in regions of the belief simplex that the other agent will not cooperate for $c^n>0$. For example, in the extreme case when $c^1\rightarrow\infty$ and $c^2\rightarrow 0$, agent $1$ will not cooperate (since $\tilde{r}^n(s,\pi^n)$ is bounded), and thus the other agent will not cooperate either, since she has no gain and pays a small positive cost if she does.
\end{Rem}
\begin{Prp}
\label{CGT_PBE}
The strategy profile $\sigma^*=(\sigma^{1,\Pi^c},\sigma^{2,\Pi^c})$ where $\Pi^c\in\mathcal{E}$, is a SPE.
\end{Prp}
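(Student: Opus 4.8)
The plan is to verify sequential rationality of $\sigma^*=(\sigma^{1,\Pi^c},\sigma^{2,\Pi^c})$ at every subgame by exhibiting each $\sigma^{n,\Pi^c}$ as a best response to $\sigma^{-n,\Pi^c}$, and then to invoke the belief-irrelevance afforded by Theorem \ref{pbequiv} to conclude that this mutual optimality already constitutes an SPE. The entire argument rests on reading the equilibrium-region condition \eqref{equilibrium_region} through the POMDP lens provided by Theorems \ref{ST-POMDP} and \ref{Closeness}.

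First, fix an agent $n$ and suppose the opponent plays $\sigma^{-n,\Pi^c}$. By Theorem \ref{ST-POMDP}, agent $n$'s best-response problem is an infinite-horizon POMDP whose information state is $(s_t,\pi^n_t)$; consequently there exists an optimal stationary strategy depending only on $(s_t,\pi^n_t)$, and by Theorem \ref{Closeness} it may be taken to be a CGT strategy. I would then identify its cooperation region: at $s=0$ the analysis leading to \eqref{BE_s0c} shows that $a^n=0$ is (uniquely) optimal, whereas at $s=1$ the optimal policy cooperates precisely on the set where $Q^{n,*,\Pi^c}(s=1,\pi,a^n=1)\ge Q^{n,*,\Pi^c}(s=1,\pi,a^n=0)$, which is by definition $O^n(\Pi^c)$. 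Hence the optimal CGT response is $\sigma^{n,O^n(\Pi^c)}$.

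Next I would close the loop using the hypothesis $\Pi^c\in\mathcal{E}$: the equilibrium-region condition \eqref{equilibrium_region} gives $O^n(\Pi^c)=\Pi^c$ for $n=1,2$, so the optimal CGT best response of each agent is exactly $\sigma^{n,\Pi^c}$. Thus $\sigma^{n,\Pi^c}$ attains the optimal value $V^{n,*,\Pi^c}$ starting from every information state $(s,\pi^n)$, which is the statement of sequential rationality at every subgame (the punishment phase $s=0$ being credible by \eqref{BE_s0c}). Since this holds simultaneously for both agents against the same $\Pi^c$, the profile $\sigma^*$ is a mutual best response subgame by subgame. Finally, Theorem \ref{pbequiv} guarantees that for CGT profiles the beliefs on past states and on the opponent's private information are immaterial for sequential rationality, so this per-information-state optimality is exactly what an SPE requires, completing the argument.

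The main obstacle I anticipate is the precise bridge between POMDP optimality and the SPE notion: POMDP theory delivers optimality of the value function at each information state $(s,\pi^n)$, but SPE is phrased in terms of subgames and continuation histories, so the delicate part is arguing that every continuation history is summarized, for the purpose of sequential rationality, by its information state $(s,\pi^n)$ — this is where the sufficiency assertion of Theorem \ref{ST-POMDP} together with the belief-irrelevance of Theorem \ref{pbequiv} must be deployed carefully. A secondary technical point is the weak inequality in the definition of $O^n$: on the tie set cooperation is only weakly optimal, so one must check that the CGT strategy cooperating on all of $O^n(\Pi^c)$ still attains $V^{n,*,\Pi^c}$ rather than being merely a candidate maximizer; this follows from the Bellman verification principle but should be stated explicitly.
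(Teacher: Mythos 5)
Your proof is correct and follows essentially the same route as the paper: the paper likewise reduces SPE verification to per-information-state optimality via Remark \ref{rem_spe} (i.e., Theorem \ref{pbequiv}), conditions payoffs on $(s_t,\pi^n_t)$ rather than $i^n_t$, and then asserts that the value of $(\sigma^{n,\Pi^c},\sigma^{-n,\Pi^c})$ equals the optimal value $V^{n,*,\sigma^{-n,\Pi^c}}(s_t,\pi^n_t)$ ``by definition of an equilibrium region.'' Your extra intermediate step --- identifying the optimal CGT best response as $\sigma^{n,O^n(\Pi^c)}$ via Theorems \ref{ST-POMDP} and \ref{Closeness} and closing the loop with $O^n(\Pi^c)=\Pi^c$, including the tie-breaking observation on the weak inequality --- simply makes explicit what the paper compresses into that one phrase.
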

\begin{Thm}
\label{Value_function_increasing_c}
Let $C \subseteq C' \subseteq \Delta(\mathcal{X})$. Then, the following hold:
\begin{enumerate}
\item The value function of agent $n$ is non-decreasing in the other agent's cooperation region. That is,
\begin{align}
V^{n,*,C}(s,\pi) \leq  V^{n,*,C'}(s,\pi), \quad\forall s,\pi.
\end{align}
\item Let $\pi \in C$. Then,
\begin{align}
\label{BE_increasing}
E\{V^{n,*,C}(s'=1,\pi')\vert \pi, s=1, a^n=1\}\leq\nonumber\\ E\{V^{n,*,C'}(s'=1,\pi')\vert \pi, s=1, a^n=1\}.
\end{align}
\end{enumerate}
\end{Thm}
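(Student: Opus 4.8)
The plan is to prove part~1 by value iteration together with a monotonicity induction, and then to obtain part~2 as an immediate consequence of part~1. Throughout I write $T_C$ for the Bellman operator associated with the POMDP that agent $n$ faces when agent $-n$ uses cooperation region $C$; by Theorem~\ref{ST-POMDP} and boundedness of the rewards (conditional mutual information over finite alphabets is bounded and $c^n$ is constant), $T_C$ is a $\delta$-contraction, so $V^{n,*,C}$ is its unique fixed point and the iterates $V_0^C\equiv 0$, $V_{k+1}^C=T_C V_k^C$ converge to it. The only dependence of $T_C$ on $C$ is through (i) the reward $\tilde r^n(s,\pi)=\mathds{1}_{\{s=1,\pi\in C\}}\,I(X_t;Y^{-n}_t|Y^n_t,i^n_t)$ of \eqref{GT_Reward_aux}, and (ii) the opponent's action, which in the cooperation phase $s=1$ equals $a^{-n}=\mathds{1}_{\{\pi\in C\}}$ by Lemma~\ref{sigma_common} and the definition \eqref{sigma_ind1}, and which enters the kernel in \eqref{continuation_value}.

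First I would record two $C$-independent facts: that $V_k^C(0,\cdot)\equiv 0$ for every $k$ (hence $V^{n,*,C}(0,\cdot)\equiv 0$), which follows from \eqref{BE_s0c} and a one-line induction, since $s=0$ forces $s'=0$ and makes $a^n=0$ optimal; and that $I(X_t;Y^{-n}_t|Y^n_t,i^n_t)\ge 0$, so $\tilde r^n\ge 0$. The heart of the argument is the claim that $C\subseteq C'$ implies $V_k^C\le V_k^{C'}$ pointwise for all $k$, proved by induction on $k$. The base case is trivial, and on $s=0$ both sides vanish, so the inductive step reduces to $s=1$, which I split according to the position of $\pi$.

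The main obstacle, and the reason a plain ``larger reward'' argument does not suffice, is that enlarging $C$ alters not only the reward but also the transition dynamics through $a^{-n}$; the case split neutralizes this. When $\pi\in C$ (hence $\pi\in C'$), the opponent cooperates in both worlds, so the reward is the same value $I(X_t;Y^{-n}_t|Y^n_t,i^n_t)$ and the kernel in \eqref{continuation_value} is identical (it depends on $C$ only through $\sigma^{-n}(s,\pi)$, which here places all mass on $a^{-n}=1$ in both worlds); thus the $Q$-values for each $a^n$ differ only through the continuation terms $V_k^C$ versus $V_k^{C'}$, and the induction hypothesis yields $V_{k+1}^C(1,\pi)\le V_{k+1}^{C'}(1,\pi)$ after taking maxima. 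When $\pi\in C'\setminus C$, the $C$-world opponent refuses to cooperate, forcing $s'=0$ and $\tilde r^n=0$, so a direct evaluation gives $V_{k+1}^C(1,\pi)=0$, whereas in the $C'$-world the action $a^n=0$ already secures the nonnegative reward $I(X_t;Y^{-n}_t|Y^n_t,i^n_t)\ge 0$, so $V_{k+1}^{C'}(1,\pi)\ge 0=V_{k+1}^C(1,\pi)$. When $\pi\notin C'$ both values are $0$. Letting $k\to\infty$ gives $V^{n,*,C}\le V^{n,*,C'}$, which is part~1.

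For part~2 I would observe that for $\pi\in C\subseteq C'$, conditioning on $s=1$ and $a^n=1$ makes $a^{-n}=1$ and $s'=1$ in both worlds, so the conditional law of the successor belief $\pi'=f(\pi,y^n,y^{-n},1)$ is the same under $C$ and $C'$ (once $a^{-n}=1$ is fixed, the update draws $x\sim\pi$, $y^n,y^{-n}\sim\mathbb{P}(\cdot|x)$, none of which depends on the region). Hence the two expectations in \eqref{BE_increasing} integrate one and the same distribution against $V^{n,*,C}(1,\cdot)$ and $V^{n,*,C'}(1,\cdot)$ respectively, and part~1 applied pointwise delivers the inequality.
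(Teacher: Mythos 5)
Your proposal is correct, and part 1 is proved by a genuinely different route than the paper's. The paper fixes the best-response region $C^*=O^n(C)$ (invoking Theorem \ref{Closeness} to identify $V^{n,*,C}$ with $V^{n,C^*,C}$), partitions trajectories by the first exit time of the belief from $C^*$ via the events $T_k(C^*)$, and argues path-by-path: up to the exit time both agents cooperate so the truncated trajectory distributions coincide under opponent regions $C$ and $C'$, the reward at the exit step is weakly larger under $C'$ because $C\subseteq C'$, and everything after the exit contributes zero; this gives the chain $V^{n,*,C}=V^{n,C^*,C}\leq V^{n,C^*,C'}\leq V^{n,*,C'}$. You instead run a monotonicity induction on value iteration, $T_C^k 0\leq T_{C'}^k 0$, with the case split $\pi\in C$, $\pi\in C'\setminus C$, $\pi\notin C'$ neutralizing the fact that enlarging the region changes both the reward and the transition kernel (through $a^{-n}$) --- the same difficulty the paper's coupling handles by truncation at the exit time. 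Your route is more self-contained: it works directly with the optimal value functions through the Bellman equation of Theorem \ref{ST-POMDP}, needs neither Theorem \ref{Closeness} nor the implicit measure-theoretic bookkeeping of equiprobable truncated trajectories, and rests only on standard contraction arguments for discounted problems with bounded rewards. The paper's argument, in exchange, offers a direct behavioral interpretation (the same realized play fares at least as well against a more cooperative opponent) without appealing to value-iteration machinery. Your part 2 is essentially identical to the paper's: both observe that for $\pi\in C$ the law of $\pi'$ given $s=1$, $a^n=1$ is region-independent and integrate the pointwise inequality from part 1.
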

\begin{Lem}
\label{Lem_7}
Let $C$ be an equilibrium region. Then $\forall \pi \in C$ the following inequality holds:
\begin{align}
\delta E\{V^{n,*,C}(s'=1,\pi')\vert \pi, s=1,a^n=1\} - c^n \geq 0.
\end{align}
\end{Lem}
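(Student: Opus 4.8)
The plan is to unwind the definition of an equilibrium region and read off the claimed inequality directly from the optimality condition encoded in the oracle $O^n$. Since $C$ is an equilibrium region, the pair $(C,C)$ is in cooperation equilibrium, so the defining relation \eqref{equilibrium_region} gives $C=O^n(C)$ for each agent $n$. Consequently every $\pi\in C$ satisfies the membership condition of $O^n(C)$, namely
\[
Q^{n,*,C}(s=1,\pi,a^n=1)\ \geq\ Q^{n,*,C}(s=1,\pi,a^n=0).
\]

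First I would expand the two $Q$-values. For the non-sharing action the flag resets to $s'=0$ by \eqref{s_t-evolution}, and by the analogue of \eqref{BE_s0c} the optimal value obeys $V^{n,*,C}(s=0,\cdot)=0$; this yields $Q^{n,*,C}(s=1,\pi,a^n=0)=\tilde r(1,\pi)$, exactly as in \eqref{Q_2}. For the sharing action the definition of $Q^{n,*,C}$ gives
\[
Q^{n,*,C}(s=1,\pi,a^n=1)=\tilde r(1,\pi)-c^n+\delta\,\mathbb{E}\{V^{n,*,C}(s',\pi')\vert s=1,\pi,a^n=1\},
\]
mirroring \eqref{Q_3}.

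Next I would substitute both expansions into the oracle inequality and cancel the common term $\tilde r(1,\pi)$, which leaves $\delta\,\mathbb{E}\{V^{n,*,C}(s',\pi')\vert s=1,\pi,a^n=1\}-c^n\geq 0$. To bring this to the stated form I would argue that the continuation expectation is carried entirely by the event $s'=1$: starting from $s=1$ with $a^n=1$, the flag transitions to $s'=1$ precisely when $a^{-n}=1$ and to $s'=0$ when $a^{-n}=0$ (again by \eqref{s_t-evolution}), and on the latter event the term $V^{n,*,C}(s'=0,\cdot)=0$ contributes nothing. Hence $\mathbb{E}\{V^{n,*,C}(s',\pi')\vert s=1,\pi,a^n=1\}=\mathbb{E}\{V^{n,*,C}(s'=1,\pi')\vert s=1,\pi,a^n=1\}$, and the inequality becomes the claim.

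The argument is essentially a single substitution, so there is no genuinely difficult step; the two points that require care are (i) reading $C=O^n(C)$ off the definition of an equilibrium region, so that the oracle inequality is guaranteed at every $\pi\in C$, and (ii) the reduction of the continuation expectation to the $s'=1$ branch, which rests on the vanishing of the value function at $s=0$. Everything else is bookkeeping with the previously derived Bellman and $Q$-function expressions.
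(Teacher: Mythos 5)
Your proposal is correct and follows essentially the same route as the paper's proof: invoke $C=O^n(C)$ to get the oracle inequality at every $\pi\in C$, expand the two $Q$-values via \eqref{Q_2}--\eqref{Q_3}, cancel $\tilde r(1,\pi)$, and use $V^{n,*,C}(s=0,\cdot)=0$ to kill the $s'=0$ contributions. Your explicit justification that the $a^n=1$ continuation expectation reduces to the $s'=1$ branch is a point the paper leaves implicit, but it is the same argument.
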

It is trivially seen from the definition of a CGT strategy that the empty set $\emptyset$ is always an equilibrium region and thus $\emptyset\subset\mathcal{E}$. Furthermore, $\mathcal{E}$ is a partially-ordered set under set inclusion and it is easily seen that any chain $C_1 \subseteq C_2 \subseteq \dots$ where $\forall i,\ C_i\in\mathcal{E}$ has an upper bound (namely $\bigcup\limits_{i}^{\infty} C_i$). Thus, by Zorn's Lemma there exists at least one maximal element. We now argue that in fact there exists a {\em unique maximal element} which we will call it the {\em maximal equilibrium region}.
\begin{Thm}
\label{Th3}
There exists a unique maximal equilibrium region $\Pi^* \in \mathcal{E}$.  Moreover, the strategy profile $(\sigma^{n,\Pi^*},\sigma^{-n,\Pi^*})$ is optimal in the sense that $V^{n,\Pi^*,\Pi^*}(s,\pi) \geq V^{n,C,C}(s,\pi^n),\ \forall C \in \mathcal{E},n$.
\end{Thm}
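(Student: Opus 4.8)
The plan is to exhibit the maximal equilibrium region explicitly as the union of all equilibrium regions, rather than merely invoking Zorn's Lemma. Set $\Pi^* = \bigcup_{C \in \mathcal{E}} C$. By construction $\Pi^*$ contains every equilibrium region, so once I show $\Pi^* \in \mathcal{E}$ it is automatically the greatest element of $(\mathcal{E},\subseteq)$ and hence the unique maximal one. The optimality claim will then follow from the monotonicity of the optimal value function in the opponent's cooperation region (Theorem \ref{Value_function_increasing_c}) together with the fact that at an equilibrium region each agent's CGT strategy is already a best response.

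The core of the argument is to verify that $\Pi^*$ is a fixed point of the oracle, i.e.\ $O^n(\Pi^*) = \Pi^*$ for $n \in \{1,2\}$. One inclusion, $O^n(\Pi^*) \subseteq \Pi^*$, is immediate from part $1)$ of Proposition \ref{Prp_regions}. For the reverse inclusion I would first reduce membership in the oracle to a single scalar inequality: using \eqref{Q_2}, \eqref{Q_3} and $V^n(s=0,\cdot)=0$ from \eqref{BE_s0c}, one has
\begin{align*}
&Q^{n,*,C}(s=1,\pi,a^n=1) - Q^{n,*,C}(s=1,\pi,a^n=0) \\
&= \delta E\{V^{n,*,C}(s'=1,\pi')\vert \pi,s=1,a^n=1\} - c^n,
\end{align*}
so that $\pi \in O^n(C)$ is equivalent to this quantity being nonnegative. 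Now take any $\pi \in \Pi^*$; then $\pi \in C$ for some equilibrium region $C$, and Lemma \ref{Lem_7} gives $\delta E\{V^{n,*,C}(s'=1,\pi')\vert \pi,s=1,a^n=1\} - c^n \geq 0$. Since $C \subseteq \Pi^*$ and $\pi \in C$, part $2)$ of Theorem \ref{Value_function_increasing_c} yields $E\{V^{n,*,C}(s'=1,\pi')\vert \pi,s=1,a^n=1\} \leq E\{V^{n,*,\Pi^*}(s'=1,\pi')\vert \pi,s=1,a^n=1\}$, whence the same scalar inequality holds with $\Pi^*$ in place of $C$. This says exactly $\pi \in O^n(\Pi^*)$, so $\Pi^* \subseteq O^n(\Pi^*)$ and the fixed point is established for both agents; therefore $\Pi^* \in \mathcal{E}$.

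Maximality and uniqueness are then formal: any $C \in \mathcal{E}$ satisfies $C \subseteq \Pi^*$ by definition of the union, and $\Pi^*$ itself lies in $\mathcal{E}$, so $\Pi^*$ is the greatest element of $\mathcal{E}$ and a fortiori the unique maximal one. For optimality, fix $C \in \mathcal{E}$. Because $C$ is an equilibrium region, $C = O^n(C)$ means $\sigma^{n,C}$ is a best response against $\sigma^{-n,C}$ within the CGT class, which is optimal overall by Theorems \ref{ST-POMDP} and \ref{Closeness}; hence $V^{n,C,C} = V^{n,*,C}$, and likewise $V^{n,\Pi^*,\Pi^*} = V^{n,*,\Pi^*}$. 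Since $C \subseteq \Pi^*$, part $1)$ of Theorem \ref{Value_function_increasing_c} gives $V^{n,*,C}(s,\pi) \leq V^{n,*,\Pi^*}(s,\pi)$, and chaining these relations yields $V^{n,C,C}(s,\pi) \leq V^{n,\Pi^*,\Pi^*}(s,\pi)$ for all $s,\pi,n$, as required.

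I expect the reverse oracle inclusion $\Pi^* \subseteq O^n(\Pi^*)$ to be the main obstacle: it is the only place where the (possibly uncountable) union could fail to be stable under $O^n$, and it hinges on the precise compatibility between the \emph{local} equilibrium certificate of Lemma \ref{Lem_7}, which only knows about the small region $C$, and the monotonicity of the continuation value as the opponent's region is enlarged to $\Pi^*$. Everything else is bookkeeping once the scalar reformulation of oracle membership is in place.
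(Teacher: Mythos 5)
Your proposal is correct, but it organizes the argument differently from the paper. The paper first proves an auxiliary result (Lemma \ref{Lem_8}): $\mathcal{E}$ is closed under \emph{pairwise} unions, shown by contradiction --- if $\pi\in\Pi_1$ but $\pi\notin O^n(\Pi_1\cup\Pi_2)$, then the certificate of Lemma \ref{Lem_7} for $\Pi_1$ combined with part $2)$ of Theorem \ref{Value_function_increasing_c} is violated; it then invokes Zorn's Lemma for the existence of a maximal element and derives uniqueness by a maximality contradiction, while the optimality claim is attributed in one sentence to Theorem \ref{Value_function_increasing_c}. You instead exhibit the maximal region explicitly as $\Pi^*=\bigcup_{C\in\mathcal{E}}C$ and verify that it is a fixed point of $O^n$. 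The technical heart is identical to the paper's --- Lemma \ref{Lem_7} gives the inequality $\delta E\{V^{n,*,C}(s'=1,\pi')\vert\pi,s=1,a^n=1\}-c^n\geq 0$ on the small region, and part $2)$ of Theorem \ref{Value_function_increasing_c} transports it to the enlarged region --- but you apply it to the arbitrary union rather than to a union of two equilibrium regions. Your route buys three things: it avoids Zorn's Lemma entirely; it yields the stronger conclusion that $\mathcal{E}$ has a \emph{greatest} element rather than merely a unique maximal one; and it closes a point the paper leaves as ``easily seen,'' namely that the asserted upper bound $\bigcup_i C_i$ of a chain actually belongs to $\mathcal{E}$, which in fact requires precisely the argument you give. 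Your optimality chain ($V^{n,C,C}=V^{n,*,C}$ because $O^n(C)=C$ together with Theorems \ref{ST-POMDP} and \ref{Closeness}, then part $1)$ of Theorem \ref{Value_function_increasing_c}, then $V^{n,*,\Pi^*}=V^{n,\Pi^*,\Pi^*}$) spells out what the paper compresses. One minor notational caveat: your scalar criterion for membership in $O^n(C)$, written with $s'=1$ inside the expectation, presumes the opponent cooperates at $\pi$ (so that $s'=1$ with probability one); since you only invoke it at $\pi\in C$ and $\pi\in\Pi^*$, where this holds, the argument is sound, and this matches the paper's own usage in Lemma \ref{Lem_7} and Theorem \ref{Value_function_increasing_c}.
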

Next we describe a theoretical algorithmic scheme for calculating $\Pi^*$. For simplicity define the operator $$F^n(C)=O^n(O^{-n}(C)).$$
Clearly $\mathcal{E}$ is also the set of all fixed points of $F$. Also it follows from part $1)$ of Proposition \ref{Prp_regions} that $F^n(C)\subseteq C$, $\forall C\subseteq \mathcal{P}(\Delta(\mathcal{X}))$.\\
\noindent\rule{\linewidth}{0.5mm} \\[-0.5mm]
\textbf{Iterative Refinement Algorithm (ItRA)}\\[-2mm]
\rule{\linewidth}{0.5mm}
Input: $k$ (number of iterations), $\Pi^{n,c}=\Delta(\mathcal{X})$
\begin{itemize}
\item for $k$ iterations do:
\begin{itemize}
\item $\Pi^{n,c}\leftarrow F^n(\Pi^{n,c})$
\item if $\Pi^{n,c} = F^n(\Pi^{n,c})$, then halt and return $\Pi^{n,c}$
\end{itemize}
\item Return $\Pi^{n,c}$
\end{itemize}
\rule{\linewidth}{0.5mm}\\[-0.5mm]
The following result states that the operator $O^n(C)$ always contains the maximal equilibrium region $\Pi^*$ and as a result, $\Pi^* \subseteq ItRA(k)$ for every $k>0$.
\begin{Prp}
\label{Lem_9}
If $\Pi^*\subseteq C$, then the following are true:
\begin{enumerate}
    \item $\Pi^*\subseteq O^n(C)$.
    \item $\forall k>0$, $\Pi^* \subseteq ItRA(k)$ and $ItRA(k+1)\subseteq ItRA(k)$.
\end{enumerate}
\end{Prp}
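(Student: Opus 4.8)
The plan is to treat part 1) as the technical core and to bootstrap part 2) from it by a short induction together with the contractivity property $F^n(C)\subseteq C$. Everything rests on first rewriting the oracle membership $\pi\in O^n(C)$ as a single scalar inequality on the discounted continuation value. Using the definition of $O^n$, the formula for $Q^{n,*,C}$, and the facts that $V^{n,*,C}(s=0,\cdot)=0$ (Eq. \eqref{BE_s0c}) and that $s'=1$ forces $a^n=a^{-n}=1$ (Eq. \eqref{s_t-evolution}), I would show that $\pi\in O^n(C)$ is equivalent to
\[
\delta\, \mathbb{E}\{V^{n,*,C}(s'=1,\pi')\mid \pi,s=1,a^n=1\}-c^n\geq 0 ,
\]
since the deviation branch $a^n=0$ contributes the bare reward $\tilde r(1,\pi)$ (cf. \eqref{Q_2}, \eqref{Q_3}). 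This is precisely the quantity bounded below in Lemma \ref{Lem_7} and made monotone in the cooperation region by Theorem \ref{Value_function_increasing_c}.

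For part 1), fix $\pi\in\Pi^*$. Since $\Pi^*\in\mathcal{E}$ is an equilibrium region, Lemma \ref{Lem_7} applied with $C=\Pi^*$ yields $\delta\, \mathbb{E}\{V^{n,*,\Pi^*}(s'=1,\pi')\mid\pi,s=1,a^n=1\}-c^n\geq 0$. Because $\Pi^*\subseteq C$ and $\pi\in\Pi^*$, part (2) of Theorem \ref{Value_function_increasing_c} (with the smaller region $\Pi^*$ and the larger region $C$) gives $\mathbb{E}\{V^{n,*,\Pi^*}(s'=1,\pi')\mid\pi,s=1,a^n=1\}\leq\mathbb{E}\{V^{n,*,C}(s'=1,\pi')\mid\pi,s=1,a^n=1\}$. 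Chaining the two inequalities shows the continuation value associated with $C$ also satisfies the scalar inequality, hence $\pi\in O^n(C)$ by the equivalence above. As $\pi\in\Pi^*$ was arbitrary, $\Pi^*\subseteq O^n(C)$; note that this argument is symmetric and therefore holds for each agent.

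For part 2), I would argue by induction on $k$ that $\Pi^*\subseteq ItRA(k)=(F^n)^k(\Delta(\mathcal{X}))$. The base case $\Pi^*\subseteq\Delta(\mathcal{X})$ is immediate. For the inductive step, assume $\Pi^*\subseteq ItRA(k)$; applying part 1) to agent $-n$ gives $\Pi^*\subseteq O^{-n}(ItRA(k))$, and applying part 1) again to agent $n$ with $C=O^{-n}(ItRA(k))$ yields $\Pi^*\subseteq O^n(O^{-n}(ItRA(k)))=F^n(ItRA(k))=ItRA(k+1)$. The nesting $ItRA(k+1)\subseteq ItRA(k)$ follows directly from $F^n(C)\subseteq C$, itself a consequence of part (1) of Proposition \ref{Prp_regions}: $O^n(O^{-n}(C))\subseteq O^{-n}(C)\subseteq C$; instantiating $C=ItRA(k)$ gives the claim. (The early-halt case is harmless, since halting occurs at a fixed point of $F^n$, for which further application is the identity.)

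The main obstacle is the first step, namely establishing the clean equivalence between $\pi\in O^n(C)$ and the continuation-value inequality displayed above; once the bookkeeping over $s'$ is carried out correctly, Lemma \ref{Lem_7} and part (2) of Theorem \ref{Value_function_increasing_c} plug in immediately, and part 2) reduces to a formal induction with no further analytic content.
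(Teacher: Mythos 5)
Your proof is correct and uses essentially the same argument as the paper for part 1 --- the same two ingredients, Lemma \ref{Lem_7} applied to $\Pi^*$ and the monotonicity of the continuation value in Eq. \eqref{BE_increasing} of Theorem \ref{Value_function_increasing_c} --- with the only cosmetic difference that the paper phrases it as a proof by contradiction (assume $\pi\in\Pi^*$ but $\pi\notin O^n(C)$) while you argue the contrapositive directly via the scalar inequality $\delta\,\mathbb{E}\{V^{n,*,C}(s'=1,\pi')\mid\pi,s=1,a^n=1\}\geq c^n$. Your induction for part 2 and the handling of early halting are also sound; the paper leaves this part implicit, proving only part 1 explicitly.
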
%
Note that the above result implies that if ItRA halts early, then the computed region is $\Pi^*$. If not, the algorithm computes an upper bound that becomes finer as $k$ increases. We wish to point out that the algorithm (as well as the rest of our results except for Corollary \ref{lem_simplex}) is applicable to setups with more general reward functions of the form discussed in Section \ref{discussion_rewards}.

To prove that cooperation can indeed be sustained in the infinite horizon, it remains to show that there are appropriate choices of parameters $c^n, \delta$ for which the maximal region $\Pi^*$ is non-empty. To this end, we give the following definition.
\begin{Def}
For given state transition and observation models of the agents, we say a non-empty set $C\subseteq \Delta(\mathcal{X})$ is {\em absorbing} if it holds that, if $\pi \in C$ , we have $\pi'=f(\pi,y^n,y^{-n},a^{-n}=1) \in C$, for all observations $y^n\in\mathcal{Y}^n,y^{-n}\in\mathcal{Y}^{-n}$. 
Moreover, a {\em positive absorbing set} $C$ is an absorbing set for which
\begin{align}
\label{rineq}
    &r^{C}_{inf}=\min_{n\in\{1,2\}}\underset{\pi\in C}\inf\{\tilde{r}^n(s=1,\pi)\} > 0.
\end{align}
\end{Def}
Note that if $C$ is absorbing and $\pi_t^n \in C$, then $\pi^n_{t+k} \in C$ for every $k \in \mathbb{N}$ as long as $s_{t+k}=1$. In other words, the set $C$ {\em traps} the belief, in the sense that while no deviation from cooperation has taken place up to time $t$, and, the common belief of the agents lies in $C$ at $t$, then the common belief of the agents will continue to lie in $C$ as long as agents continue to share their observations.

\begin{Thm}
\label{bound_region}
The following are true:
\begin{enumerate}
\item Given a discount factor $\delta\in(0,1)$, a state transition kernel and the observation models for the agents, a positive absorbing set is an equilibrium region given that 
\begin{align}
\label{condition_Theorem_7_1}
    & c^n = \tilde{\epsilon}
    r^{C}_{inf},\quad n=1,2,
\end{align}
where $\delta\geq\tilde{\epsilon} >0$.
\item Let us define
    \begin{align}
    \label{rineqr}
        &\lambda_{min}(x')=\underset{x\in X}\min\{\mathbb{P}(x'\vert x)\}\\
        &\lambda_{max}(x')=\underset{x\in X}\max\{\mathbb{P}(x'\vert x)\}\\
        &\Lambda = \bigotimes\limits_{x'\in X} [\lambda_{min}(x'), \lambda_{max}(x')].
    \end{align}
    The region $C= \Lambda \cap \Delta(\mathcal{X})$ is absorbing. Further, if it is positive absorbing, then given a discount factor $\delta\in(0,1)$, $\exists$ $c^n$, $n=1,2$  such that $\Pi^* = \Delta(\mathcal{X})$.
\end{enumerate}
\end{Thm}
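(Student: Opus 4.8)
The plan is to reduce the assertion ``$C$ is an equilibrium region'' to the single scalar inequality that characterizes membership in the oracle $O^n$. Combining \eqref{Q_2} and \eqref{Q_3}, the common term $\tilde r(1,\pi)$ cancels, so $\pi\in O^n(C)$ precisely when $\delta\,\mathbb{E}^{\sigma^{-n,C}}\{V^{n,*,C}(s'=1,\pi')\mid \pi,s=1,a^n=1\}\geq c^n$. Since part $1)$ of Proposition \ref{Prp_regions} already gives $O^n(C)\subseteq C$ for every $C$, to obtain $C=O^n(C)$ for both $n$ (which, by the definition of a cooperation equilibrium, makes $C$ an equilibrium region) it suffices to verify this inequality for all $\pi\in C$. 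On the cooperation phase $s=1$ both agents have shared every past observation, so the beliefs are common, $\pi^n_t=\pi^{-n}_t=\pi_t$, and the simplifications of Lemma \ref{sigma_common} apply.

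For part $1)$ I would first produce a lower bound on the value function on $C$. Consider the feasible policy ``cooperate while $s=1$ and $\pi\in C$.'' Starting from any $\pi\in C$ with $s=1$, since $\Pi^{-n,c}=C$ the opponent also cooperates, so $a^{-n}=1$, $s'=1$, and by absorption $\pi'=f(\pi,y^n,y^{-n},a^{-n}=1)\in C$; the stage reward is $\tilde r^n(1,\pi)-c^n\geq r^{C}_{inf}-c^n$ by \eqref{rineq}. Iterating, the belief is trapped in $C$ forever, and this policy yields $\sum_{t\geq0}\delta^t(r^{C}_{inf}-c^n)$, whence $V^{n,*,C}(s=1,\pi')\geq (r^{C}_{inf}-c^n)/(1-\delta)$ for every $\pi'\in C$. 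Since from $\pi\in C$ the successor $\pi'$ again lies in $C$ with $s'=1$ almost surely, the continuation expectation inherits this bound, and substituting $c^n=\tilde\epsilon\, r^{C}_{inf}$ the target inequality $\delta(r^{C}_{inf}-c^n)/(1-\delta)\geq c^n$ reduces, after dividing by $r^{C}_{inf}>0$, exactly to $\delta\geq\tilde\epsilon$, which is the hypothesis. Hence $C\subseteq O^n(C)$ and $C$ is an equilibrium region.

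For part $2)$ the absorption of $C=\Lambda\cap\Delta(\mathcal{X})$ is a convexity estimate: writing the update as $\pi'(x')=\sum_x\mathbb{P}(x'\mid x)\tilde\pi(x)$ with $\tilde\pi$ the two-observation Bayesian posterior over $X_t$, each coordinate is a convex combination of the numbers $\mathbb{P}(x'\mid x)\in[\lambda_{min}(x'),\lambda_{max}(x')]$ and therefore lies in the same interval, so $\pi'\in\Lambda$, and being a distribution $\pi'\in C$. The crucial point is that this holds for \emph{every} starting $\pi\in\Delta(\mathcal{X})$, not only for $\pi\in C$: a single cooperative exchange always traps the belief inside the box $C$. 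To get $\Pi^* = \Delta(\mathcal{X})$ I would pick $c^n=\tilde\epsilon\, r^{C}_{inf}$ with $\tilde\epsilon\in(0,\delta]$ (possible since positive absorption gives $r^{C}_{inf}>0$) and show $\Delta(\mathcal{X})$ itself is an equilibrium region. For arbitrary $\pi\in\Delta(\mathcal{X})$, playing $a^n=1$ against $\sigma^{-n,\Delta(\mathcal{X})}$ forces $a^{-n}=1$, $s'=1$, and $\pi'\in C$; by the monotonicity of part $1)$ of Theorem \ref{Value_function_increasing_c}, $V^{n,*,\Delta(\mathcal{X})}(1,\pi')\geq V^{n,*,C}(1,\pi')\geq(r^{C}_{inf}-c^n)/(1-\delta)$. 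The same arithmetic as in part $1)$ then yields the oracle inequality for all $\pi$, so $O^n(\Delta(\mathcal{X}))=\Delta(\mathcal{X})$ for both $n$; being the whole simplex, it is the maximal equilibrium region.

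The main obstacle is not any single computation but correctly threading the two value-function comparisons: the explicit ``always cooperate'' lower bound is only available once the belief sits in the absorbing set, so in part $2)$ one must first use the one-step trapping property to move into $C$ and then invoke monotonicity of $V^{n,*,C}$ in the opponent's region to transfer the bound to $V^{n,*,\Delta(\mathcal{X})}$. A secondary point to handle with care is justifying that on the $s=1$ phase the two agents' beliefs coincide, so that $r^{C}_{inf}$ and the absorption hypothesis refer to the same common $\pi$; this is precisely what makes the cancellation of $\tilde r(1,\pi)$ in the oracle condition legitimate.
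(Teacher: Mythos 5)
Your proposal is correct and follows essentially the same route as the paper's proof: the same reduction of membership in $O^n(C)$ to the inequality $\delta\,E\{V^{n,*,C}(s'=1,\pi')\vert\pi,s=1,a^n=1\}\geq c^n$, the same always-cooperate lower bound $(r^{C}_{inf}-c^n)/(1-\delta)$ exploiting absorption, the same arithmetic reducing everything to $\delta\geq\tilde\epsilon$, and the same convex-combination argument showing $\Lambda\cap\Delta(\mathcal{X})$ is absorbing with one-step trapping from any $\pi\in\Delta(\mathcal{X})$. The only divergence is the final step of part 2: the paper asserts the \emph{equality} $E\{V^{n,*,\Delta(\mathcal{X})}(s'=1,\pi')\vert\pi,s=1,a^n=1\}=E\{V^{n,*,C}(s'=1,\pi')\vert\pi,s=1,a^n=1\}$, whereas you invoke only the inequality from part 1 of Theorem \ref{Value_function_increasing_c}, which suffices and is arguably cleaner since it reuses an already-established monotonicity result instead of a value-function identity that itself requires justification.
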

The above result gives us a means to prove lower bounds for $\Pi^*$ and hence non-emptiness. In particular, identifying a positive absorbing set for the problem at hand gives us an equilibrium region and proves that cooperation can be sustained in the infinite horizon. Moreover, we show that at least one absorbing set always exists and by ensuring this is also positive absorbing, 
we get that $\Pi^*$ is non-empty. In the following result we utilize the mutual information utility function to provide conditions 
under which this set is positive absorbing.
\begin{Cor}
\label{lem_simplex}
Let $C= \Lambda \cap \Delta(\mathcal{X})$ as above and let $$\pi^n_{min} = \underset{\pi\in C}{\arg\min}\{\tilde{r}^n(s=1,\pi)\},\quad n=1,2. $$
If $X\in\mathcal{X}$ and $Y^{-n}\in\mathcal{Y}^{-n}$ are conditionally dependent given $Y^n\in\mathcal{Y}^n, \pi^n_{min}$ for $n=1,2$ 
then $C$ is positive absorbing.
\end{Cor}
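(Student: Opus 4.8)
The plan is to reduce the statement to two elementary properties of conditional mutual information: that it is a continuous, non-negative function of the belief $\pi$, and that it vanishes exactly when the corresponding conditional independence holds. First I would make explicit that, for $s=1$ and $\pi\in C$ (with $C$ playing the role of the cooperation region, so the indicator in \eqref{GT_Reward_aux} equals $1$), the reward reduces to $\tilde r^n(s=1,\pi)=I(X_t;Y^{-n}_t\mid Y^n_t;\pi)$, a quantity depending on $\pi$ alone. This follows from the conditional independence of the observations in \eqref{obs_equation}: given the belief $\pi$ over $X_t$, the joint law of $(X_t,Y^n_t,Y^{-n}_t)$ factorizes as $\pi(x)\,\mathbb{P}(y^n\mid x)\,\mathbb{P}(y^{-n}\mid x)$, so every marginal and conditional entering $H(X_t\mid Y^n_t)-H(X_t\mid Y^n_t,Y^{-n}_t)$ is determined by $\pi$ together with the fixed observation kernels.

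Next I would establish that the infimum defining $r^{C}_{inf}$ is actually attained, so that the minimizers $\pi^n_{min}$ appearing in the hypothesis are well defined. The set $C=\Lambda\cap\Delta(\mathcal{X})$ is the intersection of the closed box $\Lambda$ with the compact simplex, hence compact; it is non-empty since, for any $x_0$, the row $\mathbb{P}(\cdot\mid x_0)$ satisfies $\lambda_{min}(x')\le\mathbb{P}(x'\mid x_0)\le\lambda_{max}(x')$ for every $x'$ and is a probability vector, so it lies in $C$. The map $\pi\mapsto I(X_t;Y^{-n}_t\mid Y^n_t;\pi)$ is continuous, because the joint law is a linear function of $\pi$ and entropy is continuous on the simplex under the usual $0\log 0=0$ convention. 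By the extreme value theorem the minimum over the compact set $C$ is attained, so $\pi^n_{min}=\arg\min_{\pi\in C}\tilde r^n(s=1,\pi)$ exists for $n=1,2$.

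The core step is then immediate from the hypothesis. Conditional mutual information is always non-negative and equals zero if and only if $X_t$ and $Y^{-n}_t$ are conditionally independent given $Y^n_t$; under the assumption that they are conditionally \emph{dependent} given $Y^n_t,\pi^n_{min}$, we obtain $\tilde r^n(s=1,\pi^n_{min})=I(X_t;Y^{-n}_t\mid Y^n_t;\pi^n_{min})>0$. Since $\pi^n_{min}$ is the minimizer, $\inf_{\pi\in C}\tilde r^n(s=1,\pi)=\tilde r^n(s=1,\pi^n_{min})>0$ for each $n$, and the minimum of the two strictly positive values is strictly positive. Hence $r^{C}_{inf}>0$, which together with the fact that $C$ is absorbing (Theorem \ref{bound_region}, part $2$) shows that $C$ is positive absorbing.

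I expect the only genuine subtlety to be the justification that the infimum is attained, namely the compactness of $C$ and the continuity of the belief-to-mutual-information map, since the hypothesis is stated precisely at the minimizers $\pi^n_{min}$ and would be vacuous if these failed to exist. Everything else is a direct application of the standard characterization of the equality case in the non-negativity of conditional mutual information, which is exactly why this corollary, unlike the other results, relies on the specific mutual-information utility rather than a general reward.
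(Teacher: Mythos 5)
Your proof is correct and follows essentially the same route as the paper's: compactness of $C$ together with continuity of $\tilde{r}^n$ guarantees the minimizers $\pi^n_{min}$ exist, the characterization that conditional mutual information vanishes exactly under conditional independence yields $r^{C}_{inf}>0$ from the dependence hypothesis, and combining this with the absorbing property from part 2 of Theorem \ref{bound_region} gives positive absorbing. Your extra details (non-emptiness of $C$ via the rows of the transition kernel, the explicit continuity argument, and the handling of the indicator in \eqref{GT_Reward_aux}) simply flesh out steps the paper states without elaboration.
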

The result is true due to the following. Note that $C$ is compact as an intersection of compact sets and hence the minimum over $C$ is well defined since $\tilde{r}$ is continuous. If $X$ and $Y^{-n}$ are conditionally dependent given $Y^n,\pi^n_{min}$ for $n=1,2$, then $r^C_{inf}>0$, since
\begin{align}
\label{m_inf}
    \tilde{r}^n(s=1,\pi^n_{min})=0\Leftrightarrow
    I(X;Y^{-n}\vert Y^n
    )=0,
\end{align}\noindent
is true if and only if $X$ and $Y^{-n}$ are conditionally independent given $Y^n,\pi^n_{min}$ \cite{Gallager}. The distribution of state $X$ in \eqref{m_inf} is given by $\pi^n_{min}$.

Note that if $X$ and $Y^{-n}$ are conditionally independent given $Y^n, \pi^n$, this implies that no information is conveyed from $Y^{-n}$ about state $X$. For instance, this can happen if the observation model of agent $n$ is fully informative (i.e., deterministically reveals the state $X$) or if $Y^{-n}$ is {\em uninformative} (i.e., $\mathbb{P}(y^{-n}\vert x)=\mathbb{P}(y^{-n}\vert x')$ for all $x\neq x'$ and for all $y^{-n}\in\mathcal{Y}^{-n}$). 
Note also that if $\pi^n$ is a vertex of $\Delta(\mathcal{X})$ then $X$ and $Y^{-n}$ are conditionally independent given $\pi^n$. This implies that the transition kernel must be positive (all elements strictly greater than $0$) for Corollary \ref{lem_simplex} to hold.
\begin{Rem}
Part $1)$ of Theorem \ref{bound_region} is intuitively linked to the economic literature on repeated games \cite{Mailath_2006}. Given a positive absorbing set $C$, define a repeated game $\mathcal{L}$ with payoff matrix given by Table \ref{payoff_matrix}.  
If $\delta,c^n$ for $n=1,2$, 
\begin{table}[!h]\small
\caption{Payoff matrix of the game $\mathcal{L}$. In each entry of the table, the first payoff corresponds to the row agent $n=1$, while the second payoff corresponds to the agent $n=2$. $r^C_{inf}$ is given by \eqref{rineq}.}\label{payoff_matrix}
\centering
\begin{tabular}{| c | c | c |}\hline
\mybackslash{$\,A^1_t$\ }{$A^{2}_t$} & $0$ & $1$ \\\hline\hline
$0$ & $(0,0)$ & $(r^C_{inf},-c^2)$ \\\hline
$1$ & $(-c^1,r^C_{inf})$ & $(r^C_{inf}-c^1,r^C_{inf}-c^2)$ \\\hline
\end{tabular}
\end{table}
 are such that under the classical GT strategy  cooperation is sustained in $\mathcal{L}$, then for such $\delta,c^n$,  cooperation is also sustained in DISG, if  $\tilde{\pi}_0\in C$, under the CGT strategy where the agents cooperate in $C$. This is because for both agents the payoffs associated with cooperation in DISG are greater or equal than the ones in $\mathcal{L}$. By standard results for repeated games applied in $\mathcal{L}$ \cite{Mailath_2006}, cooperation is sustained under classical GT if $
 \delta\geq\frac{c^n}{r^C_{inf}}$, which is equivalent to \eqref{condition_Theorem_7_1}.
\end{Rem}
\subsection{Experiments}
The purpose of the experiments discussed next is to empirically illustrate the existence of equilibrium regions in the infinite horizon DISG and demonstrate that cooperation is sustainable. 
We assume that the Markov chain entails a binary state and that each agent $n\in\{1,2\}$ has access to a binary symmetric channel (BSC) with observation probabilities parametrized by $p_1=\mathbb{P}(Y^1=0\vert X=0)=\mathbb{P}(Y^1=1\vert X=1)$ and $p_2=\mathbb{P}(Y^2=0\vert X=0)=\mathbb{P}(Y^2=1\vert X=1)$. In this setup, we investigate the effect of these parameters and the communication cost on the cooperation region.

We use an {\em online planning algorithm} to solve the POMDP that corresponds to agent's best-response problem. In particular, in our implementation we used a slight modification of the POMCP algorithm \cite{Silver_2010} to approximate the operator $O^n(\cdot)$ in the ItRA algorithm. The particle filter used in the POMCP was replaced with the exact belief update, to be able to calculate the rewards of the agents while running simulations. Moreover, we discretize the belief simplex using a fine grid on which the optimal actions are computed. The results are subject to  approximation errors due to the approximate nature of the POMCP, which employs simulated averages instead of expectations, and the discretization grid.

For our experiments, we approximate the optimal equilibrium region for different values of $p_1,p_2,c$ (we set $c^1=c^2=c$). 
The state transition probabilities are given by $\mathbb{P}(X_{t+1}=0\vert X_t=0)=0.8$, $\mathbb{P}(X_{t+1}=1\vert X_t=0)=0.2$, $\mathbb{P}(X_{t+1}=0\vert X_t=1)=0.15$, $\mathbb{P}(X_{t+1}=1\vert X_t=1)=0.85$. Regarding the agents' emission probabilities, the BSCs are parametrized by  $p_1=p_2=0.6$. 
The rest of the parameters are set to $\delta=0.9, c=0.027$.

In Fig. \ref{g2} we depict the cooperation regions for three different parameter setups, as computed by the aforementioned scheme. The $x$-axis represents the belief $\pi^n(X=0)$. We observe ({\em see} top line in Fig. \ref{g2}) that cooperation is sustainable in a subset of the belief simplex $\Delta(\mathcal{X})$.

Next, we show the impact of the transmission cost on the cooperation region by changing $c=0.027$ to $c=0.024$. As we observe ({\em see} bottom line in Fig. \ref{g2}), the cooperation region gets larger for smaller transmission cost. This confirms intuition, because cooperation becomes less expensive and thus, agents opt to share information in a larger subset of $\Delta(\mathcal{X})$.

Finally, we show the impact of the observation probabilities in agents' optimal policy. We assume that agent $1$ has more `qualitative' observations (by means of being more discriminating between the two states and thus, providing smaller uncertainty over the system state) and change $p_1$ from $0.6$ to $0.65$, while keeping the cost at $c=0.024$. We observe ({\em see} the middle line in Fig. \ref{g2}) that the cooperation region becomes smaller (compared with the bottom line of Fig. \ref{g2}), as agent $1$, now has less incentives to cooperate and acquire information from agent $2$.

\begin{figure}[t]
\centering
\includegraphics[width=0.45\textwidth,height=0.25\textwidth]{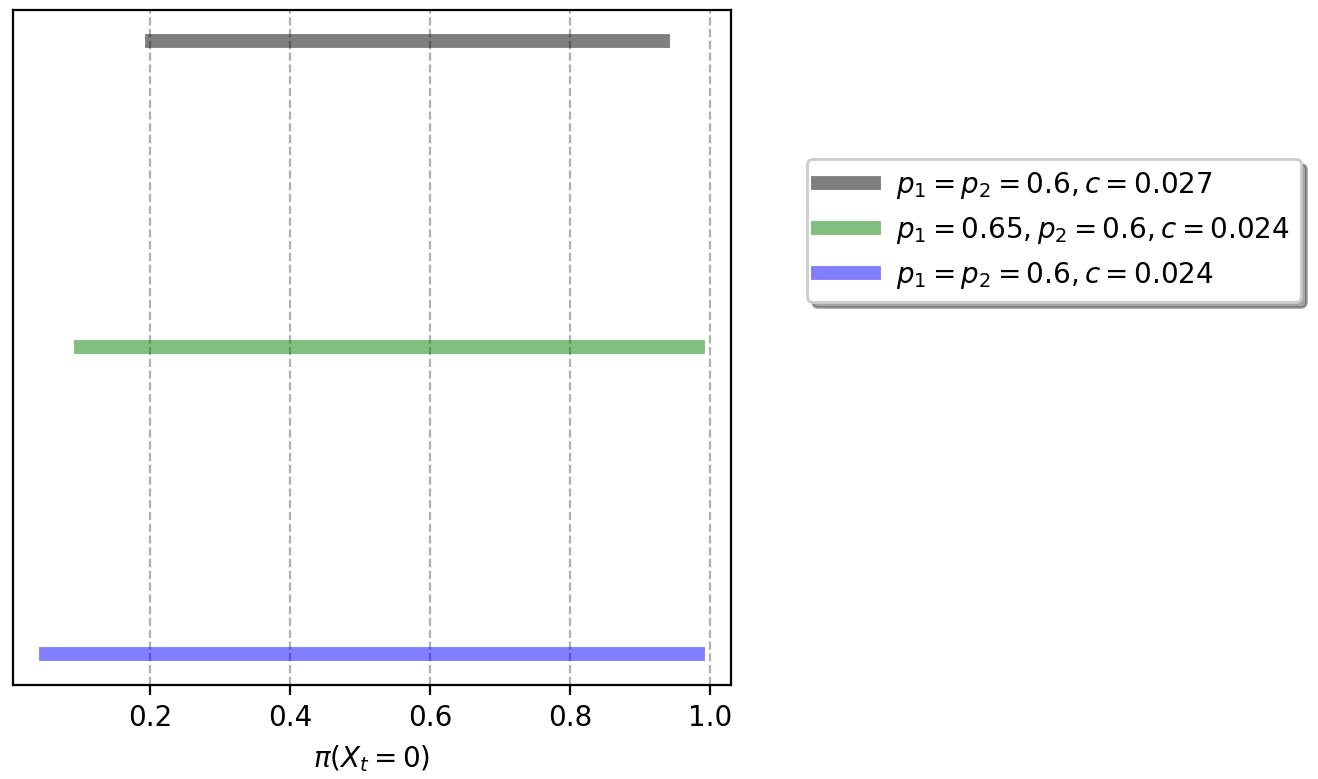}
\caption{Visual illustration of the computed cooperation region for different parameter values.}
\label{g2}
\end{figure}

\section{Conclusion}
In this work, the information sharing process between two rational selfish agents interested in an estimation task was studied. We employed the conditional mutual information to quantify the value of information exchanged between the agents. We showed that in the finite horizon DISG, cooperation can not emerge at equilibrium. This led us to consider CGT strategies to check whether cooperation can be sustained in the infinite horizon setting. We showed that these strategies are closed under the best-response mapping and that cooperation can emerge at equilibrium. Finally, we characterized the equilibrium regions, proved uniqueness of a maximal equilibrium region, devised an iterative algorithm whose output provably contains it and provided results that ensure its non-emptiness.

The proposed model and results have potential to provide useful insight in consensus or diffusion based distributed networks whose nodes perform estimation, detection, control or routing tasks and combine individual sensing data with signals received from neighbors. Another problem that is interesting to explore utilizing the ideas developed in this work is Bayesian learning and the study of information cascades \cite{Bikhchandani_1992}. Recently, this problem has been studied with agents acting sequentially, instead of the static case where agents act only once \cite{Vasal_2016}. The ideas developed in this paper could be utilized to investigate the potential to avoid inefficient information cascades.

The DISG model may contribute to the above research areas by endogenizing the information sharing decision. The DISG model could also be applicable to the study of networks with adversarial nodes where the received information might be meaningful, irrelevant, or malicious.

Taking full advantage of the DISG model requires additional work regarding three assumptions made in the paper: (i) information sharing takes place between two agents, (ii) agents have the option to share only the acquired observations instead of sharing arbitrary information (they do not have the option to ``lie"), (iii) CGT strategies are sensitive to errors. Models of multiple agent interactions and more general constrained strategies under noisy transmissions are a subject of ongoing research.

\section*{Acknowledgment}
The authors would like to thank Professor Nicholas Kolokotronis, Associate Professor in Department of Informatics and Telecommunications, University of Peloponnese, Greece, for the fruitful discussions. They would also like to thank the reviewers for their helpful comments.
\appendix  
\section*{Proof of Lemma \ref{Lem_reward}}
The expected reception gain function \eqref{expected_instantaneous_reward} yields
\begin{small}
\begin{align}
\label{reward_aux}
&\sum_{x_t,y^n_t,Z^{-n}_t=\epsilon}\mathbb{P}(Z^{-n}_t=\epsilon\vert x_t,i^n_t)\mathbb{P}(y^n_t\vert x_t)\mathbb{P}(x_t\vert i^n_t)\nonumber\\
&\times r^n_t(x_t,y^n_t,Z^{-n}_t=\epsilon,i^n_t)+\sum_{x_t,y^n_t,Z^{-n}_t=y^{-n}_t}\mathbb{P}(Z^{-n}_t=y^{-n}_t\vert x_t,i^n_t)\nonumber\\
&\times\mathbb{P}(y^n_t\vert x_t)\mathbb{P}(x_t\vert i^n_t)r^n_t(x_t,y^n_t,Z^{-n}_t=y^{-n}_t,i^n_t).
\end{align}
\end{small}\noindent
Suppose $i^n_t=i^{-n}_t=i^c_t$. Then, \eqref{ins_reward2}, \eqref{infer_action} imply $P(Z^{-n}_t=\epsilon|x_t, i^n_t)=g^{-n}(i^c_t)(a^{-n}_t=0)$ and the latter expression does not depend on $x_t$. Then, it is easy to verify from \eqref{instantaneous_reward}, \eqref{ins_reward} that the first term of the summation in \eqref{reward_aux} is equal to $0$. Eq. \eqref{ins_reward3}, \eqref{infer_action} imply $P(Z^{-n}_t=y^{-n}_t|x_t,
i^n_t)=g^{-n}(i^c_t)(a^{-n}_t=1)P(y^{-n}_t|x_t)$. Replacing it into \eqref{reward_aux} yields
\begin{align}
&\mathbb{E}\{r^n_t({X}_t,Y^n_t,Z^{-n}_t)|I^n_t=i^n_t\}\nonumber\\
&=g^{-n}_t(i^c_t)(a^{-n}_t=1)I(X_t;Y^{-n}_t|Y^n_t,i^n_t).
\end{align}
To prove part $2$ of the Lemma, note that if $g^{-n}_t(i^{-n,p}_t,i^c_t)(a^{-n}_t)=g^{-n}_t(\bar{i}^{-n,p}_t,i^c_t)(a^{-n}_t)$ for every $i^{-n,p}_t\neq\bar{i}^{-n,p}_t$, then \eqref{infer_action} yields
\begin{align}
&\mathbb{P}(a^{-n}_t=a|x_t,i^n_t)=g^{-n}_t(i^{-n}_t)(a)\nonumber\\
&\times\sum_{i^{-n,p}_t}\frac{\mathbb{P}(x_t|i^{-n,p}_t,i^n_t)\mathbb{P}(i^{-n,p}_t|i^n_t)}{\mathbb{P}(x_t|i^n_t)}=g^{-n}_t(i^{-n}_t)(a).
\end{align}
Then, by working as in part 1, \eqref{MI_rem} is obtained.

\section*{Proof of Theorem \ref{Optimal_strategy_finite_horizon}}
Let $T$ denote the horizon length. Denote,
\begin{equation*}
J^{n,g^{n,*},g^{-n,*}}_{\mu}(i^n_t,t)=\mathbb{E}^{g^{n,*},g^{-n,*}}_{\mu^n}\{\sum^T_{j=t}R^n_t(X_j,Y^n_j,Z^{-n}_j,A^n_j)\vert i^n_t\}.
\end{equation*}
 Suppose $(g^*,\mu)$ is a PBE. We show by strong induction on $k\in \mathcal{T}$ that $g^{n,*}_{T-k}(i^n_{T-k})(a^n_{T-k}=1) \equiv 0, \ \forall k,n,i^n_{T-k}$ and thus that $g^*\equiv g^{NC}$. This proves the result, since $T$ was chosen arbitrarily. 
 For $k=0$, the sequential rationality condition \eqref{sequential_rationality}, implies that $\forall n, i^n_T$
\begin{subequations}
\begin{align}
&J^{n,g^{n,*},g^{-n,*}}_\mu(i^n_T,T) = \underset{g^n_T}{\sup}\  J^{n,g^{n},g^{-n,*}}_{\mu}(i^n_T, T)\nonumber\\
&=\underset{g^n_T}{\sup}\ \mathbb{E}_{\mu^n}\{\mathbb{E}^{g^{-n,*}}_{i^{-n}_T}\{ r^n_T\vert i^n_T\}\}- \mathbb{E}^{g_T^n(i^n_T)}\{a^n_T\vert i^n_T\} c^n \nonumber\\
\label{subequation}
&=\underset{g^n_T}{\sup}\ \mathbb{E}_{\mu^n}\{\mathbb{E}^{g^{-n,*}}_{i^{-n}_T}\{ r^n_T\vert i^n_T\}\} - g_T^n(i^n_T)(a^n_T =1) c^n \\
&=\mathbb{E}_{\mu^n}\{\mathbb{E}^{g^{-n,*}}_{i^{-n}_T}\{ r^n_T\vert i^n_T\}\}.
\end{align}
\end{subequations}
The supremum is attained when $\forall n, i^n_T$ we have that $g_T^n(i^n_T)(a^n_T =1)=0$, because the first term in \eqref{subequation} does not depend on $g^n_t$.

Now, suppose that for $j\leq k$ it holds that $g^{n,*}_{T-j}(i^n_{T-j})(a^n_{T-j}=1)\equiv 0$, for all $j,n,i^n_{T-j}$. By the induction hypothesis the strategies from time $T-k$ onwards are independent of the agents' private information. Hence, by  part 2 of Lemma \ref{Lem_reward}, the expected instantaneous rewards $\forall t \geq T-k$ are $0$ and as a result, the expected sum of payoffs from time $T-k$ onwards is $0$, as well. Hence, by the sequential rationality condition, $\forall n,i^n_{T-k-1}$
\begin{align}
&J^{n,g^{n,*},g^{-n,*}}_\mu(i^n_{T-k-1},T-k-1)\nonumber\\
&=\underset{g^n_{T-k-1}}{\sup}\ \mathbb{E}_{\mu^n}\{\mathbb{E}_{i^{-n}_{T-k-1}}^{g^{-n,*}}\{ r^n_{T-k-1}\vert i^n_{T-k-1}\}\} \nonumber\\
&- \mathbb{E}^{g_{T-k-1}^n(i^n_{T-k-1})}\{a^n_{T-k-1}\vert i^n_{T-k-1}\} c^n\nonumber\\
&=\underset{g^n_{T-k-1}}{\sup}\ E_{\mu^n}\{E_{i^{-n}_{T-k-1}}^{g^{-n,*}}\{ r^n_{T-k-1}\vert i^n_{T-k-1}\}\}\nonumber\\& - g_{T-k-1}^n(i^n_{T-k-1})(a^n_{T-k-1} =1) c^n\nonumber\\
&=E_{\mu^n}\{E_{i^{-n}_{T-k-1}}^{g^{-n,*}}\{ r^n_{T-k-1} \vert i^n_{T-k-1}\}\},
\end{align}
which is clearly attained when $g_{T-k-1}^n(i^n_{T-k-1})(a^n_{T-k-1} =1)=0$ $\forall n, i^n_{T-k-1}$. This completes the if part of the proof.

Conversely, let $g^*=g^{NC}$ and $(g^*,\mu)$ an assessment with $\mu$ consistent with $g^*$. Then, note that by Lemma \ref{Lem_reward} we have that $\forall n,t, i^n_t$ $J^{n,g^{n,NC}, g^{-n,NC}}_\mu(i^n_t, t)=0$. The expected instantaneous reward at any time $t$ given $i^n_t$, for a strategy $g^n$, given $g^{-n}=g^{-n,NC}$, is
 \begin{align}
& E_{\mu^n}[E_{i^{-n}_{t}}^{g^{-n,NC}}[ r^n_t\vert i^n_t]] - g^n(i^n_t)(a^n_t=1) c^n\nonumber\\
& = - g^n(i^n_t)(a^n_t=1) c^n\leq 0.
\end{align}
Hence, the sum of the expected instantaneous rewards from times $t$ to $T$ is a non-positive random variable and its expectation is also non-positive. This implies that $\forall n,t,i^n_t$ and for arbitrary $g^n$
\begin{equation}
J^{n,g^n,g^{-n,NC}}_\mu(i^n_t, t) \leq 0 = J^{n,g^{n,NC},g^{-n,NC}}_\mu(i^n_t, t).
\end{equation}
\section*{Proof of Lemma \ref{sigma_common}}
By definition, it is $\mathbb{P}^{\sigma^{-n}}(a^{-n}_t|i^{-n}_t)=\sigma^{-n}(s_t,\pi^{-n}_t)(a^{-n}_t)$ for every $i^{-n}_t:
\pi^{-n}_t(X_t)=\mathbb{P}(X_t|i^{-n}_t)$. 
Moreover, if $s_t=1$. then $i^{-n,p}_t=\emptyset$ and $i^n_t=i^{-n}_t\Rightarrow\pi^n_t=\pi^{-n}_t$, while if $s_t=0$, it is $\sigma^{-n}(s_t,\pi^{-n}_t)(a^{-n}_t=0)=1$ for every $\pi^{-n}_t$. Thus, in both cases $i^n_t$ and as a result $(\pi^n_t,s_t)$ suffices to compute $\sigma^{-n}(s_t,\pi^{-n}_t)(a^{-n}_t)$.

Regarding part $2$ of the Lemma, given agent $-n$ follows a CGT strategy $\sigma^{-n,\Pi^{-n,c}}$ and for given $\pi^n_t,y^n_t,z^{-n}_t,a^n_t,a^{-n}_t$, where $a^{-n}_t$ is given by \eqref{sigma_ind1}, 
agent $n$'s belief $\pi^n_{t+1}(x_{t+1})$ is updated as follows
\begin{small}
\begin{align}
\label{belief_updateaux}
&\pi^n_{t+1}(x_{t+1})=\mathbb{P}^{g^n,\sigma^{-n}}(X_{t+1}=x_{t+1}|i^n_{t+1})\nonumber\\
&=\mathbb{P}^{g^n,\sigma^{-n}}(x_{t+1}|i^n_t,y^n_t,z^{-n}_t,a^n_t,a^{-n}_t,s_{t+1})\nonumber\\
&=\frac{\mathbb{P}(s_{t+1}|s_t,a^n_t,a^{-n}_t)\mathbb{P}^{g^n,\sigma^{-n}}(x_{t+1},i^n_t,y^n_t,z^{-n}_t,a^n_t,a^{-n}_t)}{\mathbb{P}(s_{t+1}|s_t,a^n_t,a^{-n}_t)\mathbb{P}^{g^n,\sigma^{-n}}(i^n_t,y^n_t,z^{-n}_t,a^n_t,a^{-n}_t)}\nonumber\\
&=\frac{\sum_{x_t}\mathbb{P}(x_{t+1}|x_t)L^n(x_t)\mathbb{P}^{\sigma^{-n}}(a^{-n}_t|i^n_t)\mathbb{P}^{g^n}(a^n_t|i^n_t)\mathbb{P}(x_t|i^n_t)}{\sum_{x_t}L^n(x_t)\mathbb{P}^{\sigma^{-n}}(a^{-n}_t|i^n_t)\mathbb{P}^{g^n}(a^n_t|i^n_t)\mathbb{P}(x_t|i^n_t)},
\end{align}
\end{small}\noindent
where
\begin{align}
    L^n(x_t)=\mathbb{P}(y^n_t|x_t)\mathbb{P}(z^{-n}_t|x_t,a^{-n}_t).
\end{align}
$\mathbb{P}(z^{-n}_t|x_t,a^{-n}_t)$ is given by \eqref{y_z}. Due to to the first part of the Lemma, \eqref{belief_updateaux} yields
\begin{small}
\begin{align}
\label{belief_update}
    &\pi^n_{t+1}(x_{t+1})=\frac{\sum_{x_t}\mathbb{P}(x_{t+1}|x_t)L^n(x_t)\sigma^{-n}(s_t,\pi^n_t)(a^{-n}_t)\pi^n_t(x_t)}{\sum_{x_t}L^n(x_t)\sigma^{-n}(s_t,\pi^n_t)(a^{-n}_t)\pi^n_t(x_t)}\nonumber\\
    &=\frac{\sum_{x_t}\mathbb{P}(x_{t+1}|x_t)\mathbb{P}(y^n_t|x_t)\mathbb{P}(z^{-n}_t|x_t,a^{-n}_t)\pi^n_t(x_t)}{\sum_{x_t}\mathbb{P}(y^n_t|x_t)\mathbb{P}(z^{-n}_t|x_t,a^{-n}_t)\pi^n_t(x_t)}.
\end{align}
\end{small}\noindent
Note that no assumption on agent $n$'s strategy has been made.

Regarding part $3$ of the Lemma, observing the expected instantaneous reward function \eqref{ins_reward} depends on $\pi^n_t(X_t)=\mathbb{P}(X_t|i^n_t)$ and $\mathbb{P}(Z^{-n}_t|X_t,i^n_t)$. The term $\mathbb{P}(Z^{-n}_t|X_t,i^n_t)$ is a function of the other agent's strategy $g^{-n}$ as can be seen, by \eqref{ins_reward2}, \eqref{ins_reward3}, \eqref{infer_action}. However, if the other agent follows a CGT strategy then $\mathbb{P}(A^{-n}_t|X_t,i^n_t)$ is given by $\sigma^{-n}(s_t,\pi^n_t)$ from the first part of the Lemma.

Hence, if $s_t=1$, then $i^n_t=i^{-n}_t=i^c_t$ (and as a result, $\pi^n_t=\pi^{-n}_t$) and by part $1$ of Lemma \ref{Lem_reward} we have
\begin{align}
\label{GT_Reward1}
&\mathbb{E}\{r^n_t({X}_t,Y^n_t,Z^{-n}_t(Y^{-n}_t,A^{-n}_t)|I^n_t=i^n_t)\}\nonumber\\
&=\sigma^{-n}_t(i^{-n}_t)(a^{-n}_t=1)I(X_t;Y^{-n}_t|Y^n_t,i^n_t)\nonumber\\
&=\sigma^{-n}(s_t=1,\pi^n_t)(a^{-n}_t=1)I(X_t;Y^{-n}_t|Y^n_t,i^n_t)\nonumber\\
&=\mathds{1}_{\{s_t=1,\pi^n_t\in\Pi^{-n,c}\}}I(X_t;Y^{-n}_t|Y^n_t,i^n_t).
\end{align}
On the other hand, if $s_t=0$, it is $\sigma^{-n}(s_t=0,\pi^{-n}_t)(a^{-n}_t=0)=1$ for every $\pi^{-n}_t$, meaning that the equivalent behavioral strategy is $g^{-n}_t(i^{-n}_t)(a^{-n}_t=0)=1$, for every $i^{-n,p}_t$ and by part $2$ of Lemma \ref{Lem_reward} it is
\begin{align}
\label{GT_Reward2}
&\mathbb{E}\{r^n_t({X}_t,Y^n_t,Z^{-n}_t(Y^{-n}_t,A^{-n}_t)|i^n_t)\}\nonumber\\
&=g^{-n}_t(i^{-n}_t)(a^{-n}_t=1)I(X_t;Y^{-n}_t|Y^n_t,i^n_t)\nonumber\\
&=\sigma^{-n}(s_t=0,
\pi^{-n}_t)(a^{-n}_t=1)I(X_t;Y^{-n}_t|Y^n_t,i^n_t)\nonumber\\
&=0.
\end{align}
Combining \eqref{GT_Reward1} and \eqref{GT_Reward2}, we get \eqref{GT_Reward}, \eqref{GT_Reward_aux}.
\section*{Proof of Theorem \ref{ST-POMDP}}
Define a new system state as $\tilde{X}_t=(X_t,S_t,\Pi^n_t)$ and observation $\tilde{Y}^n_t=(Y^n_{t-1},Z^{-n}_{t-1}
,A^{-n}_{t-1})$. A POMDP consists of a system state $\tilde{X}_t\in\mathcal{\tilde{X}}$, an observation process $\tilde{Y}^n_t\in\hat{\mathcal{Y}}^n$, an action process $A^n_t\in\mathcal{A}$. To show that  agent $n$'s best response problem is a POMDP problem we need the following conditions
\begin{enumerate}
\item The state dynamics are Markovian, i.e.,
\begin{align}
\label{state_obs}
\mathbb{P}^{\sigma}_f(\tilde{X}_{t+1}|\tilde{X}_{0:t},\tilde{Y}^n_{0:t},A^n_{0:t})=\mathbb{P}^{\sigma}_f(\tilde{X}_{t+1}|\tilde{X}_t,A^n_t).
\end{align}
\item The observation dynamics satisfy
\begin{align}
\label{obs_form}
\mathbb{P}^{\sigma}_f(\tilde{Y}^n_t|\tilde{X}_{0:t-1},\tilde{Y}^n_{0:t-1},A^n_{0:t-1})=
\mathbb{P}^{\sigma}_f(\tilde{Y}^n_t|\tilde{X}_{t-1},A^n_{t-1}).
\end{align}
\item Agent $n$'s instantaneous utility is a function of the information state $(S_t,\Pi^n_t)$ and action $A^n_t$.
\end{enumerate}
We note that for condition $2$, we follow the timing structure of \cite{Witsenhausen} and \cite{Nayyar_2011}, where agent's observation is a function of the previous state and action. Also,  to save notation we write $\mathbb{P}^{\sigma}_f(\cdot)$ instead of $\mathbb{P}^{\sigma^{-n}}_f(\cdot)$.

Given the fact that agent $-n$ follows the CGT strategy $\sigma^{-n,\Pi^{-n,c}}(s_t,\pi^{-n}_t)$, we have
\begin{align*}
&\mathbb{P}^{\sigma}_f(\tilde{x}_{t+1}|\tilde{x}_{0:t},\tilde{y}^n_{0:t},a^n_{0:t})\nonumber\\
&=\mathbb{P}^{\sigma}_f(x_{t+1},s_{t+1},\pi^n_{t+1}|x_{0:t},s_{0:t},\pi^n_{0:t},y^n_{0:t-1},z^{-n}_{0:t-1},a^{-n}_{0:t-1},a^n_{0:t})\nonumber\\
&=\mathbb{P}(x_{t+1}|x_t)\sum_{y^n_t,z^{-n}_t,a^{-n}_t}\mathds{1}_{\{\pi^n_{t+1}=f(\pi^n_t,y^n_t,z^{-n}_t,a^{-n}_t)\}}\nonumber\\
&\times\mathbb{P}(s_{t+1}|s_t,a^n_t,a^{-n}_t)\nonumber\\
&\times\mathbb{P}(z^{-n}_t,a^{-n}_t,y^n_t|x_{0:t},s_{0:t},\pi^n_{0:t},y^n_{0:t-1},z^{-n}_{0:t-1},a^{-n}_{0:t-1},a^n_{0:t})\nonumber\\
&=\mathbb{P}(x_{t+1}|x_t)\sum_{y^n_t,z^{-n}_t,a^{-n}_t}\mathds{1}_{\{\pi^n_{t+1}=f(\pi^n_t,y^n_t,z^{-n}_t,a^{-n}_t)\}}\mathbb{P}(y^n_t|x_t)\nonumber\\
&\times\mathbb{P}(s_{t+1}|s_t,a^n_t,a^{-n}_t)\mathbb{P}(z^{-n}_t|x_t,a^{-n}_t)\sigma^{-n}(s_t,\pi^n_t)(a^{-n}_t),
\end{align*}
where $\mathbb{P}(z^{-n}_t|x_t,a^{-n}_t)$ is given by \eqref{y_z} and $\sigma^{-n}(s_t,\pi^n_t)(a^{-n}_t)$ by \eqref{sigma_ind1}, 
\eqref{sigma_inference}. 
Hence, 
we conclude that the next state depends only on the value of the state variables from the previous time instant (i.e., $\tilde{x}_t$) and the previous action $a^n_t$ and as a result, the system dynamics are of the form of \eqref{state_obs}.

With regards to condition $2$, we have
\begin{align*}
&\mathbb{P}^{\sigma}(\tilde{y}^n_t|\tilde{x}_{0:t-1},\tilde{y}^n_{0:t-1},a^n_{0:t-1})\\
&=\mathbb{P}^{\sigma}(y^n_{t-1},z^{-n}_{t-1},a^{-n}_{t-1}|x_{0:t-1},s_{0:t-1},\pi^n_{0:t-1},\tilde{y}^n_{0:t-1},a^n_{0:t-1})\nonumber\\
&=\mathbb{P}(y^n_{t-1}|x_{t-1})\mathbb{P}(z^{-n}_{t-1}|x_{t-1},a^{-n}_{t-1})\mathbb{P}(a^{-n}_{t-1}|\pi^n_{t-1},s_{t-1})\nonumber\\
&=\mathbb{P}(y^n_{t-1}|x_{t-1})\mathbb{P}(z^{-n}_{t-1}|x_{t-1},a^{-n}_{t-1})\sigma^{-n}(s_{t-1},\pi^n_{t-1})(a^{-n}_{t-1}).\nonumber
\end{align*}
Following the same reasoning as in condition $1$, we conclude that observation $\tilde{y}^n_t$ is a function of $\tilde{x}_{t-1}$ and as a result of the form of \eqref{obs_form}. Hence, condition $2$ holds.

From part $3$ of Lemma \ref{sigma_common}, we observe that given agent $-n$ follows a CGT strategy, the expected instantaneous reward function 
is a function of $s_t,\pi^n_t,a^n_t$, meaning that it is a function of $\tilde{x}_t$ and $a^n_t$ and as a result, condition $3$ holds.

Moreover, $(S_t,\Pi^n_t)$ is an {\em information state}. In order to establish that, we have to show that $(1)$ it can be updated recursively, i.e., $\{S_{t+1},\Pi^n_{t+1}\}$ can be updated by the previous $\{S_t,\Pi^n_t\}$ and the newly acquired information $\tilde{Y}^n_{t+1},A^n_t$, $(2)$ that agent $n$'s belief on $\{S_{t+1},\Pi^n_{t+1}\}$ conditioned on $\{S_t,\Pi^n_t,A^n_t\}$ is independent of the whole history $I^n_t$ and $(3)$ it is sufficient to evaluate agent $n$'s expected utility for every action $a^n_t$.

Given that agent $-n$ follows CGT strategy, \eqref{s_t-evolution} yields
\begin{align}
s_{t+1}=\mathds{1}_{\{s_t=a^n_t=\sigma^{-n}_t(s_t,\pi^n_t)=1\}}.
\end{align}
Thus, $S_{t+1}$  is updated recursively as a function of $s_t,\pi^n_t$ and $a^n_t$. The same is true for $\pi^n_{t+1}$ ({\em see} \eqref{belief_update}), as it is $\pi^n_{t+1}(x_{t+1})=f(\pi^n_t,y^n_t,z^{-n}_t,a^{-n}_t)$, meaning that $\pi^n_{t+1}$ is recursively updated by the previous $\tilde{x}_t$ and the new information $\tilde{y}^n_{t+1}$. Thus, condition $(1)$ holds.

Regarding condition $(2)$, we have
\begin{align}
&\mathbb{P}^{\sigma}_f(s_{t+1},\pi^n_{t+1}|i^n_t,a^n_t,s_t,\pi^n_t)=\sum_{a^{-n}_t}\mathbb{P}(s_{t+1}|s_t,a^n_t,a^{-n})\nonumber\\
&\times\mathbb{P}^{\sigma}_f(\pi^n_{t+1}|i^n_t,a^n_t,a^{-n}_t,s_t,\pi^n_t)\mathbb{P}^\sigma(a^{-n}_t|i^n_t,a^n_t,s_t,\pi^n_t)
\nonumber\\
&=\sum_{a^{-n}_t}\mathbb{P}(s_{t+1}|s_t,a^n_t,a^{-n})\sum_{y^n_t,z^{-n}_t}\mathds{1}_{\{\pi^n_{t+1}=f(\pi^n_t,y^n_t,z^{-n}_t,a^{-n}_t)\}}\nonumber\\
&\times\mathbb{P}(y^n_t,z^{-n}_t|i^n_t,s_{0:t},a^n_t,a^{-n}_t)\sigma(s_t,\pi^n_t)(a^{-n}_t)\nonumber\\
&=\sum_{a^{-n}_t}\mathbb{P}(s_{t+1}|s_t,a^n_t,a^{-n})\sum_{y^n_t,z^{-n}_t}\mathds{1}_{\{\pi^n_{t+1}=f(\pi^n_t,y^n_t,z^{-n}_t,a^{-n}_t)\}}\nonumber\\
&\times\sum_{x_t}\mathbb{P}(y^n_t|x_t)\mathbb{P}(z^{-n}_t|x_t,a^{-n}_t)\pi^n_t(x_t)\sigma(s_t,\pi^n_t)(a^{-n}_t),
\end{align}
which depends on $s_t,\pi^n_t,a^n_t$ and it is independent of $i^n_t$. Condition $(3)$ is true from part $3$ of Lemma \ref{sigma_common}. 
\section*{Proof of Theorem \ref{Closeness}}
From Theorem \ref{ST-POMDP}, we showed that given that agent $-n$ follows a CGT strategy, agent $n$ without loss of optimality, can condition her strategy on $(S,\Pi^n)$. We now check whether a CGT strategy for agent $n$ is {\em sequentially rational}.
\begin{enumerate}
\item 
If $s=0$, then agent $n$'s optimal action is $a^n=0$, meaning that agent $n$ does not have any benefit from deviating from CGT, for all $\pi^n$. This is because
\begin{align*}
&\hspace{-5mm}Q^n(s=0,\pi^n,a^n=0)\hspace{-1mm}\geq \hspace{-1mm}Q^n(s=0,\pi^n,a^n=1)\hspace{-1mm}\overset{\eqref{Q_1}}\Leftrightarrow\hspace{-1mm}0\hspace{-1mm}\geq\hspace{-1mm}-c^n.
\end{align*}
\item 
If $s=1$, then agent $n$ selects $a^n=1$, if the following holds
\begin{align}
\label{BE_inequality2}
&\hspace{-7mm}Q^n(s=1,\pi^n,a^n=1)\geq Q^n(s=1,\pi^n,a^n=0)\nonumber\\
&\hspace{-7mm}\Leftrightarrow -c^n+\delta\mathbb{E}^{\sigma^{-n}}\{V^n(s',\pi'^n)|\pi^n,s=1,a^n=1\}\geq 0,
\end{align}
while agent $n$ selects $a^n=0$, if the following holds
\begin{align*}
&\hspace{-5mm}Q^n(s=1,\pi^n,a^n=1)<Q^n(s=1,\pi^n,a^n=0)\nonumber\\
&\hspace{-5mm}\Leftrightarrow-c^n+\delta\mathbb{E}^{\sigma^{-n}}\{V^n(s',\pi'^n)|\pi^n,s=1,a^n=1\}<0.
\end{align*}
Inequality \eqref{BE_inequality2} defines these $\pi^n\in\Delta(\mathcal{X})$ that comprise a region $C\subseteq \Delta(\mathcal{X})$ such that the strategy $\sigma^{n,C}(s_t,\pi^n_t)$ is optimal for agent $n$. Thus, following a CGT strategy is sequentially rational for agent $n$, given that agent $-n$ follows a CGT strategy.
\end{enumerate}
\section*{Proof of Theorem \ref{pbequiv}}
In Theorem \ref{ST-POMDP}, we showed that $(S_t,\Pi^n_t)$ is an {\em information state}, which is updated recursively and, under a CGT strategy profile $\sigma$, $s_{j+1},\pi^n_{j+1}$ is independent of other agents' private information given $s_j,\pi^n_j$ $\forall j, n$. Thus, given $\pi^{n,X}(\mu)=\pi^{n,X}(\mu')$, we have $\forall t$, \begin{align}
& E^{\sigma}_{\mu^n_t}\{\sum\limits_{j=t}^\infty \delta^j \tilde{R}^n(S_j,\Pi^n_j,a^n_j)\vert i^n_t\}\nonumber\\
&=E^{\sigma}\{\sum\limits_{j=t}^\infty \delta^j \tilde{R}^n(S_j,\Pi^n_j,a^n_j)\vert \pi^{n,\mu}_t(i^n_t), s_t(i^n_t)\}\nonumber\\
&=E^{\sigma}\{\sum\limits_{j=t}^\infty \delta^j \tilde{R}^n(S_j,\Pi^n_j,a^n_j)\vert \pi^{n,\mu'}_t(i^n_t), s_t(i^n_t)\}\nonumber\\
& =E^{\sigma}_{\mu'^n}\{\sum\limits_{j=t}^\infty \delta^j \tilde{R}^n(S_j,\Pi^n_j,a^n_j)\vert i^n_t\}.
\end{align}
Thus, equality of PBE values under $\mu,\mu'$ is evident provided that $\pi^{n,X}(\mu)=\pi^{n,X}(\mu')$.

Now for sequential rationality, suppose $(\sigma^*,\mu)$ is a PBE with $\sigma^*$ a CGT profile and that under $\mu'$, $\sigma^*$ is suboptimal for $n$. Then by Theorem \ref{Closeness}, there is an optimal CGT response $\sigma'^{n,*}\neq \sigma^{n,*}$ to $\sigma^{-n,*}$. Thus, from the first part of the Theorem,
\begin{align}
&E^{\sigma^*}_{\mu^n_t}[\sum\limits_{j=t}^\infty \delta^j \tilde{R}^n_j(S_j,\Pi^n_j,a^n_j)\vert i^n_t]=\nonumber \\
& E^{\sigma^*}_{\mu'^n_t}[\sum\limits_{j=t}^\infty \delta^j \tilde{R}^n_j(S_j,\Pi^n_j,a^n_j)\vert i^n_t]<\nonumber \\
& E^{\sigma'^{n,*},\sigma^{-n,*}}_{\mu'^n_t}[\sum\limits_{j=t}^\infty \delta^j \tilde{R}^n_j(S_j,\Pi^n_j,a^n_j)\vert i^n_t] \overset{(a)}= \nonumber\\
&  E^{\sigma'^{n,*},\sigma^{-n,*}}_{\mu^n_t}[\sum\limits_{j=t}^\infty \delta^j \tilde{R}^n_j(S_j,\Pi^n_j,a^n_j)\vert i^n_t] \Rightarrow \bot,
\end{align}
since $\sigma^*$ was sequentially rational with respect to $\mu$. (a) is true because $(\sigma'^{n,*},\sigma^{-n,*})$ is a CGT profile.
\section*{Proof of Proposition \ref{Prp_regions}}
It is enough to show that for a $\pi\in\Delta(\mathcal{X})$, such that $\pi\notin\ C\Rightarrow \pi\notin O^n(C)$. Let $\pi\notin C$ and suppose $\pi \in O^n(C)$. Then from the definition of $O^n(\cdot)$, we have that $Q^n(s=1,\pi,1)\geq Q^n(s=1,\pi,0)$. Moreover, given that $\sigma^{-n}(s=1,\pi)=\mathds{1}_{\{\pi \in C,s=1\}} = 0$, since $\pi\notin C$, and as a result it is $\tilde{r}(s,\pi)=0$ and $s'=0$. Hence,
\begin{align}
&Q^n(s=1,\pi,a^n=1)\geq	Q^n(s=1,\pi,a^n=0)\nonumber\\
&-c^n+\delta\mathbb{E}^{\sigma^{-n}}\{V^n(s'=0,f(\pi,y^n,z^{-n}=\epsilon))|\pi,s=1\}\geq\nonumber\\
&\delta\mathbb{E}^{\sigma^{-n}}\{V^n(s'=0,f(\pi,y^n,z^{-n}=\epsilon))|\pi,s=1\}\nonumber\\
&\Leftrightarrow
-c^n\geq 0 \Rightarrow \bot.
\end{align}
Therefore, $\pi \notin O^n(C)$.

For the second part of the Proposition, note that from the first part we have
\begin{align}
&\Pi^{1,c}=O^1({\Pi^{2,c}})\subseteq\Pi^{2,c},\\
&\Pi^{2,c}=O^2({\Pi^{1,c}})\subseteq\Pi^{1,c}.
\end{align}
Thus, it is $\Pi^{1,c,*}=\Pi^{2,c,*}$.
\section*{Proof of Proposition \ref{CGT_PBE}}
In order to check whether a pair of CGT strategies are sequentially rational, beliefs on past states trajectory and other agent's private information are irrelevant ({\em see} Remark \ref{rem_spe}). 
To check whether a pair of strategies $\sigma^*=(\sigma^{1,\Pi^c},\sigma^{2,\Pi^c})$ forms an SPE, we need to check that $\forall t, i^n_t, \sigma^n$, 
\begin{align}
& E^{\sigma^{n,\Pi^c},\sigma^{-n,\Pi^c}}_{\mu^n_t}\{\sum\limits_{j=t}^\infty \delta^j \tilde{R}^n_j(S_j,\Pi^n_j,a^n_j)\vert i^n_t\}\nonumber\\
&=E^{\sigma^{n,\Pi^c},\sigma^{-n,\Pi^c}}\{\sum\limits_{j=t}^\infty \delta^j \tilde{R}^n_j(S_j,\Pi^n_j,a^n_j)\vert\pi^n_t,s_t\} \\
&\geq E^{\sigma^{n},\sigma^{-n,\Pi^c}}\{\sum\limits_{j=t}^\infty \delta^j \tilde{R}^n_j(S_j,\Pi^n_j,a^n_j)\vert\pi^n_t,s_t\}.
\end{align}
This is true because (83) is equal to $V^{n,\sigma^n,\sigma^{-n,\Pi^c}}(s_t,\pi^n_t)$ and (82) is $V^{n,*,\sigma^{-n,\Pi^c}}(s_t,\pi^n_t)$ by definition of an equilibrium region, since $\Pi^c\in\mathcal{E}$. 
\section*{Proof of Theorem \ref{Value_function_increasing_c}}
For the first part, if $s=0$, we have $V^{n,*,C}(s=0,\pi^n)=V^{n,*,C'}(s=0,\pi^n)=0$, $\forall \pi$. As such, we need to consider only the case where $s=1$. Let $C^*=O^n(C)$ and note that by part $1)$ of Proposition \ref{Prp_regions}, $C^*\subseteq C \subseteq C'$. \\
\\
Under a fixed strategy profile consisting of CGT strategies, by Lemma  \ref{sigma_common}, the agents' actions are a function of agent $n$'s information state. As such, when computing $V^{n,A,B}$ the expectation is over all possible trajectories of information states $\mathcal{D}$. Define, for $X\subseteq \Delta(\mathcal{X})$ the event $T_k(X)=\{\forall t<k, \pi_t\in X \}\cap \{\pi_k \notin X\}$. Note that the sets $T_k(X)$ for any fixed $X$ and for $k\geq 0$ form a partition of $\mathcal{D}$. Therefore, we have
{\small{
\begin{align}
&V^{n,C^*,C} (s=1, \pi) = \mathbb{E}^{C^*,C}\{ \sum\limits_{i=0}^\infty \delta^i \tilde{R}^n(s_i,a^n_i,\pi_i) \vert \pi,s=1\}\nonumber\\
&\overset{(a)}{=}\sum\limits_{k=0}^\infty \mathbb{E}^{C^*,C}\{\sum\limits_{i=0}^\infty \delta^i \tilde{R}^n(s_i,a^n_i,\pi_i) \vert \pi,s=1, T_k(C^*)\}\nonumber\\
&\times\mathbb{P}(T_k(C^*)\vert \pi,s=1)\nonumber\\
&\overset{(b)}{=}\sum\limits_{k=0}^\infty \mathbb{E}^{C^*,C}\{\sum\limits_{i=0}^k \delta^i \tilde{R}^n(s_i,a^n_i,\pi_i) \vert \pi,s=1, T_k(C^*) \}\nonumber\\
&\times\mathbb{P}(T_k(C^*)\vert\pi,s=1)
\nonumber\\
&\overset{(c)}{\leq}\sum\limits_{k=0}^\infty \mathbb{E}^{C^*,C'}\{\sum\limits_{i=0}^k \delta^i \tilde{R}^n(s_i,a^n_i,\pi_i) \vert \pi,s=1, T_k(C^*) \}\nonumber\\
&\times\mathbb{P}(T_k(C^*)\vert\pi,s=1)
\nonumber\\
&\overset{(d)}{=}\sum\limits_{k=0}^\infty \mathbb{E}^{C^*,C'}\{\sum\limits_{i=0}^\infty \delta^i \tilde{R}^n(s_i,a^n_i,\pi_i) \vert \pi,s=1, T_k(C^*) \}\nonumber\\
&\times\mathbb{P}(T_k(C^*)\vert\pi,s=1)
\nonumber\\
&= V^{n,C^*,C'}(s=1,\pi)\leq V^{n,*,C'}(s=1,\pi).
\end{align}}}
\normalsize
(a) is true by the law of total expectations and (b) is true because  under the strategy profile $(\sigma^{n,C^*},\sigma^{-n,C})$, given $T_k(C^*)$, $\sum\limits_{i=k+1}^\infty \delta^i \tilde{R}^n(s_i,a^n_i,a^{-n}_i,\pi_i) =0$ for any trajectory (since agent $n$ deviates from cooperation at time $k$ and thus $s_i=0,\forall i\geq k+1$). The argument for (c) is as follows. Because only the first $k$ steps of each trajectory contribute to the conditional expectation given $T_k(C^*)$, it is enough to consider the epectation over trajectories truncated at length $k$. Evidently all such trajectories are equiprobable under $(\sigma^{n,C^*},\sigma^{n,C})$ and $(\sigma^{n,C^*},\sigma^{n,C'})$ given $T_k(C^*)$, since exactly the same actions (cooperation) are taken until (and including) step $k-1$ by both agents. For the same reason, under $(\sigma^{n,C^*},\sigma^{n,C})$ and $(\sigma^{n,C^*},\sigma^{n,C'})$, $\mathbb{P}(T_k(C^*))$ is unchanged.  Now, the reward at time $k$ is non-negative and the trajectories for which the reward at time $k$ is positive under $(\sigma^{n,C^*},\sigma^{n,C})$ are also trajectories where the reward is positive under $(\sigma^{n,C^*},\sigma^{n,C'})$, since $C\subseteq C'$. Finally, (d) is true because given $T_k(C^*)$, and the strategy pair $C^*,C'$, $\sum\limits_{i=k+1}^\infty \delta^i \tilde{R}(s_i,a^n_i,\pi_i) = 0$.

 For the second part of the Theorem, by the first part we have that $V^{n,*,C'}(s=1,\pi)-V^{n,*,C}(s=1,\pi)$ is a non-negative random variable and hence its expectation is non-negative. Note that the distribution of $\pi'$ is the same under $\sigma^{-n,C}$ and $\sigma^{-n,C'}$ given $a^n=1$, $\pi \in C$ and $s=1$.
\section*{Proof of Lemma \ref{Lem_7}}
Since $C$ is an equilibrium region we have $O^n(C)=C$. Hence, from definition of $O^n(C)$, $\forall \pi \in C$ we have that 
\begin{align}
&Q^{n,*,C}(s=1,\pi,a^n=1) \geq Q^{n,*,C}(s=1,\pi,a^n=0)\nonumber\\
&\Leftrightarrow\tilde{r}(\pi,s=1)-c + \delta E\{V^{n,*,C}(s'=1,\pi')\vert \pi, s=1,a^n=1\} \nonumber\\
&\geq\tilde{r}(\pi,s=1)+ \delta E\{V^{n,*,C}(s'=0,\pi')\vert \pi, s=1,a^n=0\}\nonumber\\
&\Leftrightarrow \delta E\{V^{n,*,C}(s'=1,\pi')\vert \pi, s=1,a^n=1\} - c^n \geq 0.
\end{align}
Note that $E\{V^{n,*,C}(s'=0,\pi')\vert \pi, s=1, a^n=0\}= 0$.
\section*{Proof of Theorem \ref{Th3}}
We first prove an auxiliary Lemma.
\begin{Lem}
\label{Lem_8}
Let $\Pi_1,\Pi_2\in \mathcal{E}$. Then $\Pi_1 \cup \Pi_2 \in \mathcal{E}$.
\end{Lem}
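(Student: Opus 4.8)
The plan is to set $C=\Pi_1\cup\Pi_2$ and verify that $C$ satisfies the fixed-point characterization of an equilibrium region, namely $O^n(C)=C$ for both $n\in\{1,2\}$ (recall from part $2)$ of Proposition \ref{Prp_regions} that $\Pi^c\in\mathcal{E}$ iff $O^1(\Pi^c)=\Pi^c=O^2(\Pi^c)$). One inclusion, $O^n(C)\subseteq C$, is free from part $1)$ of Proposition \ref{Prp_regions}, so all the real work lies in the reverse inclusion $C\subseteq O^n(C)$.

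First I would fix $n$ and an arbitrary $\pi\in C$, and assume without loss of generality $\pi\in\Pi_1$ (the case $\pi\in\Pi_2$ is symmetric). The goal is $\pi\in O^n(C)$, i.e. $Q^{n,*,C}(s=1,\pi,a^n=1)\ge Q^{n,*,C}(s=1,\pi,a^n=0)$. Since $\pi\in C$ the other agent cooperates at $\pi$, so $a^n=1$ forces $s'=1$ while $a^n=0$ forces $s'=0$ with zero continuation (using $V^{n,*,C}(s'=0,\cdot)=0$ from \eqref{BE_s0c}); as the immediate reward $\tilde r^n(1,\pi)$ is common to both actions it cancels, and the inequality collapses to
\begin{align*}
\delta\,\mathbb{E}\{V^{n,*,C}(s'=1,\pi')\mid \pi,s=1,a^n=1\}-c^n\ \ge\ 0.
\end{align*}

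To prove this, I would compare against the smaller region $\Pi_1$. Because $\Pi_1\subseteq C$ and $\pi\in\Pi_1$, part $2)$ of Theorem \ref{Value_function_increasing_c} gives
\begin{align*}
\mathbb{E}\{V^{n,*,\Pi_1}(s'=1,\pi')\mid \pi,s=1,a^n=1\}\ \le\ \mathbb{E}\{V^{n,*,C}(s'=1,\pi')\mid \pi,s=1,a^n=1\},
\end{align*}
the law of $\pi'$ being identical in both expectations since $\pi\in\Pi_1$ forces the same cooperative $a^{-n}=1$. Since $\Pi_1\in\mathcal{E}$, Lemma \ref{Lem_7} applied to $\Pi_1$ yields $\delta\,\mathbb{E}\{V^{n,*,\Pi_1}(s'=1,\pi')\mid\pi,s=1,a^n=1\}-c^n\ge0$. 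Chaining these two facts delivers the displayed inequality, hence $\pi\in O^n(C)$. As $\pi$ and $n$ were arbitrary, $C\subseteq O^n(C)$ for $n=1,2$, and together with the trivial inclusion we obtain $O^n(C)=C$, i.e. $C\in\mathcal{E}$.

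I expect the main obstacle to be the reduction step that turns the equilibrium condition at $\pi$ into the single continuation-value inequality: it requires carefully tracking that $a^n=0$ drives $s'=0$ with zero continuation, that $\tilde r^n(1,\pi)$ cancels, and---most delicately---that the conditional law of $\pi'$ coincides across $\Pi_1$ and $C$ so that the monotonicity of Theorem \ref{Value_function_increasing_c} may legitimately be invoked. The remaining argument is a short monotonicity-plus-fixed-point chain.
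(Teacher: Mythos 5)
Your proof is correct and takes essentially the same approach as the paper's: both get $O^n(\Pi_1\cup\Pi_2)\subseteq\Pi_1\cup\Pi_2$ from part $1)$ of Proposition \ref{Prp_regions}, and both obtain the reverse inclusion by reducing the equilibrium condition at a point $\pi\in\Pi_1$ to the continuation-value inequality and then chaining part $2)$ of Theorem \ref{Value_function_increasing_c} (monotonicity in the other agent's cooperation region) with Lemma \ref{Lem_7} applied to the equilibrium region $\Pi_1$. The only cosmetic difference is that you argue directly whereas the paper phrases the identical chain of inequalities as a proof by contradiction.
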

\begin{proof}
We have from part $1)$ of Proposition \ref{Prp_regions} that $O^n(\Pi_1\cup\Pi_2)\subseteq \Pi_1\cup\Pi_2$ for $n=\{1,2\}$. We first show that $\Pi_1\subseteq O^n(\Pi_1\cup\Pi_2)$. Suppose for a contradiction that $\pi \in\Pi_1$ and $\pi \notin O^n(\Pi_1\cup\Pi_2)$. Then, 
\begin{align}
&Q^{n,*,\Pi_1\cup\Pi_2}(s=1,\pi,a^n=1) < Q^{n,*,\Pi_1\cup\Pi_2}(s=1,\pi,a^n=0)
\nonumber\\
&\Rightarrow\tilde{r}^n(\pi,s=1)-c^n+ \delta E\{  \nonumber\\
&V^{n,*,\Pi_1\cup\Pi_2}(s'=1,\pi')\vert \pi, s=1,a^n=1\}< \tilde{r}^n(\pi,s=1) \Rightarrow \nonumber\\
&\delta E\{V^{n,*,\Pi_1\cup\Pi_2}(s'=1,\pi')\vert \pi, s=1,a^n=1\} - c^n < 0.
\end{align}
Because $\Pi_1\subseteq \Pi_1\cup\Pi_2$, from the second part of Theorem \ref{Value_function_increasing_c} we have
\begin{align*}
\delta E\{V^{n,*,\Pi_1}(s'=1,\pi')\vert \pi, s=1,a^n=1\} - c^n\leq\nonumber\\ \delta E\{V^{n,*,\Pi_1\cup\Pi_2}(s'=1,\pi')\vert \pi, s=1,a^n=1\} - c^n <0,
\end{align*}
but this is a contradiction because $\Pi_1$ is an equilibrium region and Lemma \ref{Lem_7} applies.
\end{proof}
For the first part, Zorn's Lemma guarantees at least one maximal equilibrium region. Now suppose for a contradiction that there exists two distinct maximal equilibrium regions $\Pi_1,\Pi_2$. From Lemma \ref{Lem_8} 
we have $\Pi_1\cup\Pi_2\in \mathcal{E}$. Now $\Pi_1\subseteq \Pi_1\cup \Pi_2$ and thus by the maximality of $\Pi_1$ we have $\Pi_1=\Pi_1\cup\Pi_2$. Hence $\Pi_2 \subseteq \Pi_1$ and by maximality of $\Pi_2$, $\Pi_1=\Pi_2 \Rightarrow \bot$. The second part is true due to Theorem \ref{Value_function_increasing_c}.
\section*{Proof of Proposition \ref{Lem_9}}
For the first part of the Theorem, let $\pi\notin O^n(C)$ and suppose $\pi\in \Pi^*$ for a contradiction. We have,
\begin{align}
&Q^{n,*,C}(s=1,\pi,a^n=1)<Q^{n,*,C}(s=1,\pi,a^n=0)\nonumber\\
&\Rightarrow-c^n + \delta E\{V^{n,*,C}(s'=1,\pi')\vert s,\pi,a^n=1 \} < 0.
\end{align}
Then from \eqref{BE_increasing}, since $\Pi^*\subseteq C$ we have
\begin{align}
-c^n + \delta E\{V^{n,*,\Pi^*}(s'=1,\pi')\vert s,\pi,a^n=1 \} < 0,
\end{align}
and since $\Pi^*\in\mathcal{E}$, this contradicts Lemma \ref{Lem_7}.
\section*{Proof of Theorem \ref{bound_region}}
Let $c^n = \tilde{\epsilon}
    r^{C}_{inf}$
and $\delta\geq\tilde{\epsilon} >0$, which is possible since $r^{C}_{inf} > 0$ as $C$ is positive absorbing. For this choice we have $r^{C}_{inf} \geq \frac{c^n}{\delta}$. Now, suppose agent $-n$ cooperates in $C$. Then, agent $n$ will cooperate at $\pi \in C$ if the following is true.
\begin{align}
\label{opt_cooperation}
    &Q^{n,C}(s=1, \pi, a^n = 1)\geq Q^{n,C}(s=1, \pi, a^n = 0)\Leftrightarrow\nonumber\\
    &E\{V^{n,*,C}(s'=1,\pi')\vert \pi, s=1,a^n=1\}\geq \frac{c^n}{\delta}.
\end{align}
Note that 
\begin{align}
\label{inequalitys}
&E\{V^{n,*,C}(s'=1,\pi')\vert \pi, s=1,a^n=1\}\nonumber\\
&\overset{(a)}\geq\mathbb{E}\{\sum^{\infty}_{t=0}\delta^t(\tilde{r}^n(s_t=1,\pi_t)-c^n)\}\nonumber\\
&\overset{(b)}\geq\sum^{\infty}_{t=0}\delta^t(r^C_{inf}-c^n)=\frac{r^C_{inf}-c^n}{1-\delta},
\end{align}
where $(a)$ is true because the optimal continuation value $E\{V^{n,*,C}(s'=1,\pi')\vert \pi, s=1,a^n=1\}$ is greater or equal than the always cooperate strategy. The expected continuation value of the always cooperate strategy is given by $\mathbb{E}\{\sum^{\infty}_{t=0}\delta^t(\tilde{r}^n(s_t=1,\pi_t)-c^n)\}$ because $C$ is absorbing. Finally, $(b)$ holds because the expected accumulated rewards are lower bounded by the infimum of the rewards $r^{C}_{inf}-c^n$ at every time step. 
Now note, that
$$
r^{C}_{inf} \geq \frac{c^n}{\delta},
$$
if and only if 
$$
\frac{r^{C}_{inf} - c^n}{1-\delta} \geq \frac{c^n}{\delta}.
$$
Thus, \eqref{opt_cooperation} is satisfied due to our choice of $c^n$ and hence, agent $n$ cooperates in $C$. As $n$ was arbitrary, this proves that $C$ is an equilibrium region for this value of the transmission cost.

Now, for the second part of the theorem, note that while $-n$ cooperates, the following is true
\begin{align}
\label{b_u}
    &{\pi^{n}}^{'}(x')=f(\pi^n,y^n,y^{-n},a^{-n}=1)(x')\nonumber\\
    &=\frac{\sum_{x}\mathbb{P}(x'|x)\mathbb{P}(y^n|x)\mathbb{P}(y^{-n}|x)\pi^n(x)}
{\sum_{x}\mathbb{P}(y^n|x)\mathbb{P}(y^{-n}|x)\pi^n(x)},
\end{align}
for any $x'\in\mathcal{X}$. Then, it is easy to verify that the following is true for every $x'$.
\begin{align}
    \lambda_{min}(x') \leq \pi^{n'}(x') \leq \lambda_{max}(x'),
\end{align}
since for any $\pi^n, y^n, y^{-n}$, 
\begin{align}
& \frac{\sum_{x}\mathbb{P}(x'|x)\mathbb{P}(y^n|x)\mathbb{P}(y^{-n}|x)\pi^n(x)}
{\sum_{x}\mathbb{P}(y^n|x)\mathbb{P}(y^{-n}|x)\pi^n(x)}
\geq\nonumber \\
& \frac{\sum_{x} \lambda_{min}(x') \mathbb{P}(y^n|x)\mathbb{P}(y^{-n}|x)\pi^n(x)}
{\sum_{x}\mathbb{P}(y^n|x)\mathbb{P}(y^{-n}|x)\pi^n(x)}= 
\lambda_{min}(x'),
\end{align} 
and similarly for $\lambda_{max}$.

The above relation implies $C= \Lambda \cap \Delta(\mathcal{X})$ is absorbing.  

Moreover, if $C$ is positive absorbing, then by choosing $c^n,$ as in part $1$ of the Theorem ensures that $\Delta(\mathcal{X})$ is an equilibrium region (and hence $\Pi^* = \Delta(\mathcal{X})$). This is true due to the following. Given that agent $-n$ cooperates in $\Delta(\mathcal{X})$, then agent $n$ cooperates in $\Delta(\mathcal{X})$ if and only if $\forall \pi\in\Delta(\mathcal{X})$,
\begin{align}
&Q^{n,\Delta(\mathcal{X})}(s=1, \pi, a^n = 1)\geq Q^{n,\Delta(\mathcal{X})}(s=1, \pi, a^n = 0)\Leftrightarrow\nonumber\\
&E\{V^{n,*,\Delta(\mathcal{X})}(s'=1,\pi')\vert \pi^n, s=1,a^n=1\}\geq \frac{c^n}{\delta}.
\end{align}
Note that for any $\pi_0 \in \Delta(\mathcal{X})$, if $s_t=1$, then $\pi_t$ (given by \eqref{b_u}) for all $t>0$ lies in $C$. Hence, since $C$ is reached with probability $1$ in one step (i.e. $\pi'\in C$) and $C$ is absorbing,
\begin{align}
    &E\{V^{n,*,\Delta(\mathcal{X})}(s'=1,\pi')\vert \pi, s=1,a^n=1\} =\nonumber\\
    &E\{V^{n,*,C}(s'=1,\pi')\vert \pi, s=1,a^n=1\}.
\end{align}
Because \eqref{inequalitys} holds, the same reasoning as above yields that agent $n$ cooperates in $ \Delta(\mathcal{X})$.


%

%


\ifCLASSOPTIONcaptionsoff
  \newpage
\fi



%


\vspace{-1.25cm}
\begin{IEEEbiographynophoto}{Konstantinos Ntemos}
received the BS degree in Computer Science in 2010, the MS degree in Administration and Economics of Telecommunications Networks in 2013 and the PhD degree in 2019 from the Dept. of Informatics and Telecommunications, National and Kapodistrian University of Athens, Greece. 
His research interests include Stochastic Control, Game Theory and Machine Learning for multi-agent networks.
\end{IEEEbiographynophoto}
\vspace{-1.25cm}
\begin{IEEEbiographynophoto}{George Pikramenos}
received his MSci in Mathematics (First Class Honours) from Imperial College London, UK in 2016. He is currently a Phd student in the Dept. of Informatics and Telecommunications, National and Kapodistrian University of Athens, Greece. His research interests include Machine Learning, Private and Secure Computations and Stochastic Control.
\end{IEEEbiographynophoto}
\vspace{-1.25cm}
\begin{IEEEbiographynophoto}{Nicholas Kalouptsidis}
received the BS degree in mathematics 
from the University of Athens, in 1973 and the M.S and PhD degrees in systems science and mathematics from Washington University at St. Louis, MO, in 1975 and 1976, respectively. From 1989 until today he is professor of Communications and Signal Processing in the Department of Computer Science and communications, at the National and Kapodistrian University of Athens. He was a visiting scholar at Harvard University in 2008 and a visiting professor at Stanford University in 2015. He has more than 200 publications and 3 books.
\end{IEEEbiographynophoto}




\end{document}